\patchcmd{\thebibliography}{\chapter*}{\section*}{}{}
	\crefname{theorem}{Theorem}{Theorems}
	\crefname{section}{Section}{Sections}
	\crefname{algorithm}{Algorithm}{Algorithms}	
	\crefname{assumption}{Assumption}{Assumptions}
	\crefname{construction}{Construction}{Constructions}
	\crefname{corollary}{Corollary}{Corollaries}
	\crefname{conjecture}{Conjecture}{Conjectures}
	\crefname{definition}{Definition}{Definitions}
	\crefname{example}{Example}{Examples}
	\crefname{experiment}{Experiment}{Experiments}
	\crefname{counterexample}{Counterexample}{Counterexamples}
	\crefname{lemma}{Lemma}{Lemmata}
	\crefname{observation}{Observation}{Observations}
	\crefname{proposition}{Proposition}{Propositions}
	\crefname{remark}{Remark}{Remarks}
	\crefname{claim}{Claim}{Claims}
	\crefname{fact}{Fact}{Facts}
	\crefname{note}{Note}{Notes}
	\crefname{table}{Table}{Tables}
	\crefname{figure}{Figure}{Figures}
         \crefname{appendix}{Appendix}{Appendices}
	\crefname{equation}{Equation}{Equations}
 \crefname{appendix}{App.}{Appendices}
 \crefname{section}{Sec.}{Secs.}
 \crefname{figure}{Fig.}{Figs.}
 \crefname{table}{Tab.}{Tabs.}
 \crefname{remark}{Rem.}{Rems.}
 \crefname{theorem}{Thm.}{Thms.}
 \crefname{definition}{Def.}{Defs.}
 \crefname{equation}{Eq.}{Eqs.}
 \crefname{lemma}{Lem.}{Lems.}
\DeclareMathAlphabet{\mathpzc}{OT1}{pzc}{m}{it}
\newcommand{\bigset}[1]{\big\{ #1 \big\}}
\newcommand{\tuple}[1]{( #1 )}
\newcommand{\normtwo}[1]{\lVert #1 \rVert_{2}}
\newcommand{\avgcollentropy}{\tilde{\mathbf{H}}_2}
\newcommand{\COL}{\mathtt{COL}} 
\newcommand{\abs}[1]{\left| #1 \right|} 
\newcommand{\SD}{\mathbf{SD}}  
    \newcommand{\GG}{\mathbb{G}}
\newcommand{\NN}{\mathbb{N}}    \newcommand{\ZZ}{\mathbb{Z}}
\newcommand{\RR}{\mathbb{R}}
    \newcommand{\mB}{\mathbf{B}}
\newcommand{\mI}{\mathbf{I}}
    \newcommand{\vb}{\mathbf{b}}
\newcommand{\vc}{\mathbf{c}}    
\newcommand{\ve}{\mathbf{e}}
\newcommand{\vx}{\mathbf{x}}    \newcommand{\vy}{\mathbf{y}}
\newcommand{\vz}{\mathbf{z}}    \newcommand{\vw}{\mathbf{w}}
\newcommand{\vzero}{\mathbf{0}}
\newcommand{\A}{\mathcal{A}}    \newcommand{\B}{\mathcal{B}}
 \newcommand{\GGG}{\mathcal{G}}
\newcommand{\samp}{\leftarrow}
\newcommand{\lruns}{\leftarrow}
\newcommand{\bin}{ \{ 0,1 \} }
\newcommand{\hash}{\mathsf{H}}
\newcommand{\HashFamily}{\mathcal{H}}
\newcommand{\secpar}{\kappa}
\newcommand{\msg}{\mathsf{M}}
\newcommand{\game}{\mathsf{Game}}
\newcommand{\Event}{\mathsf{E}}
\newcommand{\negl}{\mathsf{negl}}
\newcommand{\ggen}{\mathsf{GGen}}
\newcommand{\Sig}{\mathsf{S}.}
\newcommand{\SigSetup}{\Sig\mathsf{Setup}}
\newcommand{\Sigpp}{\mathsf{pp}}
\newcommand{\Sigsk}{\mathsf{sk}}
\newcommand{\SigMesSpace}{\mathcal{M}}
\newcommand{\eucma}{\mathtt{EU\text{-}CMA}}
\newcommand{\SQset}{\mathcal{Q}}
\newcommand{\FuzKeySet}{\mathcal{F}}
\newcommand{\FuzDataSpace}{X}
\newcommand{\FuzDataDist}{\mathcal{X}}
\newcommand{\ErrDist}{\mathit{\Phi}}
\newcommand{\ErrFNMR}{\epsilon}
\newcommand{\FNMR}{\ensuremath{\mathtt{FNMR}}\xspace}
\newcommand{\tFNMR}{\ensuremath{\widetilde{\mathtt{FNMR}}}\xspace}
\newcommand{\sFMR}{\ensuremath{\mathtt{sFMR}}\xspace}
\newcommand{\tsFMR}{\ensuremath{\widetilde{\mathtt{sFMR}}}\xspace}
\newcommand{\FMR}{\ensuremath{\mathtt{FMR}}\xspace}
\newcommand{\tFMR}{\ensuremath{\widetilde{\mathtt{FMR}}}\xspace}
\newcommand{\CFMR}{\ensuremath{\mathtt{ConFMR}}\xspace}
\newcommand{\sCFMR}{\ensuremath{\mathtt{sConFMR}}\xspace}
\newcommand{\tCFMR}{\ensuremath{\widetilde{\mathtt{ConFMR}}}\xspace}
\newcommand{\AR}{\ensuremath{\mathsf{AR}}\xspace}
\newcommand{\sAR}{\ensuremath{\mathsf{sAR}}\xspace}
\newcommand{\bS}{\bar{S}}
\newcommand{\dist}[1]{\mathsf{dist}( #1 )}
\newcommand{\FKSparam}{(\FuzDataSpace, \FuzDataDist, \AR, \ErrDist, \ErrFNMR)}
\newcommand{\Lattice}{\mathcal{L}}
\newcommand{\CV}{\mathsf{CV}}
\newcommand{\VR}{\mathsf{VR}}
\newcommand{\triLattice}{\Lattice_{\mathtt{tri}}}
\newcommand{\triB}{\mB_{\mathtt{tri}}}
\newcommand{\ProtFSig}{\Pi_{\mathsf{FS}}}
\newcommand{\FSig}{\mathsf{FS}.}
\newcommand{\FS}{\mathsf{FS}}
\newcommand{\FSigSetup}{\FSig\mathsf{Setup}}
\newcommand{\FSigKeyGen}{\FSig\mathsf{KeyGen}}
\newcommand{\FSigSign}{\FSig\mathsf{Sign}}
\newcommand{\FSigVrfy}{\FSig\mathsf{Vrfy}}
\newcommand{\FSigpp}{\mathsf{pp}_{\FS}}
\newcommand{\FSigvk}{\mathsf{vk}_{\FS}}
\newcommand{\FSigsig}{\sigma_{\FS}}
\newcommand{\ta}{{\tilde{a}}} \newcommand{\tildeh}{{\tilde{h}}}
\newcommand{\shifta}{\Delta a}
\newcommand{\ProtLinS}{\Pi_{\mathsf{LinS}}}
\newcommand{\LinS}{\mathsf{LinS.}}
\newcommand{\LinSSetup}{\LinS\mathsf{Setup}}
\newcommand{\LinSpp}{\mathsf{pp}_{\sf LS}}
\newcommand{\Sketch}{\mathsf{Sketch}}
\newcommand{\DiffRec}{\mathsf{DiffRec}}
\newcommand{\KeyDesc}{\Lambda}
\newcommand{\KeySpace}{\mathcal{K}}
\newcommand{\sketch}{c}
\newcommand{\Simsketch}{\mathsf{M_{\sketch}}}
\newcommand{\tsketch}{\widetilde{\hspace{1pt}\sketch\hspace{1pt}}}
\newcommand{\linHashFamily}{\mathcal{UH}}   
\newcommand{\linhash}{\mathsf{UH}}
\newcommand{\DL}{\ensuremath{\mathsf{DL}}\xspace}
\newcommand{\DLsketch}{\ensuremath{\mathsf{DL}^{\sf sketch}}\xspace}
\newtheorem{remark}{Remark}
\begin{document}
\fancyhead{}

\title{Revisiting Fuzzy Signatures: Towards a More Risk-Free Cryptographic Authentication System based on Biometrics} 

\author{Shuichi Katsumata}
\affiliation{%
  \institution{AIST}
  \city{Tokyo}  
  \country{Japan}
}
\email{shuichi.katsumata@aist.go.jp}

\author{Takahiro Matsuda}
\affiliation{%
  \institution{AIST}
  \city{Tokyo}  
  \country{Japan}
}
\email{t-matsuda@aist.go.jp}

\author{Wataru Nakamura}
\affiliation{%
  \institution{Hitachi, Ltd}
  \city{Tokyo}    
  \country{Japan}  
}
\email{wataru.nakamura.va@hitachi.com}

\author{Kazuma Ohara}
\affiliation{%
  \institution{AIST}
  \city{Tokyo}  
  \country{Japan}
}
\email{ohara.kazuma@aist.go.jp}

\author{Kenta Takahashi}
\affiliation{%
  \institution{Hitachi, Ltd}
  \city{Tokyo}    
  \country{Japan}  
}
\email{kenta.takahashi.bw@hitachi.com}

\begin{abstract}
Biometric authentication is one of the promising alternatives to standard password-based authentication offering better usability and security. 
In this work, we revisit the biometric authentication based on \emph{fuzzy signatures} introduced by Takahashi et al. (ACNS'15, IJIS'19). These are special types of digital signatures where the secret signing key can be a ``fuzzy'' data such as user's biometrics. 
Compared to other cryptographically secure biometric authentications as those relying on fuzzy extractors, the fuzzy signature-based scheme provides a more attractive security guarantee. 
However, despite their potential values, fuzzy signatures have not attracted much attention owing to their theory-oriented presentations in all prior works. 
For instance, the discussion on the practical feasibility of the assumptions (such as the entropy of user biometrics), which the security of fuzzy signatures hinges on, is completely missing.

In this work, we revisit fuzzy signatures and show that we can indeed efficiently and securely implement them in practice.
At a high level, our contribution is threefold:
(i) we provide a much simpler, more efficient, and direct construction of fuzzy signature compared to prior works;
(ii) we establish novel statistical techniques to experimentally evaluate the conditions on biometrics that are required to securely instantiate fuzzy signatures; 
and (iii) we provide experimental results using a real-world finger-vein dataset to show that finger-veins from a single hand are sufficient to construct efficient and secure fuzzy signatures.
Our performance analysis shows that in a practical scenario with 112-bits of security, the size of the signature is 1256 bytes, and the running time for signing/verification is only a few milliseconds.
\end{abstract}

\begin{CCSXML}
<ccs2012>
   <concept>
       <concept_id>10002978.10002979.10002981.10011602</concept_id>
       <concept_desc>Security and privacy~Digital signatures</concept_desc>
       <concept_significance>500</concept_significance>
       </concept>
   <concept>
       <concept_id>10002978.10002991.10002992.10003479</concept_id>
       <concept_desc>Security and privacy~Biometrics</concept_desc>
       <concept_significance>500</concept_significance>
       </concept>
 </ccs2012>
\end{CCSXML}

\ccsdesc[500]{Security and privacy~Digital signatures}
\ccsdesc[500]{Security and privacy~Biometrics}

\keywords{cryptographically secure biometric authentication; fuzzy signature; biometric entropy} 

\maketitle


\section{Introduction}
\label{sec:introduction}

\noindent{\underline{\textit{Background.}}
A user authentication system is a central infrastructure in a digital society. One of the most widely used methods for authentication is those using passwords. 
However, today it is becoming increasingly more difficult to protect passwords and to securely manage password-based authentication from the emerging advanced forms of cyberattacks.
For example, the ENISA Threat Landscape 2020 report~\cite{ENISA20} states that 64\% of the publicly exposed personal data due to security breaches in 2019 contained passwords.

One of the most promising alternatives to password-based authentication that has been gradually gaining traction is \emph{biometric authentication}~\cite{Visa19}, where a user's identity is verified through its biometrics such as face, iris, fingerprint, and finger-vein.
A familiar example is those widely implemented on personal smartphones such as the Touch~ID on iPhone. 
These types of authentication relies on the users holding a device embedding some information on their biometrics. 
In contrast, recently, biometric authentication without relying on these personal devices --- \emph{(personal) device-free} biometric authentication --- is beginning to be deployed in commercial and governmental services. 
Here, anybody can authenticate using the \emph{same} publicly available device.
This includes for instance the facial recognition payment service \emph{Alipay} managed by Alibaba in China~\cite{NA19}, and the world's largest biometric ID system \emph{Aadhaar} used in India~\cite{Aadhaar}. 
Due to their convenience and digital inclusiveness~\cite{UN20b}, the demand for such device-freeness is expected to grow further in the future in other applications ranging from payment and ATM transactions to medical systems, immigration control, and for building a national digital identity infrastructure. The focus of this article is on such device-free biometric authentication.

In biometric authentication, the most salient problem is how to securely protect biometric information.
As Visa~\cite{Visa19} stated, the \lq\lq Top concerns of using biometric authentication for payments'' is \lq\lq The risk of a security leak of sensitive information, e.g., you can't change your fingerprint if it is compromised.''
To realize a device-free biometric authentication, biometric information is typically stored and maintained on a central server. 
However, this opens up the risk of exposing user biometric information due to a security breach on the server. 
Although standard practices such as encrypting the database and placing appropriate access control on the users can mitigate the risk, as history shows, these common procedures are not easy to enforce or to execute in real-life due to human errors or lack of a security background. 
For instance, a vulnerability in the Aadhaar system was recently exploited and anybody had unrestricted access to the biometric information of more than 1 billion Indian citizens~\cite{WP18}. 
Since leaking biometric information is has an irreversible damage compared to leaking passwords, minimizing the risk on the server is one of the central problems for biometric authentication.

Biometric template protection (BTP) is designed to protect such biometric information stored on a server and has been standardized in recent years (ISO/IEC 24745~\cite{ISO:11}, 30136~\cite{ISO:18}).
\emph{Fuzzy extractor} (FE) \cite{EC:DodReySmi04} \cite{ISO:11} is one of the most promising tools for constructing a biometric authentication system with BTP whose (cryptographic) security can be formally analyzed. 
Informally, an FE enables to extract a \emph{fixed} secret key from a \emph{fuzzy} biometric. Here, biometrics are inherently fuzzy objects since they can slightly change over time, and measuring them perfectly is impossible due to measurement errors. 
The extracted fixed secret key is then used as a secret key of an ordinary signature scheme to achieve a \lq\lq biometric-based'' signature scheme, which can, in turn, be used for a biometric authentication system with BTP. 
More accurately, a user also needs a user-specific \emph{helper data}%
\footnote{
This is also called associated data or helper string in the literature. 
} to reconstruct the fixed secret key from its biometric. 
Intuitively, a helper data encodes some information on user biometrics to help reconstruct the same fixed secret key from the fuzzy biometric.
Since the helper data does not directly reveal the secret key nor the user biometric, it is considered more secure to store the helper data rather than the user biometric on the server.

A typical flow of an FE-based device-free biometric authentication system is given in \cref{fig:system} (left). 
Each service provider (e.g., bank, supermarket, hospital) has a client device. A user can use any of these devices to authenticate itself to the server by scanning its biometrics. 
A user first accesses the client device and makes an ID claim to the server. The client device downloads the corresponding helper data from the server, and the user then uses its biometric to reconstruct the fixed secret key (denoted as $\mathsf{KeyExtract}$ in \cref{fig:system}) used by the underlying ordinary signature scheme. 
Since the server only needs to store the user's helper data, the FE-based system provides BTP and successfully decreases the level of confidential information stored on the server. 
However, due to the added interaction between the client device and server, this opens up another type of risk. 
Notice that once an attacker obtains a client device, it can freely make ID claims to the server to collect the helper data of any user. 
Therefore, considering the attacker only needs to steal/compromise one of the many client devices, the possibility of a database exposure is much higher compared to the \emph{naive} system without BTP; the system where the biometrics are all stored on the server and the only way to retrieve them is through breaching the server. 
Of course, the concrete amount of biometric information leaked from the helper data in an FE depends on the specific construction and the security parameter used therein. 
However, in any case, we cannot take the risk zero since the adversary can collect many helper data easily and target to break any one of them; this is similar to the issue raised by reverse brute-force attacks. 
Thus, although the FE-based system lowers the level of confidential information stored on the server, it does so by increasing the possibility of such confidential information being exposed. 
Since the security \underline{r}isk of a system ($R$) is given by the product of the \underline{p}ossibility of the data on the server being exposed ($P$) and the \underline{i}mpact of such data being exposed ($I$), this brings us to our central question: 

\emph{Can we lower the level of confidential information stored on the server (as in the FE-based system) while simultaneously lowering the possibility of such information being leaked (as in the naive system)?}

\begin{figure*}[htbp]
 \begin{minipage}{0.45\hsize}
 \begin{center}
	\includegraphics[ width=0.8\textwidth ]{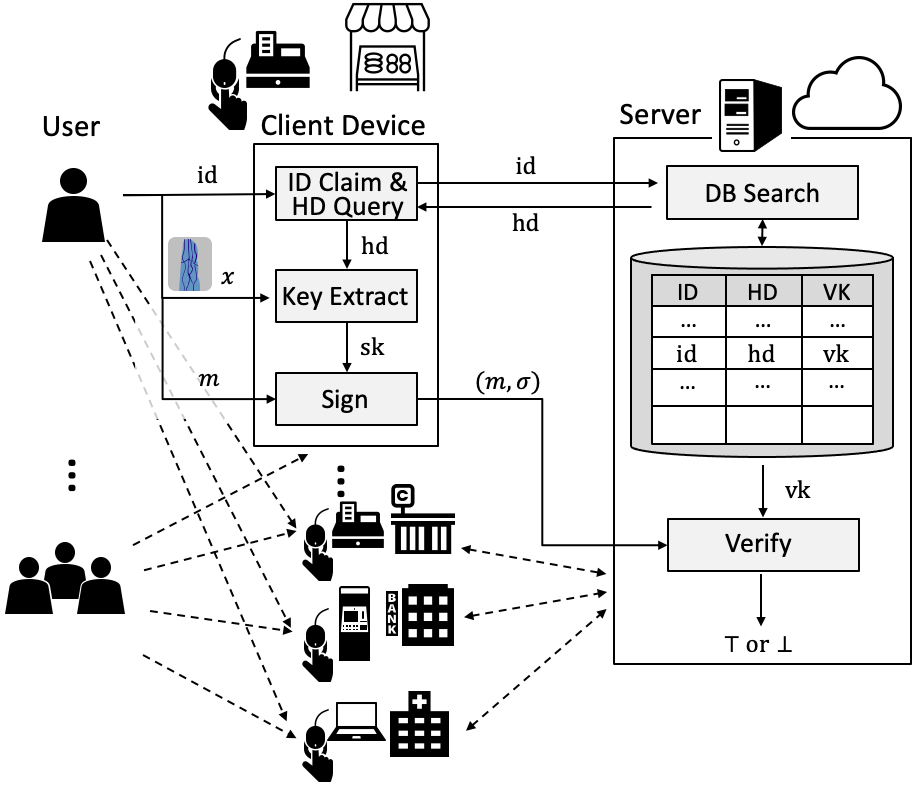}
 \end{center}
 \end{minipage}
 \begin{minipage}{0.45\hsize}
 \begin{center}
	\includegraphics[ width=0.80\textwidth ]{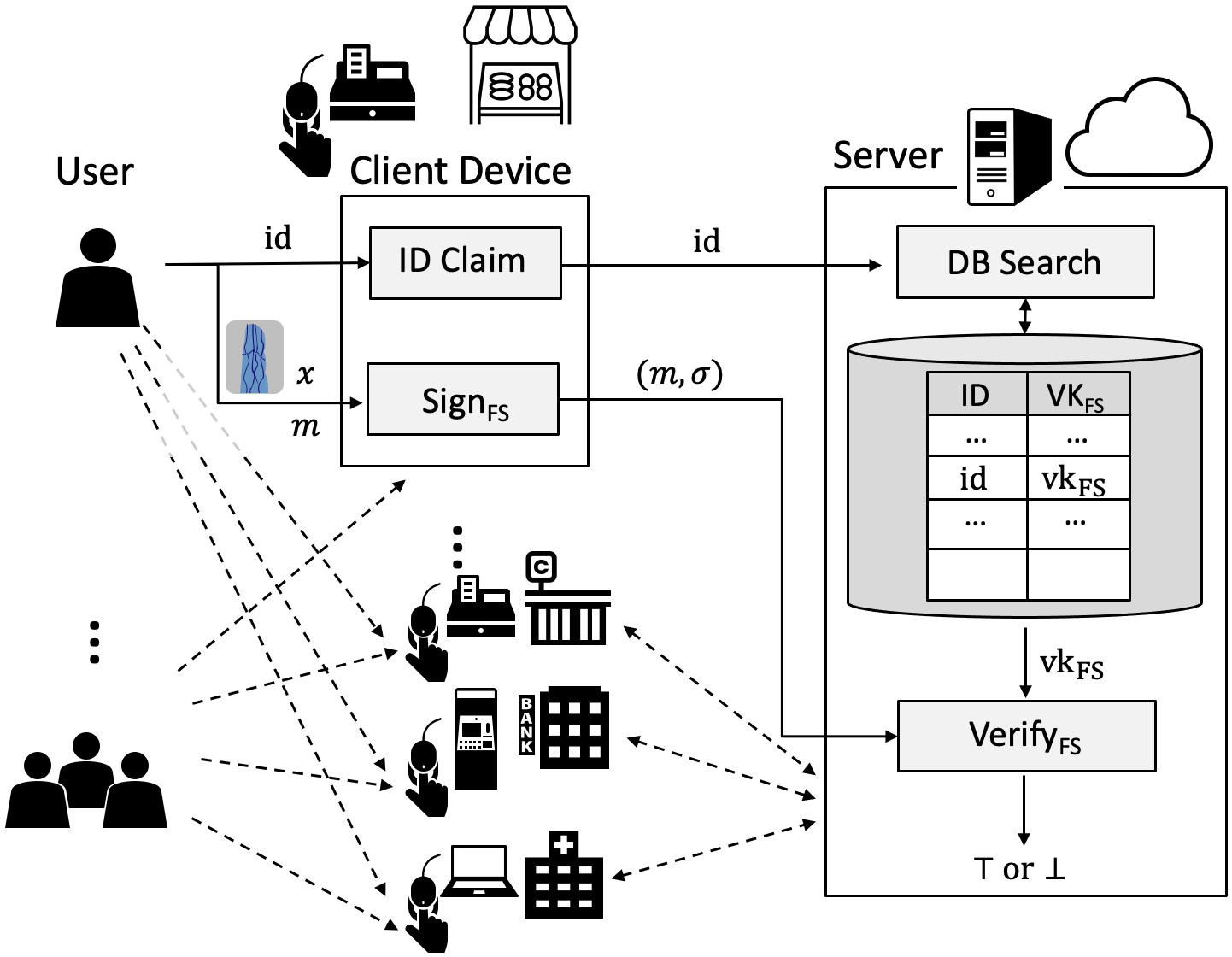}
 \end{center}
 \end{minipage}
 \caption{\small Authentication systems based on fuzzy extractor (left) and fuzzy signature (right). The user authenticates/signs by using its fuzzy biometric $x$ (depicted as a finger-vein). Each service provider (depicted as a supermarket, bank, etc...) has a single or several client devices (depicted as a finger-vein scanner) and a user can use any of them to authenticate itself to the server. $\mathsf{ID}$ denotes the user identity, $\mathsf{HD}$ denotes the helper data, and   $\mathsf{VK}$ and $\mathsf{VK_{FS}}$ denote the verification key of a standard signature scheme and a fuzzy signature scheme, respectively. }
 \label{fig:system}
 \end{figure*}

\smallskip
\noindent{\underline{\textit{Fuzzy signature.}}
The main primitive we focus on in this paper --- \emph{fuzzy signatures} --- can potentially be used to solve this question.
Fuzzy signatures, originally introduced in \cite{ACNS:TMMHN15}, are a special type of signature schemes that allow users to directly use their fuzzy biometrics as the signing key \emph{without} requiring any additional information. 
The description of a fuzzy signature is provided in \cref{fig:architecture}. 
Note that a verification key $\mathsf{vk_{FS}}$ of a fuzzy signature is implicitly associated to the user biometric and informally holds a similar purpose as a helper data for FE. 
\begin{figure} [htbp]
 \centering
	\includegraphics[ width=0.3\textwidth ]{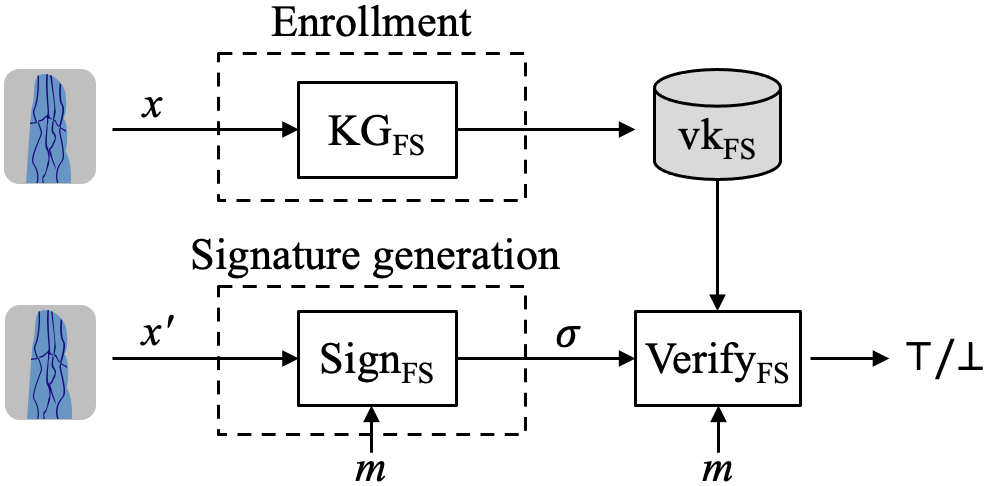}
 \caption{\small Description of fuzzy signature. A user with biometrics $x$ enrolls by generating a verification key $\mathsf{vk}$. The same user with a slightly different biometrics $x'$ can sign \emph{only} using $x'$ and creates a signature $\sigma$ that verifies with respect to $\mathsf{vk}$ generated via $x$. }
 \label{fig:architecture}
 \end{figure}

Using a fuzzy signature, we can construct a device-free biometric authentication system as in \cref{fig:architecture} (right).
From a user experience perspective, it is identical to the naive and FE-based systems: a user can show up empty-handed and authenticate itself by scanning its biometrics. 
In contrast, from a security point of view, the fuzzy signature-based system takes the best of the two systems: it provides BTP since the server no longer needs to store the user biometrics, and an adversary cannot collect user-specific information (e.g., helper data) since the client device and the server communicate non-interactively. 
\cref{tab:risk_of_exposure} gives a qualitative comparison of the risk $(R) = (P) \times (I)$ of the three systems: the naive, FE-based, and fuzzy signature-based systems. 
It can be checked that the fuzzy signature-based biometric authentication system achieves the lowest security risk among the three systems. 

\begin{table}[htb]
\centering
\caption{Comparison of the security risk of three biometric authentication systems.}
\label{tab:risk_of_exposure}
{\small
\begin{threeparttable}
\begin{tabular}{l|l|l|c}
System & \begin{tabular}{l} Possibility of \\ data exposure ($P$)\end{tabular} & \begin{tabular}{l} Impact of \\ data exposure ($I$) \end{tabular} & \begin{tabular}{c} Risk \\ $(R)$ \end{tabular} \\ \hline
Naive & Small \begin{tabular}{l} (from server)\end{tabular} & Very Large ~~(biometrics) & Large\\
FE-based & Large \begin{tabular}{l} (from server + \\ client devices)\end{tabular} & Small\tnote{\dag} \begin{tabular}{l}($\mathsf{hd}$)\end{tabular} & Middle\\
FS-based & Small \begin{tabular}{l}(from server)\end{tabular} & Small\tnote{\dag} \begin{tabular}{l}($\mathsf{vk_{FS}})$\end{tabular} & Small \\
\end{tabular}
 \begin{tablenotes}
  \item[\dag] \footnotesize{Note FE and FS are designed so that recovering biometric information from $\mathsf{hd}$ (helper data) and $\mathsf{vk_{FS}}$ (verification key) are hard, respectively.}
 \end{tablenotes}
\end{threeparttable}
}
\end{table}

So far, fuzzy signatures sound all good and well. However, despite its potential values, subsequent researches on the original paper~\cite{ACNS:TMMHN15} are quite limited and only produced by a small group~\cite{ACNS:MTMH16,ISPEC:KFNTMO17,IJIS:TMMHN19}%
\footnote{
\cite{IJIS:TMMHN19} is the full-version of \cite{ACNS:TMMHN15} and \cite{ACNS:MTMH16} with additional sections. Below, we mainly cite the full-version \cite{IJIS:TMMHN19}.
}.
The main reason for the unfortunate disparity between its potential value and amount of related work seems to stem from the fact that fuzzy signatures are mainly presented in a very cryptographically heavy and theory-oriented manner. 
Indeed, the state-of-the-artwork \cite{IJIS:TMMHN19} provides a generic construction of fuzzy signatures but the underlying building blocks are themselves novel to their work, and it is difficult to extract the practical relevance of such construction. 
An equally large (or perhaps larger) issue is that a critical discussion on whether real-life biometrics can be used to securely instantiate fuzzy signatures is completely missing. 
\cite{IJIS:TMMHN19} builds on the assumption that biometrics can provide large min-entropy.
However, it is not clear whether this is a feasible assumption to make for real-world biometrics, and besides, it is not even clear how to validate the feasibility of such an assumption. 
Therefore, although \cite{IJIS:TMMHN19} provides a potentially elegant solution to a more ideal biometric authentication system, the feasibility of the solution is completely left open to questions. 
We finally note that concrete discussions on biometric entropy is a reoccurring issue for FE as well and this is usually one of the main sources impeding a theoretically sound deployment of biometric authentication in practice.

\subsection{Our Contribution}
 In this work, we show that fuzzy signatures can indeed be efficiently and securely implemented in practice, and advocate the benefit of further practice-oriented research on fuzzy signatures. 
Our contribution is threefold:
(i) we provide a much simpler, more efficient, and direct construction of fuzzy signature compared to \cite{IJIS:TMMHN19}. Very roughly, depending on the amount of min-entropy we can extract from the fuzzy biometric, our construction can be proven secure based on the standard discrete logarithm (\DL) assumption or proven unconditionally secure in the generic group model~\cite{EC:Shoup97};
(ii) we establish novel statistical techniques to experimentally evaluate the conditions on biometrics that are required to securely instantiate fuzzy signatures; and (iii) we provide experimental results using real-world finger-vein dataset to show that finger-veins from a single hand can be used to construct efficient and secure fuzzy signatures. 
The statistical method provided in this work is quite general so we believe this to be an independent interest for other works such as evaluating the biometric entropy to securely instantiate fuzzy extractors. 
Below, we expand on each of our contributions.

\noindent{\underline{\textit{(i) Simple and efficient construction of fuzzy signature.}}
We provide a simple and efficient fuzzy signature scheme by tweaking the classical Schnorr signature scheme~\cite{C:Schnorr89}. 
Similarly to prior works, we rely on a tool called \emph{linear sketch} to bridge fuzzy biometrics and cryptographic primitives (e.g., signing keys). 
In our work, we simplify the definition of linear sketch and provide a conceptually cleaner construction of linear sketch based on a fundamental mathematical object called \emph{lattices}. 
At a high level, the specific type of lattice being used dictates how unwastefully we use the entropy provided by the fuzzy biometrics and how well we approximate the distance metric of the fuzzy biometrics by the distance metric induced by the underlying cryptographic primitive. 
With this abstraction, we show that a so-called \emph{triangular} lattice allows to best approximate the Euclidean distance and observe that previous works \cite{IJIS:TMMHN19} implicitly used a suboptimal lattice. 
The security assumption that underlies the security of our fuzzy signature is a simple-to-state variant of the standard $\DL$ assumption
considered jointly with the security of a linear sketch scheme, which we coin as the $\DL$ \emph{with sketch} ($\DLsketch$) assumption.
We provide discussion on the hardness of $\DLsketch$, and give some collateral evidence that if the quantity that we call \emph{conditional false matching rate} $\CFMR$ of the distribution of fuzzy biometrics is sufficiently small, then the $\DLsketch$ assumption is implied by the standard $\DL$ assumption. 
Moreover, even if $\CFMR$ is relatively large (which may be the case in practice), we show that the $\DLsketch$ assumption holds uncontionally in the generic group model~\cite{EC:Shoup97} . 

\noindent{\underline{\textit{(ii) Statistical method for evaluating fuzzy biometrics.}}}
There are two conditions that fuzzy biometrics must satisfy for fuzzy signatures. 
As mentioned above, one is that $\CFMR$ must be small. 
The other is that another quantity called the \emph{false non-matching rate} $\FNMR$ must be small. 
Roughly, $\FNMR$ and $\CFMR$ dictate the correctness and security of fuzzy signatures, respectively, where concretely we require $\FNMR \lessapprox 5\% ( \approx 2^{-4.32})$ and $\CFMR \lessapprox 2^{-112}$. 
Prior works \cite{ISPEC:KFNTMO17,IJIS:TMMHN19} failed to provide any formal evidence as to whether natural real-world biometrics can provide such amount of $\FNMR$ and $\CFMR$. 
This is a major setback for fuzzy signatures (and possibly one of the reasons why it has not attracted much serious attention) since a user may end up requiring multiple biometrics, say its iris \emph{and} fingerprints of \emph{both} hands, to authenticate itself. 
Such a procedure would severely deteriorate user experience and would defeat the purpose of using fuzzy signatures. 
While $\FNMR$ is a standard metric in the area of biometrics authentication and we know how to empirically estimate them using real-world biometric datasets, no such method is known for $\CFMR$ since it is a metric intertwined with a linear sketch. 
To make matters worse, since $\CFMR$ is a much smaller value (i.e., $2^{-112}$) compared to $\FNMR$ (i.e., $5\%$), we cannot use prior methods to provide any meaningful estimations.

Thus, our second contribution is to establish a systematic procedure to evaluate the values of $\CFMR$ of any fuzzy biometrics. 
We divide the problem of estimating $\CFMR$ into two subproblems and provide details on how to solve them individually. 
The first subproblem is formulated in a way to detach the notion of linear sketch from $\CFMR$ and allows us to view the problem entirely as a biometric problem, while the second subproblem deals with converting the solution of the biometric problem to the initial problem. 
At a high level, our approach to the two subproblems is the following: 
To solve the first subproblem, we borrow techniques from \emph{extreme value analysis} (EVA), a statistical method for evaluating very rare events by using only an ``extreme'' subset of a given dataset \cite{BOOK:CBTD01,NIST:Michael12}. 
This allows us to estimate $\CFMR \lessapprox 2^{-112}$ with high confidence. 
For the second subproblem, we use statistical $t$-tests to (informally) establish that certain statistics of biometrics are uncorrelated with the sketch.

\noindent{\underline{\textit{(iii) Efficiency analysis.}}} 
Finally, we use real-world finger-vein biometrics to conclude that fuzzy signatures can be constructed efficiently and securely using only 4 finger-veins from a single hand. 
That is, a user only needs to put one of their hands on the device to authenticate itself and nothing more. 
We first estimate the concrete values of $\FNMR$ and $\CFMR$ using the method stated above and experimentally show that the conditions $\FNMR \lessapprox 5\%$ and $\CFMR \lessapprox 2^{-112}$ hold.
We then combine everything and provide a concrete set of parameters for our fuzzy signature scheme. 
For instance, to achieve $112$-bits of security, the signature size can be as small as $1256$ bytes, and the running time for both signing and verification is only a few milliseconds.

\bigskip
\noindent\textbf{Organization.}  
In \cref{sec:preliminaries}, we define fuzzy signatures and prepare the notion of \emph{fuzzy key setting} that allows us to handle biometrics in a cryptographically sound manner.
In \cref{sec:linear_sketch_definition}, we define linear sketch: a tool allowing to bridge biometric data and cryptographic keys. 
In \cref{sec:fs_from_dl}, we provide a simple construction of fuzzy signature based on a slight variant of the $\DL$ problem assuming that the biometrics satisfies some conditions.
In \cref{sec:linear_sketch}, we equip the fuzzy key setting with a tool called lattice, and propose a concrete instantiation of a linear sketch scheme. 
In \cref{sec:concrete_fks}, we provide statistical techniques to estimate whether a specific type of biometrics satisfies the above mentioned conditions. 
Finally, in \cref{sec:efficiency_analysis}, we combine all the discussions together and provide a concrete instantiation of fuzzy signature using real-world finger-vein biometrics.


\section{Fuzzy Data and Fuzzy Signatures}
\label{sec:preliminaries}
To formally define fuzzy signatures, we must first formalize how we treat fuzzy data (i.e., biometrics); how are fuzzy data represented, what is the metric to argue closeness of fuzzy data, what kind of error distribution we consider to model ``fuzziness'' of data, and so on. To this end we first define the notion of \emph{fuzzy key setting} below. 

\subsection{Fuzzy Key Setting} \label{sec:fuzzy_key_setting}
A fuzzy key setting $\FuzKeySet$ consists of the following 5-tuple $\FKSparam$ and defines all the necessary information to formally treat fuzzy data in a cryptographic scheme.
\begin{itemize}
	\item[Fuzzy Data Space $\FuzDataSpace:$] This is the space to which a possible fuzzy data $x$ belongs.
We assume that $\FuzDataSpace$ forms an abelian group.
	\item[Distribution $\FuzDataDist:$] The distribution of fuzzy data over $\FuzDataSpace$. I.e., $\FuzDataDist: \FuzDataSpace \rightarrow \RR$.
        \item[Acceptance Region Function $\AR: \FuzDataSpace \to 2^{\FuzDataSpace}:$]
This function maps a fuzzy data $x \in \FuzDataSpace$ to a subspace $\AR(x) \subset \FuzDataSpace$ of the fuzzy data space $\FuzDataSpace$.
(If $x' \in \AR(x)$, then $x'$ is considered \lq\lq close'' to $x$.)
We require $x \in \AR(x)$ for all $x \in \FuzDataSpace$.
Based on $\AR$, the \emph{false matching rate} ($\FMR$) and the \emph{false non-matching rate} ($\FNMR$) are determined.
We define $\FMR$ by
$
\FMR := \Pr[x, x' \samp \FuzDataDist : x' \in \AR(x)].
$ \FNMR is defined below. 
	\item[Error Distribution $\ErrDist:$] This models the measurement error of fuzzy data. We assume the ``universal error model'' where the measurement error is independent of the users. 
	\item[Error Parameter $\ErrFNMR$:] The error parameter $\ErrFNMR \in [0, 1]$ defines $\FNMR$, where $\FNMR := \Pr[x \gets \FuzDataSpace; e \samp \ErrDist: x + e \notin \AR(x)] \le \ErrFNMR$.
\end{itemize}

\subsection{Fuzzy Signatures}
Using the fuzzy key setting, we can formally define fuzzy signatures. Note that in a fuzzy signature scheme, a signing key $\Sigsk$ will not be explicitly defined since the fuzzy data $x$ will play the role of the signing key.

\begin{definition} [Fuzzy Signature] A fuzzy signature scheme $\ProtFSig$ for a fuzzy key setting $\FuzKeySet = \FKSparam$ with message space $\SigMesSpace$ is defined by the following algorithms:
\begin{itemize}
	\item[ $\FSigSetup(1^\secpar, \FuzKeySet) \rightarrow \FSigpp:$ ]The setup algorithm takes as inputs the security parameter $1^\secpar$ and the fuzzy key setting $\FuzKeySet$ as input and outputs a public parameter $\FSigpp$. 
	\item[ $\FSigKeyGen(\FSigpp, x) \rightarrow \FSigvk:$] The key generation algorithm takes as inputs the public parameter $\FSigpp$ and a fuzzy data $x \in \FuzDataSpace$, and outputs a verification key~$\FSigvk$.
	\item[ $\FSigSign(\FSigpp, x', \msg) \rightarrow \FSigsig:$] The signing algorithm takes as inputs the public parameter $\FSigpp$, a fuzzy data $x' \in \FuzDataSpace$ and a message $\msg \in \SigMesSpace$, and outputs a signature $\FSigsig$.
\end{itemize}
\end{definition}

We define correctness and $\eucma$ security for fuzzy signatures. 
Roughly, correctness stipulates that a signature generated using fuzzy data $x$ verifies with respect to a verification key generated by a fuzzy data $x' \in \AR(x)$. 
$\eucma$ security is similar to those of standard signatures except that the challenger uses $x' \in \AR(x)$ to respond to signing queries rather than the original $x$ used to generate the verification key. 

Formally, we define $\delta$-correctness and $\eucma$ security of a fuzzy signature. 
\noindent
\textbf{$\delta$-Correctness.}
We say a fuzzy signature scheme $\ProtFSig$ for a fuzzy key setting $\FuzKeySet$ is \emph{$\delta$-correct} if the following holds for all $\msg \in \SigMesSpace$:
{\small
\begin{align*}
	\Pr[\FSigpp& \lruns \FSigSetup(1^\secpar, \FuzKeySet);~ x \samp \FuzDataSpace; \FSigvk \lruns \FSigKeyGen(\FSigpp, x);~ \\
	&\hspace{1cm} e \samp \ErrDist; \FSigsig \lruns \FSigSign(\FSigpp, x + e, \msg) : \\
	&\hspace{1cm}\FSigVrfy(\FSigpp, \FSigvk, \msg, \FSigsig) = \top] \ge 1 - \delta. 
\end{align*}
}
\noindent \textbf{$\eucma$ Security.}
The security of a fuzzy signature scheme $\Pi_{\mathsf FS}$ for a fuzzy key setting $\FuzKeySet$ is defined by the following game. The model captures the scenario where the signatures are generated by a slightly different fuzzy data each time. 

\begin{itemize}
	\item	[{Setup:}] The challenger runs $\FSigpp \lruns \FSigSetup(1^\secpar, \FuzKeySet)$, $x \samp \FuzDataSpace$, $\FSigvk \lruns \FSigKeyGen(\FSigpp, x)$, and provides the adversary $\mathcal A$ with the public parameter~$\FSigpp$ and the verification key~$\FSigvk$. Finally, it prepares an empty set $\SQset = \emptyset$. 
	
	\item [{Signing Queries:}] The adversary $\A$ may adaptively submit messages. When $\A$ submits a message $\msg \in \SigMesSpace$ to the challenger, the challenger  samples $e \samp \ErrDist$ and runs $\FSigsig \lruns \FSigSign(\FSigpp, x + e, \msg)$. It then provides $\FSigpp$ to $\A$ and updates the set as $\SQset \leftarrow \SQset \cup \set{\msg}$. 
	\item [{Output:}] Finally, $\mathcal A$ outputs a pair $(\msg^*, \FSigsig^*)$. The adversary $\mathcal A$ wins if	$\msg^* \notin \SQset ~\land~  \FSigVrfy(\FSigpp, \FSigvk, \msg^*, \FSigsig^*) = \top$.
\end{itemize}

\noindent
The advantage of $\A$ is defined as its probability of winning the above game. 
A fuzzy signature scheme $\ProtFSig$ is called $\eucma$ secure if the advantage is negligible for all PPT adversaries.


\section{Linear Sketch} \label{sec:linear_sketch_definition}

In this section, we define a \emph{linear sketch}, which has served as the main building block in previous generic constructions of fuzzy signature \cite{IJIS:TMMHN19}.%
\footnote{ 
We slightly deviate from prior definitions: we adopt a \lq\lq key encapsulation''-like syntax while \cite{IJIS:TMMHN19} adopts an \lq\lq encryption''-like syntax. 
Our syntax allows for a more simple, direct, and efficient construction. 
}
Recall the main purpose of this was to \lq\lq bridge'' fuzzy data and standard cryptographic operations.
It is associated with a fuzzy key setting and consists of two main algorithms $\Sketch$ and $\DiffRec$ (see \cref{fig:linear_sketch}). 
The formal definition is provided in Def.~\ref{def:linear_sketch} and a high-level description of the linear sketch follows. 

\begin{figure} [ht]
  \centering
	\includegraphics[ width=0.3\textwidth ]{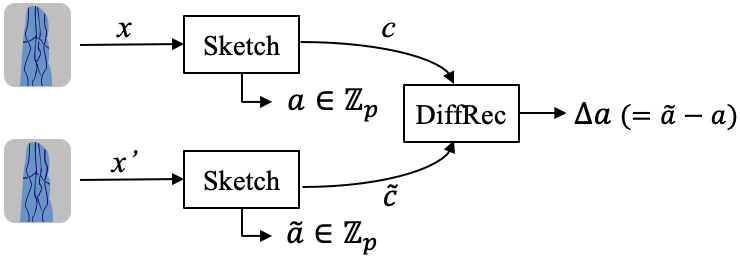}
  \caption{Illustration of the linear sketch when $\Lambda = \ZZ_p$. } 
   \label{fig:linear_sketch}
 \end{figure}

\noindent \textit{Overview.} $\Sketch$ is used to \lq\lq process'' a fuzzy data $x$ to extract a cryptographic secret (that we call a proxy key) $a$ that is an element of some abelian group and with which actual cryptographic operations (such as the signing operation of the Schnorr signature scheme) are performed.
$\Sketch$ also generates a corresponding \lq\lq sketch'' $\sketch$ of $x$ and $a$.
The sketch $\sketch$ is used to \lq\lq absorb'' the fluctuation occurred in measuring fuzzy data.
For example, suppose a fuzzy data is measured twice (e.g. once for key generation and once for signing) and sketch-proxy key pairs $(\sketch, a)$ and $(\tsketch, \ta)$ are generated (see  \cref{fig:linear_sketch}). 
Then by using the \emph{difference reconstruction} algorithm $\DiffRec$ with the two sketches $\sketch$ and $\tsketch$, we are able to compute the difference $\Delta a = \ta - a$.
This difference $\Delta a$ is then used in the verification algorithm of our fuzzy signature scheme to \lq\lq adjust'' the difference in the proxy keys $a$ and $\ta$.
Since the proxy key $a$ and the original fuzzy data $x$ are used as secret information, it is naturally required that the sketch $c$ does not reveal too much of $a$ and $x$. 

\noindent \textit{Definition.} Formally, a linear sketch associated with a fuzzy key setting $\FuzKeySet$ and an abelian group $\KeyDesc$ is defined as follows.
\begin{definition} [Linear Sketch]  \label{def:linear_sketch}
Let $\FuzKeySet = \FKSparam$ be a fuzzy key setting and $\KeyDesc = (\KeySpace, +)$ be a description of a (finite) abelian group.
A linear sketch scheme $\ProtLinS$ for $\FuzKeySet$ and $\KeyDesc$ is defined by the following P PT algorithms:

\begin{itemize}
	\item[$\LinSSetup(\FuzKeySet, \KeyDesc) \rightarrow \LinSpp:$] The setup algorithm takes as input the fuzzy key setting $\FuzKeySet$ and the description $\KeyDesc$, and outputs a public parameter $\LinSpp$.
Here, we assume $\LinSpp$ includes the information of $\KeyDesc = (\KeySpace, +)$.
	\item[$\Sketch(\LinSpp, x) \rightarrow (\sketch, a):$] The deterministic sketch algorithm takes as inputs the public parameter $\LinSpp$ and a fuzzy data $x \in \FuzDataSpace$, and outputs a \emph{sketch}~$\sketch$ and a \emph{proxy key} $a \in \KeySpace$. 
	\item[$\DiffRec(\LinSpp, \sketch, \tsketch) \rightarrow  \Delta a:$] The deterministic difference reconstruction algorithm takes as inputs the public parameter $\LinSpp$ and two sketches $(\sketch, \tsketch)$ (supposedly output by $\Sketch$), and outputs the \emph{difference} $\Delta a \in \KeySpace$.
\end{itemize}
\end{definition}

\noindent
\textbf{Correctness.}
We say a linear sketch scheme $\ProtLinS$ for a fuzzy key setting $\FuzKeySet$ and $\KeyDesc$ is \emph{correct} if for all $x, x' \in \FuzDataSpace$ such that $x' \in \AR(x)$ and all $\LinSpp \in \LinSSetup(\FuzKeySet, \KeyDesc)$, if $(\sketch, a) \lruns \Sketch(\LinSpp, x)$ and  $(\tsketch, \ta) \lruns \Sketch(\LinSpp, x')$, then we have $\ta - a = \DiffRec(\LinSpp, \sketch, \tsketch)$.

\smallskip
\noindent
\textbf{Linearity.}
We say a linear sketch scheme $\ProtLinS$ satisfies \emph{linearity} if there exists a deterministic PT algorithm $\Simsketch$ satisfying the following: For all $\LinSpp \in \LinSSetup(\FuzKeySet, \KeyDesc)$ and all $x, e \in \FuzDataSpace$, if $(\sketch, a) \lruns \Sketch(\LinSpp, x)$  and $(\tsketch, \Delta a) \lruns \Simsketch(\LinSpp, \sketch, e)$, then we have $\Sketch(\LinSpp, x + e) = (\tsketch, a + \Delta a)$. 

In above, we have not formally defined  the intuition that a sketch $c$ does not leak the information of the fuzzy data $x$ and proxy key $a$. 
This is implicitly handled by the hardness assumption underlying the security of the fuzzy signature, and we discuss it in the next section (see Def.~\ref{def:dl} for an overview).


\section{Fuzzy Signature from Discrete Log}
\label{sec:fs_from_dl}
In this section, we provide a simple and efficient construction of fuzzy signature based on a variant of the \DL problem.

\subsection{Construction}
 An overview of the construction of our fuzzy signature scheme $\ProtFSig^\DL$ is depicted in \cref{fig:dl_fuzzy}. 
At a high level, our construction can be seen as providing a wrapper around the classical Schnorr signature  \cite{C:Schnorr89} to additionally handle fuzzy biometrics via the linear sketch. 
During key generation (\underline{KeyGen} in \cref{fig:dl_fuzzy}), a user with biometrics $x$ generates a sketch and a proxy key $(\sketch, a)$ from $x$ using $\Sketch$, and
sets $\FSigvk$ as the sketch $\sketch$ and a verification key $h = g^a$ of the Schnorr signature.
To sign (\underline{Sign} in \cref{fig:dl_fuzzy}), the user with biometrics $x'$ (slightly different from $x$) generates $(\tsketch, \ta)$ from $x'$ using $\Sketch$ and uses $\ta \in \ZZ_p$ as an ``ephemeral" signing key for the Schnorr signature and constructs a Schnorr signature $\tilde{\sigma}$. 
Here, note that the Schnorr verification key of this signature is implicitly set as $\tildeh = g^{\ta}$. The fuzzy signature $\FSigsig$ consists of $\tilde{\sigma}$ and the sketch $\tsketch$.
Finally, to verify (\underline{Verify} in \cref{fig:dl_fuzzy}) a fuzzy signature $\FSigsig$, we first use the algorithm $\DiffRec$ of the linear sketch to recover $\Delta a = \ta - a$. 
Then, we use $\Delta a$ to recover the implicit Schnorr verification key $\tildeh$ from $h$, and use it to verify $\tilde{\sigma}$. 
 
\begin{figure} [ht]
  \centering
	\includegraphics[ width=0.4\textwidth ]{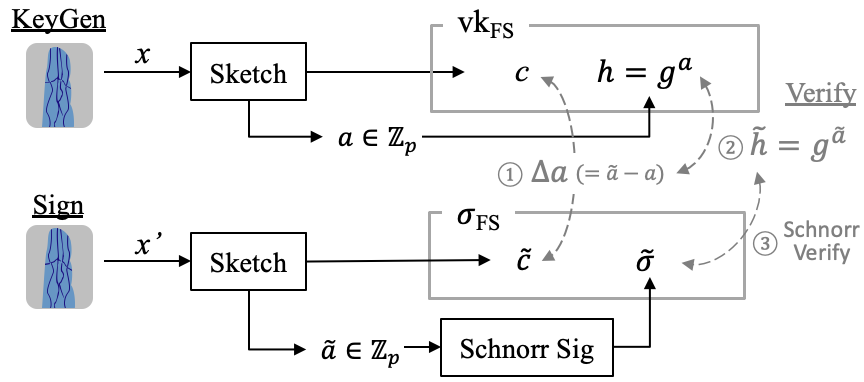}
  \caption{Our fuzzy signature scheme $\ProtFSig^\DL$. The gray items indicate the procedure of the verification algorithm. } 
   \label{fig:dl_fuzzy}
 \end{figure}
The formal description of our fuzzy signature scheme is provided in \cref{fig:fs_from_dl}. 
The image of the hash function $\hash$ is $\ZZ_p$ and is modeled as a random oracle in the security proof. 
\begin{figure}[htb!]
\begin{minipage}[t]{0.24\textwidth}
\underline{ $\FSigSetup(1^\secpar, \FuzKeySet)$: }
    \begin{algorithmic}[1]
	\State \small{$\GGG  \gets \ggen(1^\secpar)$}
	\State $\KeyDesc = (\ZZ_p, +)$
	\State $\LinSpp \gets \LinSSetup(\FuzKeySet, \KeyDesc)$
	\State \Return $\FSigpp = ( \GGG, \LinSpp )$
    \end{algorithmic}
    \vspace{0.15cm}
\end{minipage}
\hfill
%
\begin{minipage}[t]{0.22\textwidth}
\underline{ $\FSigKeyGen(\FSigpp, x )$ }
    \begin{algorithmic}[1] 
    	\State $  ( \GGG, \LinSpp ) \gets \FSigpp$
	\State \small{$  (\sketch, a) \gets \Sketch( \LinSpp, x ) $}   
	\State \Return $ \FSigvk = ( g^a, \sketch )  $
    \end{algorithmic}
    \vspace{0.15cm}
\end{minipage}
%
\begin{minipage}[t]{0.23\textwidth}
\underline{ $\FSigSign( \FSigpp, x', \msg )$: }
    \begin{algorithmic}[1]
    	\State  $  ( \GGG, \LinSpp ) \gets \FSigpp$
	\State \small{$(\tsketch, \ta) \gets \Sketch( \LinSpp, x' )$}   
	\State $r \gets \ZZ_p$                                  
	\State $\beta \gets \hash( g^\ta, g^r, \msg )$
	\State $z \gets  \beta \cdot \ta + r$
	\State \Return $ \FSigsig = ( \beta, z, \tsketch ) $
    \end{algorithmic}
    \vspace{0.15cm}
\end{minipage}
\hfill
%
\begin{minipage}[t]{0.24\textwidth}
\underline{ $\FSigVrfy( \FSigpp, \FSigvk, \msg, \FSigsig)$ }
    \begin{algorithmic}[1] 
    	\State \small{$  ( \GGG, \LinSpp ) \gets \FSigpp$}
	\State $( h, \sketch ) \gets \FSigvk  $
	\State $( \beta,  z, \tsketch ) \gets \FSigsig$    
	\State $\Delta a \gets \DiffRec( \LinSpp, \sketch, \tsketch )$
	\State $\tildeh \gets h \cdot g^{\Delta a}$
	\State $R \gets g^z \cdot \tildeh^{ -\beta }$
	\State \Return $\top$ iff $ \beta = \hash( \tildeh, R, \msg ) $
    \end{algorithmic}
    \vspace{0.15cm}
\end{minipage}
\caption{ 
Construction of fuzzy signature $\ProtFSig^\DL$. 
}
\label{fig:fs_from_dl}
\end{figure}
\noindent{\textit{Efficiency.}}
The verification key consists of one group element in $\GG$ and a sketch. 
The signature consists of two elements in $\ZZ_p$ and a sketch. 
Notably, the only difference from the Schnorr signature is the linear sketch  component. 

\subsection{Correctness and Security Proof}\label{sec:security_dl}
\noindent{\textit{Correctness.}}
The correctness of $\ProtFSig^\DL$ is provided below. As correctness is evident from \cref{fig:dl_fuzzy}, we omit the proof to \cref{app_sec:fs_from_dl}. 

\begin{theorem} \label{thm:dl_correctness}
If the linear sketch $\ProtLinS$ is correct, then the fuzzy signature $\ProtFSig^\DL$ in \cref{fig:fs_from_dl} is $\ErrFNMR$-correct, where $\ErrFNMR$ is the error parameter of the fuzzy key setting~$\FuzKeySet$.
\end{theorem}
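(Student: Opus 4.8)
The plan is to trace the correctness experiment through each of the four algorithms and argue that verification succeeds with certainty whenever the signing fuzzy data $x+e$ lies in the acceptance region $\AR(x)$ of the enrollment data $x$; the overall failure probability is then bounded exactly by $\FNMR$, which by definition of the fuzzy key setting is at most $\ErrFNMR$.

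First I would fix an arbitrary message $\msg \in \SigMesSpace$ and unfold the experiment. Writing $(\sketch, a) \lruns \Sketch(\LinSpp, x)$, key generation yields $\FSigvk = (h, \sketch)$ with $h = g^a$. Inside $\FSigSign$ on input $x' = x + e$ we compute $(\tsketch, \ta) \lruns \Sketch(\LinSpp, x+e)$ and output $\FSigsig = (\beta, z, \tsketch)$, where $\beta = \hash(g^\ta, g^r, \msg)$ and $z = \beta \cdot \ta + r$. In verification we obtain $\Delta a \lruns \DiffRec(\LinSpp, \sketch, \tsketch)$, then $\tildeh = h \cdot g^{\Delta a}$ and $R = g^z \cdot \tildeh^{-\beta}$, and we accept iff $\beta = \hash(\tildeh, R, \msg)$.

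Next I would condition on the event $x + e \in \AR(x)$. Since $\Sketch$ and $\DiffRec$ are deterministic and $x+e \in \AR(x)$, the correctness of $\ProtLinS$ gives $\Delta a = \ta - a$. Substituting, $\tildeh = g^a \cdot g^{\ta - a} = g^\ta$, and therefore $R = g^{\beta \ta + r} \cdot g^{-\beta \ta} = g^r$. Consequently the recomputed hash satisfies $\hash(\tildeh, R, \msg) = \hash(g^\ta, g^r, \msg) = \beta$, so $\FSigVrfy$ outputs $\top$ with probability $1$ on this event. This step is essentially the standard Schnorr correctness argument, made to go through once the sketch has realigned the implicit verification key from $h = g^a$ to $\tildeh = g^\ta$.

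Finally, the only source of verification error is the complementary event $x + e \notin \AR(x)$. By the definition of the fuzzy key setting, $\Pr[x \samp \FuzDataSpace;~ e \samp \ErrDist : x + e \notin \AR(x)] = \FNMR \le \ErrFNMR$, so the failure probability of the experiment is at most $\ErrFNMR$, giving $\ErrFNMR$-correctness. There is no genuine obstacle here; the one point to handle with care is to match the conditioning event $x+e \in \AR(x)$ precisely to the definition of $\FNMR$, so that the linear sketch correctness hypothesis---stated only for pairs $x' \in \AR(x)$---is applicable and the resulting bound is exactly $\ErrFNMR$ rather than something looser.
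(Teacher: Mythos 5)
Your proposal is correct and follows essentially the same route as the paper's proof: reduce to the event $x+e \in \AR(x)$ (which fails with probability at most $\ErrFNMR$ by the definition of the fuzzy key setting), invoke correctness of the linear sketch to obtain $\Delta a = \ta - a$, and then run the standard Schnorr verification computation $\tildeh = g^{\ta}$, $R = g^r$ to conclude acceptance with certainty. No gaps.
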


\noindent{\textit{Security.}}
The security of our fuzzy signature $\ProtFSig^\DL$ is based on a variant of the $\DL$ problem where the secret exponent is a proxy key $a$ generated by the linear sketch scheme on input a random fuzzy data $x \gets \FuzDataDist$. 
The adversary is given the \DL instance $g^a$ along with the sketch $\sketch$. 
Formally, we define the $\DL$ \emph{with sketch} assumption in Def.~\ref{def:dl}. 
We provide detailed discussions in \cref{sec:analysis_dlsketch} to validate that the $\DL$ with sketch problem is as hard as the standard $\DL$ problem for our specific choice of biometrics and the linear sketch scheme.

\begin{definition}[ $\DL$ with sketch ] \label{def:dl}
Let $\ProtLinS$ be a linear sketch scheme for a fuzzy key setting $\FuzKeySet = \FKSparam$ with respect to a (finite) abelian group $\KeyDesc = (\ZZ_p, +)$. 
We say the discrete logarithm problem \emph{with sketch} ($\DLsketch$) assumption holds (relative to $\ggen$) if for all PPT adversaries $\A$, the following probability is upper bounded by $\negl(\secpar)$: 
{\small
\begin{align*}
\Pr\left[
\begin{array}{cl}
	\GGG = (\GG, p, g)  \gets \ggen(1^\secpar); \\
	 \LinSpp \gets \LinSSetup(\FuzKeySet, \KeyDesc);   \\
	  x \gets \FuzDataDist; 
	  (\sketch, a) \gets \Sketch( \LinSpp, x)
\end{array}
	 :
\begin{array}{rl}
	 \A(  \GGG, \LinSpp, g^a, \sketch) = a
\end{array}
\right].
\end{align*}
}
\end{definition}

The following theorem guarantees security of our fuzzy signature scheme~$\ProtFSig^\DL$  under the $\DLsketch$ assumption. 

\begin{theorem} \label{thm:dl_eucma}
If the $\DLsketch$ problem is hard and the linear sketch scheme $\ProtLinS$ satisfies linearity, then the fuzzy signature scheme $\ProtFSig^\DL$ in \cref{fig:fs_from_dl} is $\eucma$ secure. 
\end{theorem}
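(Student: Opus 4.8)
The plan is to reduce $\eucma$ security to the hardness of the $\DLsketch$ problem via a rewinding argument in the random oracle model, treating $\ProtFSig^\DL$ as the Schnorr signature wrapped by the linear sketch. Concretely, I would build a reduction $\B$ that, on input a $\DLsketch$ instance $(\GGG, \LinSpp, g^a, \sketch)$, runs the forger $\A$ with public parameter $\FSigpp = (\GGG, \LinSpp)$ and verification key $\FSigvk = (g^a, \sketch)$. These are distributed exactly as in the real game: in both the $\DLsketch$ experiment and the $\eucma$ Setup, $x$ is drawn from the biometric distribution $\FuzDataDist$ and $(\sketch, a) \gets \Sketch(\LinSpp, x)$, and $h = g^a$. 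Random oracle queries made directly by $\A$ are answered by lazy sampling. The goal of $\B$ is to output the proxy key $a$, i.e.\ the discrete logarithm of the embedded instance.

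The crux of the simulation is answering signing queries without knowing $a$ (equivalently, without knowing $x$), and this is exactly where linearity is used. When $\A$ queries a message $\msg$, $\B$ samples $e \samp \ErrDist$ just as the real challenger does, and runs $(\tsketch, \Delta a) \gets \Simsketch(\LinSpp, \sketch, e)$. By the linearity property, $\Sketch(\LinSpp, x + e) = (\tsketch, a + \Delta a)$, so the ephemeral proxy key the honest signer would use is $\ta = a + \Delta a$, and its implicit Schnorr verification key is $g^\ta = g^a \cdot g^{\Delta a}$, which $\B$ computes from $g^a$ and the known $\Delta a$ without learning $\ta$. It then produces a Schnorr signature under $g^\ta$ by the standard trick: sample $\beta, z \gets \ZZ_p$, set $R \gets g^z \cdot (g^\ta)^{-\beta}$, program $\hash(g^\ta, R, \msg) := \beta$, and return $(\beta, z, \tsketch)$. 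Because $\tsketch$ and $g^\ta$ match the honest ones and $(\beta, z)$ are distributed as in the real scheme, this simulation is perfect provided the programmed point was not already assigned; the probability of such a collision across all queries is at most roughly $Q_S(Q_H + Q_S)/p$, which is negligible.

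For extraction I would invoke the general forking lemma. When $\A$ halts with a valid forgery $(\msg^*, (\beta^*, z^*, \tsketch^*))$, verification computes $\Delta a^* = \DiffRec(\LinSpp, \sketch, \tsketch^*)$, $\tildeh^* = g^a \cdot g^{\Delta a^*}$, and $R^* = g^{z^*} (\tildeh^*)^{-\beta^*}$, with $\beta^* = \hash(\tildeh^*, R^*, \msg^*)$. Since $\msg^* \notin \SQset$, the point $(\tildeh^*, R^*, \msg^*)$ differs in its message coordinate from every point programmed during signing (each of which carried a queried message), so except with probability $1/p$ the value $\beta^*$ came from a genuine direct query of $\A$ and is a legitimate forking point. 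Rewinding $\A$ to just before that query and answering with a fresh $\hat\beta^* \ne \beta^*$ yields, by the forking lemma, a second valid forgery $(\hat\beta^*, \hat z^*, \tsketch^*)$ on the same commitment $R^*$ and the same $\tildeh^*$. Writing $r^* = \log_g R^*$ and $\ta^* = a + \Delta a^*$, the two transcripts give $z^* = \beta^*\ta^* + r^*$ and $\hat z^* = \hat\beta^*\ta^* + r^*$, so $\B$ recovers $\ta^* = (z^* - \hat z^*)/(\beta^* - \hat\beta^*)$ and outputs $a = \ta^* - \Delta a^*$, breaking $\DLsketch$.

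The main obstacle is the usual bookkeeping of forking-based proofs rather than any single hard idea. I would need to (i) confirm the signing simulation is perfect up to the negligible programming-collision event, which relies on reading off linearity precisely so the simulated $(\tsketch, g^\ta)$ coincide with the honest ones; (ii) justify the freshness/distinctness of the forgery's hash query via the message-coordinate argument, guaranteeing both that forking applies and that the extracted $\tildeh^*$ genuinely equals $g^{a+\Delta a^*}$; and (iii) apply the forking lemma, incurring the standard quadratic degradation, so that writing $\nu$ for $\A$'s $\eucma$ advantage and $Q_H$ for its number of hash queries, $\B$ succeeds with probability at least about $\nu^2/Q_H - \negl(\secpar)$. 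The only fuzzy-signature-specific part is the interplay between $\DiffRec$ applied to the forged sketch $\tsketch^*$ and the linearity-based sketches used in signing; correctness of $\DiffRec$ together with linearity is exactly what ensures the recovered difference $\Delta a^*$ peels off cleanly to yield $a$.
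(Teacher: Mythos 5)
Your proposal is correct and follows essentially the same route as the paper's proof: simulate the sketches in signing queries via $\Simsketch$ and linearity, simulate the Schnorr components by programming the random oracle (with the same $Q(Q+Q_\hash)/p$ collision bound), and extract $a = \ta^* - \Delta a^*$ via the forking lemma using $\DiffRec$ on the forged sketch. The only difference is presentational — the paper organizes the simulation as a sequence of game hops before invoking the forking lemma, whereas you fold it into a single direct reduction — and the resulting bound is the same quadratic-loss guarantee.
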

The proof is  similar to that of the Schnorr signature \cite{C:Schnorr89}, except that we additionally need to simulate the sketch $\sketch$ in the verification key and signatures without knowledge of the secret fuzzy data $x$. 
At a high level, the sketch in the verification key is handled by the $\DLsketch$ assumption and the sketches in the signatures are handled by the \emph{linearity} of the linear sketch (see Def.~\ref{def:linear_sketch}). 
We omit the full proof to \cref{app_sec:fs_from_dl}.


\section{Instantiating Linear Sketch over Lattices} \label{sec:linear_sketch}

In this section, we present our linear sketch scheme.
Our scheme is constructed over a fuzzy key setting with fuzzy data space $\FuzDataSpace = \RR^n$.%
\footnote{Throughout the rest of the paper we implicitly assume that real numbers are represented by some pre-determined number of significant digits in order to handle them on computers. }
Namely, we consider the natural setting where biometrics are represented by an $n$-dimensional vector in $\RR$. 
However, working directly with fuzzy data in $\RR^n$ is non-trivial since typical computations of cryptographic primitives (and in particular the Schnorr signature scheme) are performed over a discrete space such as $\ZZ_p$.
Moreover, recall that a linear sketch scheme needs to satisfy correctness and linearity, which roughly requires a \emph{linearity preserving} mapping of the fuzzy data space $\FuzDataSpace$ to the sketch and proxy key spaces. 
To deal with these issues, we associate the fuzzy data space $\FuzDataSpace$ with a mathematical object called \emph{lattice} known to have a discretized and linear nature. 
This connects fuzzy data and cryptographic primitives together, and allows to construct a linear sketch scheme. 

We also introduce a specific lattice called a \emph{triangular} lattice and show that it fits well with a fuzzy data space endowed by Euclidean metric. 
We finally discuss the hardness of the $\DLsketch$ assumption with respect to such concrete linear sketch scheme in \cref{sec:analysis_dlsketch}.

\subsection{Fuzzy Key Setting with a Lattice} \label{sec:specific_fuzzy_key_setting}
We first introduce the notion of lattices and then provide a concrete definition of a fuzzy key setting based on lattices. 

\noindent{\underline{\textit{Lattice background.}}}
Let $n \in \NN$ and $\mB \in \RR^{n \times n}$. 
\begin{itemize}
\item A \emph{lattice} spanned by $\mB$, denoted by $\Lattice(\mB)$, is defined by $\Lattice(\mB) := \{ \mB \vz | \vz \in \ZZ^n\}$.
$\mB$ is called the \emph{basis} of $\Lattice(\mB)$.

\item For a vector $\vx \in \RR^n$ and a lattice $\Lattice = \Lattice(\mB)$, the \emph{closest vector} (or \emph{lattice point}) of $\vx$ in $\Lattice$, denoted by $\CV_{\Lattice}(\vx)$, is a vector $\vy \in \Lattice$ satisfying $\| \vx -  \vy\|_2 \le \| \vx - \mB \vz \|_2$ for any $\vz \in \ZZ^n$.%
\footnote{If there are multiple vectors $\vy \in \Lattice$ satisfying this condition, then we consider some canonical ordering of the lattice points in $\Lattice$ and choose the first one according to the ordering to make it unique.}

\item For a lattice $\Lattice = \Lattice(\mB)$ and a vector $\vy \in \Lattice$, the \emph{Voronoi region} of $\vy$, denoted by $\VR_{\Lattice}(\vy)$, is defined by $\VR_{\Lattice}(\vy) := \{\vx | \vy = \CV_{\Lattice}(\vx) \}$.
Due to the translational symmetry of a lattice, we have $\VR_{\Lattice}(\vy) = \VR_{\Lattice}(\vzero) + \vy$.%
\footnote{For $X \subset \RR^n$ and $\vy$, we define $X + \vy =: \{\vx + \vy | \vx \in X\}$.}
See \cref{fig:AR_VR_triangular} for a pictorial example of $\VR_{\Lattice}(0)$.

\end{itemize}
\begin{figure} [ht]
  \centering
	\includegraphics[ width=0.45\textwidth ]{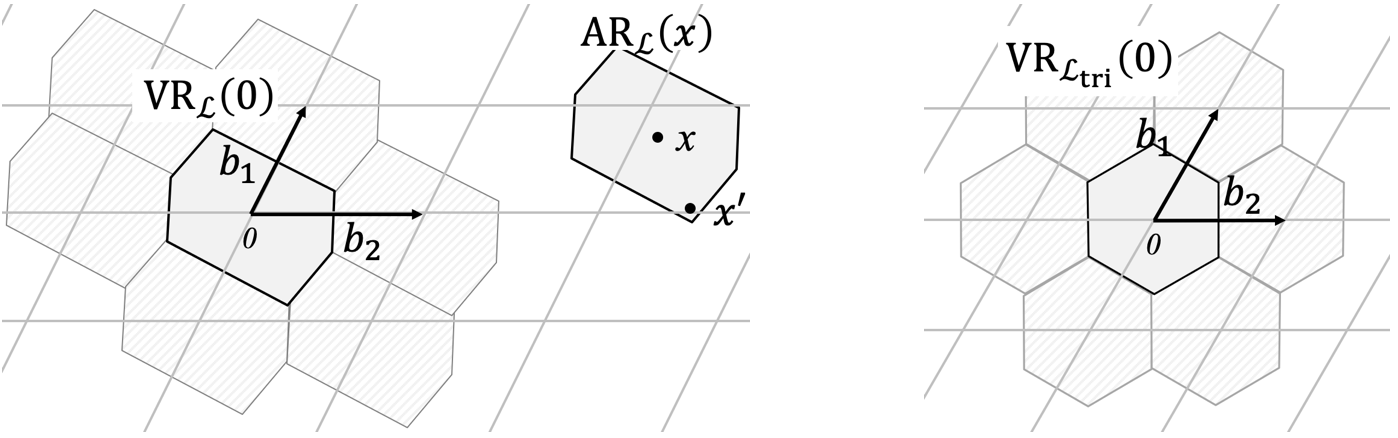}
  \caption{\small The left and right light gray grids depict two different lattices $\Lattice$ and $\Lattice_{\tt tri}$. The regions $\VR_{\Lattice}(0)$ is the set of points that has $0$ as the closest vector in the lattices and the region $\AR_{\Lattice}(\vx)$ denotes all the point that is considered to be ``close" to $\vx$. 
}  
   \label{fig:AR_VR_triangular}
 \end{figure}

\noindent{\underline{\textit{Fuzzy key setting based on a lattice.}}}
We define a fuzzy key setting $\FuzKeySet \allowbreak = \allowbreak (\FuzDataSpace, \allowbreak \FuzDataDist, \allowbreak \AR, \allowbreak \ErrDist, \allowbreak \ErrFNMR)$ with respect to a lattice as follows. 
\begin{itemize}
\item[Fuzzy data space $\FuzDataSpace$:] The fuzzy data space $\FuzDataSpace$ is $\RR^n$,
where $n \in \NN$ is specified by the context (e.g. the device which we use to measure fuzzy data).
We associate $\FuzDataSpace$ with a lattice $\Lattice = \Lattice(\mB)$ spanned by some basis $\mB \in \RR^{n \times n}$ such that the closest vectors in $\Lattice$ can be \emph{efficiently} computed.
We also associate $\FuzDataSpace$ with a natural number $p \in \NN$ that determines the support of $\FuzDataDist$ (see below). 

\item[Distribution $\FuzDataDist$:]
An efficiently sampleable distribution such that
the support of $\FuzDataDist$ satisfies the property that if $\vx \samp \FuzDataDist$, then $\mB^{-1} \vx \in [0,p)^n$.

\item[Acceptance region function $\AR$:]
We define the acceptance region function $\AR$ by $\AR(\vx) = \AR_{\Lattice}(\vx) := \{\vx' | \CV_{\Lattice}(\vx - \vx') = \vzero\}$.
Note that we have $\AR_{\Lattice}(\vx) = \VR_{\Lattice}(\vzero) + \vx$.
See \cref{fig:AR_VR_triangular} for a pictorial example of $\AR_{\Lattice} (\vx)$.

\item[Error distribution $\ErrDist$ and Error parameter $\ErrFNMR$:]
$\ErrDist$ is any efficiently samplable distribution over $\FuzDataSpace$ such that $\FNMR \le \ErrFNMR$.
\end{itemize}

\subsection{Construction of Linear Sketch} \label{sec:construction_of_linear_sketch}

Let $\FuzKeySet \allowbreak = \allowbreak (\FuzDataSpace = \RR^n, \allowbreak \FuzDataDist, \allowbreak \AR, \allowbreak \ErrDist, \allowbreak \ErrFNMR)$ be the fuzzy key setting as defined above.
Let $g_{\Lattice} : \FuzDataSpace \to \Lattice$ be the function%
\footnote{As far as correctness and linearity are concerned, the function $g_{\Lattice}$ can be any efficiently computable deterministic function satisfying (1) $g_{\Lattice}(\vx + \vy) = g_{\Lattice}(\vx) + \vy$ for all $\vx \in X$ and $\vy \in \Lattice$, and (2) $\|\mB^{-1} \vx\|_{\infty} \approx \|\mB^{-1} g_{\Lattice}(\vx) \|_{\infty}$.
We choose this particular function for its simplicity and efficiency.}
$
g_{\Lattice}(\vx) := \mB \bigl \lfloor \mB^{-1} \vx \bigr\rfloor.
$
Let $\linHashFamily = \{ \linhash : \ZZ_p^n \to \ZZ_p \}$ be a family of universal hash functions%
\footnote{Recall that $\linHashFamily = \{\linhash : D \to R\}$ is called universal if for all distinct elements $x,x' \in D$, we have $\Pr_{\linhash \samp \linHashFamily}[\linhash(x) = \linhash(x')] \le |R|^{-1}$.}
that satisfies linearity, namely, for all $\linhash \in \HashFamily$ and $\vx, \vy \in \ZZ_p^n$, we have $\linhash(\vx + \vy) = \linhash(\vx) + \linhash(\vy)$.
Using these ingredients, the description of our linear sketch scheme $\ProtLinS \allowbreak = \allowbreak (\LinSSetup, \allowbreak \Sketch, \allowbreak \DiffRec)$ for $\FuzKeySet$ and the additive group $(\ZZ_p, +)$ ($=: \KeyDesc$) is provided in \cref{fig:linear_sketch_construction}.
The auxiliary algorithm $\Simsketch$ used to show the linearity property is also included. 
A pictorial example (\cref{fig:concrete_linear_sketch}) and an intuitive explanation of $\Sketch$ and $\DiffRec$ follow.

\begin{figure}[htb!]
\begin{minipage}[t]{0.24\textwidth}
\underline{ $\LinSSetup(\FuzKeySet, \KeyDesc = (\ZZ_p, +))$: }
    \begin{algorithmic}[1]
        \State $\linhash \samp \linHashFamily$
        \State \Return $\LinSpp = (\KeyDesc, \linhash)$
    \end{algorithmic}
    \vspace{0.15cm}
\end{minipage}
\hfill
%
\begin{minipage}[t]{0.22\textwidth}
\underline{ $\Sketch(\LinSpp, \vx)$ }
    \begin{algorithmic}[1]
    	\State $\vy \samp g_{\Lattice}(\vx)$
	\State $\vc \samp \vx - \vy$
	\State $a \lruns  \linhash(\mB^{-1} \vy)$
	\State \Return $(\vc, a)$
    \end{algorithmic}
    \vspace{0.15cm}
\end{minipage}
%
\begin{minipage}[t]{0.23\textwidth}
\underline{ $\DiffRec(\LinSpp, \vc, \vc')$: }
    \begin{algorithmic}[1]
    	\State $\Delta \vy \samp \CV_{\Lattice}(\vc - \vc')$
	\State $\Delta a \samp \linhash(\mB^{-1} (\Delta \vy))$
	\State \Return $\Delta a$
    \end{algorithmic}
    \vspace{0.15cm}
\end{minipage}
\hfill
%
\begin{minipage}[t]{0.24\textwidth}
\underline{ $\Simsketch(\LinSpp, \vc, \ve)$ }
    \begin{algorithmic}[1]
    	\State $\vc' \samp \vc + \ve - g_{\Lattice}(\vc + \ve)$
	\State $\vy' \samp g_{\Lattice}(\vc + \ve)$
	\State $\Delta a \samp \linhash(\mB^{-1} \vy')$
	\State \Return $(\vc', \Delta a)$
    \end{algorithmic}
    \vspace{0.15cm}
\end{minipage}
\caption{ \small
Construction of linear sketch $\ProtLinS$ and the auxiliary algorithm $\Simsketch$ for the linearity property.
}\label{fig:linear_sketch_construction}
\end{figure}

\begin{figure} [ht]
  \centering
	\includegraphics[ width=0.35\textwidth ]{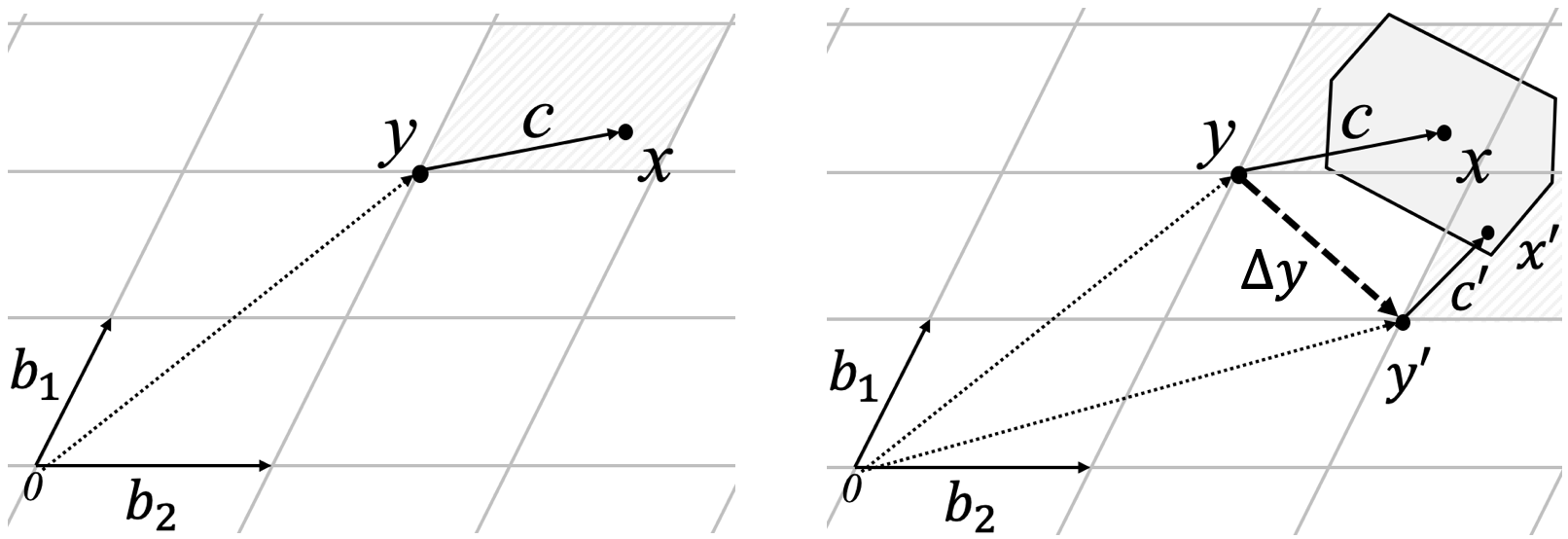}
  \caption{\small The left (resp. right) figure depicts algorithm $\Sketch$ (resp. $\DiffRec$). The shaded gray parallelogram denotes the fundamental parallelepiped spanned by the basis $\mB = [\vb_1, \vb_2]$. The gray hexagon denotes the acceptance region $\AR(\vx)$ of $\vx$. }  
   \label{fig:concrete_linear_sketch}
 \end{figure}

Algorithm $\Sketch(\LinSpp, \vx)$ first deterministically computes a lattice point $\vy = g_{\Lattice}(\vx) \in \Lattice$ with respect to the basis~$\mB$. 
As depicted in \cref{fig:concrete_linear_sketch}, the fundamental parallelepiped spanned by~$\mB$%
\footnote{
The fundamental parallelepiped spanned by $\mB$ is defined as the set $\set{ \mB \vw \mid \vw \in [0, 1)^n }$. 
}
 originated at $\vy$ always contains $\vx$. 
Then, the sketch $\vc$ is simply the shift $\vx - \vy$ and the proxy key $a$ is a hash of some \lq\lq canonical representation'' of the lattice point $\vy$. 
Now, it is clear from \cref{fig:concrete_linear_sketch} that if $\vx' \in \AR(\vx)$  is contained in the \emph{same} fundamental parallelepiped originated at $\vy$, then it produces the same proxy key $a$ since $\vy = g_{\Lattice}(\vx')$. 
However, as in the right figure in \cref{fig:concrete_linear_sketch}, this is not always the case. 
Therefore, we require a mechanism to relate the proxy key $a'$ (or equivalently $\vy'$) generated by $\vx'$ and those by $\vx$ only given their sketches $\vc'$ and~$\vc$. 
Recall that this was the core property of linear sketch that allowed us to meaningfully relate the secret keys generated from different $a$ and $a'$ for our fuzzy signature scheme (see \cref{fig:linear_sketch}).
Now, algorithm $\DiffRec(\LinSpp, \vc, \vc')$ exactly offers this mechanism. 
First, by definition $\vc - \vc' = (\vx - \vx') + (\vy' - \vy) $. 
Then since the vector $\vx - \vx' \in \VR(0)$ (see \cref{fig:concrete_linear_sketch}), $\CV_\Lattice(\vc - \vc')$ is the same as $\CV_\Lattice(\vy' - \vy) = \vy' - \vy$ since $\vy' - \vy$ are points contained in $\Lattice$. 
Hence, we can recover $\Delta \vy = \vy' - \vy$ (or equivalently $\Delta a = a' - a$) only from the sketches $\vc$ and $\vc'$. 

Formally, we have the following theorem. 
\begin{theorem} \label{thm:ls_thm}
The linear sketch scheme $\ProtLinS$ in \cref{fig:linear_sketch_construction} satisfies correctness and linearity (as per Def.~\ref{def:linear_sketch}).
\end{theorem}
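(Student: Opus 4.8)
The plan is to reduce both properties to three elementary facts about the ingredients of $\ProtLinS$ and then chase the definitions. The three facts are: (a) the rounding map $g_{\Lattice}$ is \emph{quasi-linear} on lattice translations, i.e.\ $g_{\Lattice}(\vx + \vw) = g_{\Lattice}(\vx) + \vw$ for every $\vx \in \FuzDataSpace$ and every $\vw \in \Lattice$; (b) the closest-vector map inherits the translational symmetry of the Voronoi regions already recorded in the text, namely $\CV_{\Lattice}(\vu + \vw) = \CV_{\Lattice}(\vu) + \vw$ for $\vw \in \Lattice$; and (c) $\linhash$ is linear over $\ZZ_p^n$. Fact (a) I would prove directly from the componentwise floor identity $\lfloor \vu + \vz \rfloor = \lfloor \vu \rfloor + \vz$ for integral $\vz$, applied to $\vu = \mB^{-1}\vx$ and $\vz = \mB^{-1}\vw \in \ZZ^n$; fact (b) is exactly the identity $\VR_{\Lattice}(\vy) = \VR_{\Lattice}(\vzero) + \vy$ invoked earlier. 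I would also observe at the outset that $\linhash$ is only ever evaluated on $\mB^{-1}$ of a \emph{lattice point}, which is an integer vector, so its reduction modulo $p$ is a well-defined element of $\ZZ_p^n$ and (c) applies verbatim.

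For correctness, fix $\vx, \vx'$ with $\vx' \in \AR(\vx)$ and write $(\vc, a) = \Sketch(\LinSpp, \vx)$ and $(\tsketch, \ta) = \Sketch(\LinSpp, \vx')$, so that $\vy := g_{\Lattice}(\vx)$, $\vy' := g_{\Lattice}(\vx')$, $\vc = \vx - \vy$, and $\tsketch = \vx' - \vy'$. The first step is the algebraic identity $\vc - \tsketch = (\vx - \vx') + (\vy' - \vy)$. Since $\vx' \in \AR(\vx)$ means by definition $\CV_{\Lattice}(\vx - \vx') = \vzero$ and since $\vy' - \vy \in \Lattice$, fact (b) gives $\CV_{\Lattice}(\vc - \tsketch) = \vzero + (\vy' - \vy) = \vy' - \vy$, which is precisely the $\Delta \vy$ computed inside $\DiffRec$. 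Applying fact (c), $\DiffRec(\LinSpp, \vc, \tsketch) = \linhash(\mB^{-1}\vy') - \linhash(\mB^{-1}\vy) = \ta - a$, as required.

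For linearity, I would take the candidate algorithm $\Simsketch$ exactly as in \cref{fig:linear_sketch_construction} and verify the defining equation $\Sketch(\LinSpp, \vx + \ve) = (\tsketch, a + \Delta a)$. Writing $\vy = g_{\Lattice}(\vx) \in \Lattice$ and $\vc = \vx - \vy$, we have $\vx + \ve = (\vc + \ve) + \vy$, so fact (a) yields $g_{\Lattice}(\vx + \ve) = g_{\Lattice}(\vc + \ve) + \vy$. Plugging this into the definition of $\Sketch$ on input $\vx + \ve$ collapses the lattice term: the sketch becomes $(\vx + \ve) - g_{\Lattice}(\vx + \ve) = (\vc + \ve) - g_{\Lattice}(\vc + \ve)$, which is exactly the $\vc'$ returned by $\Simsketch$, and the proxy key becomes $\linhash(\mB^{-1}(g_{\Lattice}(\vc + \ve) + \vy)) = \linhash(\mB^{-1} g_{\Lattice}(\vc + \ve)) + \linhash(\mB^{-1}\vy) = \Delta a + a$ by fact (c). This matches the output of $\Simsketch$ and closes the argument.

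I expect the only delicate point to be fact (b) on the boundary of the Voronoi cells, where the closest vector is non-unique and the canonical tie-breaking rule (see the footnote defining $\CV_{\Lattice}$) must be checked to commute with lattice translations; away from this measure-zero boundary the symmetry is immediate from distance preservation under the bijection $\vu \mapsto \vu + \vw$ of $\Lattice$. Everything else is a mechanical composition of facts (a)--(c), so the main work is simply to state the rounding and translation identities cleanly and to confirm that the modular reductions inside $\linhash$ are consistent.
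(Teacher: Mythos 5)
Your proof is correct and follows essentially the same route as the paper's: both arguments reduce correctness to the translational invariance of $\CV_{\Lattice}$ under lattice shifts together with the linearity of $\linhash$, and reduce linearity to the quasi-linearity $g_{\Lattice}(\vx+\vw)=g_{\Lattice}(\vx)+\vw$ for $\vw\in\Lattice$, which you apply cleanly via the substitution $\vx+\ve=(\vc+\ve)+\vy$. Your explicit observation that the canonical tie-breaking rule in the definition of $\CV_{\Lattice}$ must be chosen to commute with lattice translations is a subtlety the paper's proof leaves implicit, and is worth retaining.
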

\begin{proof}
To prove the theorem, we show that $\ProtLinS$ given in \cref{sec:linear_sketch} satisfies correctness and linearity (Def.~\ref{def:linear_sketch}).

\noindent{\underline{\textit{Correctness.}}}
Fix $\LinSpp = (\KeyDesc = (\ZZ_p, +), \linhash)$, and $\vx, \vx' \in \FuzDataSpace$ such that $\vx' \in \AR(\vx)$, which implies $\CV_{\Lattice}(\vx - \vx') = \vzero$.
Let $\vy = g_{\Lattice}(\vx)$ and $\vy' = g_{\Lattice}(\vx')$, and let
\begin{align*}
(\vc, a) &= (\vx - \vy, \linhash(\mB^{-1} \vy)) = \Sketch(\LinSpp, \vx),\\
(\vc', a') &= (\vx' - \vy', \linhash(\mB^{-1} \vy')) = \Sketch(\LinSpp, \vx').
\end{align*}

We have
\begin{align*}
\Delta \vy &= \CV_{\Lattice}(\vc - \vc') = \CV_{\Lattice} \bigl( (\vx - \vy) - (\vx' - \vy') \bigr)\\
 &\stackrel{(*)}{=} \CV_{\Lattice}(\vx - \vx') + \vy' - \vy \stackrel{(**)}{=} \vy' - \vy,
\end{align*}
where the equality (*) uses the fact that $\vy, \vy' \in \Lattice$, and the equality (**) uses $\CV_{\Lattice}(\vx - \vx') = \vzero$.
Using this, we see that
\begin{align*}
\Delta a &= \linhash \bigl( \mB^{-1} (\Delta \vy) \bigr) = \linhash \bigl( \mB^{-1} (\vy' - \vy) \bigr)\\
 &\stackrel{(*)}{=} \linhash( \mB^{-1} \vy' ) - \linhash( \mB^{-1} \vy) = a' - a,
\end{align*}
where the equality (*) uses the linearity of $\linHashFamily$.
This shows that $\DiffRec(\LinSpp, \vc, \vc') = a' - a$.
Thus, $\ProtLinS$ satisfies correctness.

\bigskip

\noindent{\underline{\textit{Linearity.}}}
We use the auxiliary algorithm $\Simsketch$ in \cref{fig:linear_sketch_construction}.
Fix $\LinSpp = (\KeyDesc = (\ZZ_p, +), \linhash)$ and $\vx, \ve \in \FuzDataSpace$.
Let
\begin{align*}
(\vc, a) &= (\vx - g_{\Lattice}(\vx), \linhash(\mB^{-1} g_{\Lattice}(\vx))) = \Sketch(\LinSpp, \vx),\\
(\vc', a') &= ((\vx + \ve) - g_{\Lattice}(\vx + \ve), \linhash(\mB^{-1} g_{\Lattice}(\vx + \ve)))\\
 &=\Sketch(\LinSpp, \vx + \ve)
\end{align*}
In order to show the linearity of $\ProtLinS$, it is sufficient to show that the following equality holds:
\begin{equation}
\bigl( \vc + \ve - g_{\Lattice}(\vc - \ve),~ \linhash(\mB^{-1} g_{\Lattice}(\vc + \ve) \bigr) = (\vc', a' - a), \label{eqn:ls_linearity}
\end{equation}
since the left hand sice is exactly $\Simsketch(\LinSpp, \vc, \ve)$.

For the first element in \cref{eqn:ls_linearity}, we have
\begin{align*}
\vc + \ve - g_{\Lattice}(\vc - \ve) &= \vx - g_{\Lattice}(\vx) + \ve - g_{\Lattice}(\vx - g_{\Lattice}(\vx) + \ve)\\
 &\stackrel{(*)}{=} \vx - g_{\Lattice}(\vx) + \ve - \bigl(g_{\Lattice}(\vx + \ve) - g_{\Lattice}(\vx) \bigr)\\
 &= \vx + \ve - g_{\Lattice}(\vx + \ve) = \vc',
\end{align*}
where the equality (*) uses the property of $g_{\Lattice}$ that $g_{\Lattice}(\vx' + \vy') = g_{\Lattice}(\vx') + \vy'$ for $\vx' \in \FuzDataSpace$ and $\vy' \in \Lattice$, and that $g_{\Lattice}(\vx) \in \Lattice$.

For the second element in \cref{eqn:ls_linearity}, we have
\begin{align*}
\linhash \bigl( \mB^{-1} g_{\Lattice}(\vc + \ve) \bigr) &= \linhash \bigl(\mB^{-1} (g_{\Lattice}(\vx - g_{\Lattice}(\vx) + \ve)) \bigr)\\
 &\stackrel{(*)}{=} \linhash \bigl(\mB^{-1}(g_{\Lattice}(\vx + \ve) - g_{\Lattice}(\vx)) \bigr)\\
 &=\linhash \bigl( \mB^{-1}g_{\Lattice}(\vx + \ve) - \mB^{-1} g_{\Lattice}(\vx) \bigr)\\
 &\stackrel{(**)}{=} \linhash(\mB^{-1} g_{\Lattice}(\vx + \ve)) - \linhash(\mB^{-1} g_{\Lattice}(\vx))\\
 &= a' - a,
\end{align*}
where the equality again uses the property of $g_{\Lattice}$ that $g_{\Lattice}(\vx' + \vy') = g_{\Lattice}(\vx') + \vy'$ for $\vx' \in X$ and $\vy' \in \Lattice$, and the equality (**) uses the linearity of $\linHashFamily$.
Hence, we can conclude that $\ProtLinS$ satisfies linearity.
\end{proof}

\subsection{Concrete Lattice for Efficient Linear Sketch} \label{sec:triangular_lattice}
Depending on the the type of lattice $\Lattice$ (or equivalently basis $\mB$), the computational complexity of $\CV_{\Lattice}$ and $g_{\Lattice}$ differs greatly. 
For our concrete instantiation of linear sketch, we use \emph{triangular} lattices. 
Geometrically, they are lattices that have regular hexagons as the Voronoi region $\VR$ (see the right hand side of \cref{fig:AR_VR_triangular} for an illustration). 
Over such a lattice, $\CV_{\triLattice}$ can be computed in time $O(n^2)$.
Moreover, other than they allow for efficient computations of $\CV_{\triLattice}$ and $g_{\triLattice}$, the acceptance region $\AR_{\triLattice}$ of triangular lattices reflects nicely the notion of ``closeness'' of most natural biometrics. 
In a typical biometric authentication, the most natural and widely-used way to judge two biometrics $\vx, \vx'$ are \lq\lq close'' is to calculate how close they are with respect to the Euclidean distance $\|\vx - \vx'\|_2$. 
The triangular lattice is a very suitable lattice in the sense that $\AR_{\triLattice}$ (which is a regular hexagon) best approximates the closeness induced by the Euclidean distance compared to any other lattice $\Lattice$. 
We note that casting the linear sketch schemes in \cite{IJIS:TMMHN19} in the framework of lattices, we see that they considered lattices with a square as the $\AR_{\Lattice}$ (i.e., a lattice with basis $\mB = d \cdot \mI_n$ for some positive real $d \in \RR$). See \cref{fig:acceptance_region} for a visual aid.  
Effectively, our lattice allows to extract more entropy from the underlying biometric since we are able to model more accurately the real closeness metric. 

A formal description of triangular lattices and how $\CV_{\triLattice}$ is implemented are provided in \cref{app_sec:cvp_triangular_lattice}.

\subsection{Security of the \DL Assumption with Sketch} \label{sec:analysis_dlsketch}
In \cref{sec:fs_from_dl}, we introduced the $\DLsketch$ assumption on which the security of our fuzzy signature scheme is based.
The main question is of course: how plausible is this assumption?
We argue that for our linear sketch scheme $\ProtLinS$ presented in this section, the $\DLsketch$ assumption is plausible if:
\begin{itemize}
\item the quantity that we call the \emph{conditional false matching rate} ($\CFMR$)
is \lq\lq small'', say, $\approx 2^{-\secpar}$ for a cryptographic security parameter $\secpar$, and
\item the standard $\DL$ assumption holds.
\end{itemize}
Here, for the linear sketch scheme $\ProtLinS$ over a fuzzy key setting $\FuzKeySet = (\FuzDataSpace, \FuzDataDist, \AR, \ErrDist, \ErrFNMR)$ with which a lattice $\Lattice = \Lattice(\mB)$ is associated, we define $\CFMR$ by
{\small
\[
\CFMR:= \Pr \left[\begin{array}{c} \vx, \vx \samp \FuzDataDist; \vc \samp \vx - g_{\Lattice}(\vx);\\ \vc' \samp \vx' - g_{\Lattice}(\vx')\end{array} : \vx' \in \AR(\vx) \middle| \vc = \vc' \right].
\]
}
In other words, $\CFMR$ is the conditional probability that $\vx'$ belongs to $\AR(\vx)$ conditioned on the event that their sketch values $\vc = \vx - g_{\Lattice}(\vx)$ and $\vc' = \vx' - g_{\Lattice}(\vx')$ are identical.

Our argument is based on the following two facts:
\begin{enumerate}
\item If $\CFMR \lessapprox 2^{-(2 \secpar + \omega(\log \secpar))}$, then the standard $\DL$ assumption implies the $\DLsketch$ assumption;
\item If $\CFMR \approx 2^{-\secpar}$ (or even $2^{-\omega(\log \secpar)}$), then the $\DLsketch$ assumption holds in the generic group model \cite{EC:Shoup97}. 
\end{enumerate}

We give an explanation for each item.
Below, recall that for a joint distribution $(\mathcal{X}, \mathcal{C})$, the (average) \emph{conditional collision entropy} of $\mathcal{X}$ given $\mathcal{C}$ is defined by $\avgcollentropy(\mathcal{X}|\mathcal{C}) := - \log_2 \COL(\mathcal{X}|\mathcal{C})$, where
{\small
\begin{equation}
\COL(\mathcal{X}|\mathcal{C})
:= \Pr[(x,c), (x', c') \samp (\mathcal{X}, \mathcal{C}) : x = x' | c = c']. \label{eqn:CR}
\end{equation}
}
Here, $\COL(\mathcal{X}|\mathcal{C})$ is called the \emph{conditional collision probability of $\mathcal{X}$ given $\mathcal{C}$}.
When the context is clear, 
we often abuse notation and write $\avgcollentropy(x|c)$ instead of $\avgcollentropy(\mathcal{X}|\mathcal{C})$, and we do a similar treatment for $\COL$.

\noindent{\underline{\textit{(1) $\DL$ implies $\DLsketch$ when $\CFMR$ is sufficiently small.}}}
Identifying the joint distribution $(\mathcal{X}, \mathcal{C})$ in \cref{eqn:CR} with $\{\vx \samp \FuzDataDist : (\vx, \vc = \vx - g_{\Lattice}(\vx)) \}$, we clearly have $\COL(\vx|\vc) \le \CFMR$.
Moreover, observe $\COL(\vx|\vc) = \COL(\mB^{-1} \vy|\vc)$, since recovering $\vx$ given $\vc$ implies recovering $\mB^{-1} \vy$ given $\vc$ and vice versa due to $\vc = \vx - \vy$. 
Now, suppose we had an upper bound of $\COL(\mB^{-1}\vy|\vc) \le p^{-1} \cdot 2^{-\omega(\log \secpar)}$, or equivalently $\avgcollentropy(\mB^{-1}\vy|\vc) \ge \log_2 p + \omega(\log \secpar)$, when we sample $\vx \samp \FuzDataDist$ and calculate $(\vc, a) \lruns \Sketch(\LinSpp, \vx)$, where $\vc = \vx - \vy = \vx - g_{\Lattice}(\vx)$.
Then, the leftover hash lemma of \cite{SICOMP:DORS08}, formally recalled in \cref{app_sec:lhl}, guarantees that the proxy key $a = \linhash(\mB^{-1} \vy) \in \ZZ_p$ is statistically close to a uniformly random element even given $\vc$, and thus the standard $\DL$ assumption implies the $\DLsketch$ assumption.

Putting things together, if $\CFMR \lessapprox 2^{-(2 \secpar + \omega(\log \secpar))}$, then  standard $\DL$ implies the $\DLsketch$ assumption since $p \approx 2\secpar$.
However, since typically $\secpar \ge 80$, this condition on $\CFMR$ may be somewhat too expensive to assume for fuzzy biometrics. 
Nevertheless, we believe the above provides us an intuition that the $\DLsketch$ assumption is not an esoteric assumption and justifies that $\CFMR$ is the right quantity to care about.

\noindent{\underline{\textit{(2) $\DLsketch$ is hard in the generic group model.}}
The generic group model \cite{EC:Shoup97} is an idealized model of computation for a cyclic group, where algorithms do not use the representation (or, the encoding) of the group elements, other than testing the equality of group elements.
When a new computational problem related to a cyclic group is introduced, this model is typically used to reason about its hardness.
Specifically, if some computational problem is proved to be hard for PPT adversaries in the generic group model, then it formally guarantees that one cannot solve the problem efficiently as long as one is performing only group operations.
To solve it efficiently, one must rely on a weakness of a particular group.
Thus, the hardness of a computational problem in the generic group model serves as a strong evidence that if we use a cyclic group where no such weakness is known (e.g. a group over an elliptic curve).%

Based on existing works, we can observe that if $\CFMR \approx 2^{-\secpar}$ (or even $2^{-\omega(\log \secpar)} = \secpar^{-\omega(1)}$), then there is no PPT algorithm that can break the $\DLsketch$ assumption in the generic group model.
Specifically, it is a well-known fact (and formally shown in \cite[Lemma 6]{TCC:DodYu13}) that a universal hash family is a good \lq\lq strong randomness condenser'' and preserves essentially all the (conditional) collision entropy of the input $\mB^{-1}\vy$ of $\linhash$ to its output $a = \linhash(\mB^{-1}\vy)$.
That is, we have $\COL(a|\linhash,\vc) \approx \COL(\mB^{-1}\vy|\vc) \le \CFMR$.
Moreover, \cite{C:BitCan10} considers a stronger variant of the decisional Diffie-Hellman problem where the exponents are not uniformly distributed but of superlogarithmic min-entropy $\omega(\log \secpar)$, and showed that this problem is hard for any PPT adversary in the generic group model.
This directly implies the hardness of the $\DL$ problem in which the exponent of a problem instance is chosen from a distribution with min-entropy $\omega(\log \secpar)$ in the generic group model.
Finally, min-entropy and collision entropy are linearly related.
Hence, taking average over the choice of $\vc = \vx - \vy$, we can conclude that the $\DLsketch$ problem is hard to solve for any PPT adversary in the generic group model if $\CFMR = 2^{-\omega(\log \secpar)}$.


\section{Experimental Method For Estimating Biometric Entropy} \label{sec:concrete_fks}
The final and most important content to cover is the question of whether we can use real-world biometrics to realize efficient and secure fuzzy signatures. 
For instance, it is clear that if everybody had similar biometrics, then there is no way to realize a secure fuzzy signature since everybody can impersonate each other. 
However, in reality, everyone possesses different biometrics. Therefore, if we used all of our personal biometrics, then fuzzy signatures should intuitively be secure since it would be extremely hard to impersonate someone. 
But obviously, collecting vast amount of biometrics from a user will make the signing procedure very expensive and drastically decrease user experience. 
The question is then, \emph{how much entropy does a specific biometric have, and can it be used to securely and efficiently instantiate fuzzy signatures?} 

This section can be divided into two parts: we first provide an easy-to-state sufficient condition for ``fuzzy signature compatible'' biometrics, and then we establish an experimental method to show that a given biometric satisfies this condition.

\subsection{Preprocessing Biometrics} \label{app:sec:preprocess_fuzzy_biometric} 
Before getting into the main content of this section, we first clarify how fuzzy biometrics are handled in more detail. 
That is, given, say a raw image of a fingerprint, what is the corresponding fuzzy biometric $x$ that we have been abstractly using throughout the paper. 
As with any real-world data, we preprocess (e.g., conduct feature extraction on) raw biometric data obtained via some measurement and represent them in a meaningful way. 
This preprocessed data is in fact what we have been calling ``fuzzy biometrics $x$'' throughout the paper. 
A pictorial explanation is provided in the bottom of \cref{fig:preprocess_biometrics}. 
The method of preprocessing raw biometric data depends on the concrete type of biometrics being used. 
We provide a concrete example in \cref{sec:experiment}, where we conduct experiments using real-world finger-veins. 
In the following, when we mention fuzzy biometrics, we always assume the preprocessed version. 
Moreover, the distribution $\FuzDataDist$ of fuzzy biometrics is the distribution induced by preprocessing a randomly sampled raw biometrics.  
\begin{figure} [ht]
  \centering
	\includegraphics[ width=0.3\textwidth ]{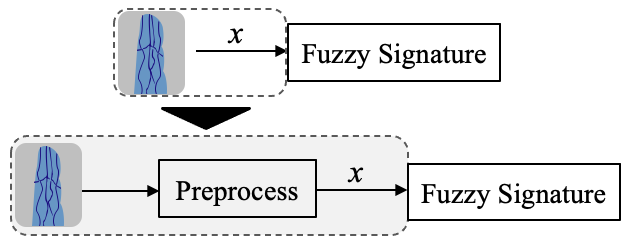}
  \caption{\small The above (resp. bottom) depicts a simplified (resp. realistic) version of how we handle biometrics. 
  } 
   \label{fig:preprocess_biometrics}
 \end{figure}

\subsection{Preparation} \label{sec:preparation}

\noindent{\underline{\textit{What is required from fuzzy biometrics?}}}
As we have seen in \cref{sec:security_dl,sec:analysis_dlsketch}, the concrete values of $\FNMR$ and $\CFMR$ of the fuzzy biometrics dictate the applicability to fuzzy signatures.
Recall the former and latter correspond to the correctness and security of fuzzy signatures, respectively: 

\begin{itemize}
	\item[False Non-Matching Rate (\FNMR):] 
	Informally, this was the probability that two honestly generated fuzzy biometrics $x$ and $x'$ from the same user are ``far''. Formally, $\FNMR := \Pr[  x \gets \FuzDataDist,  e\gets \ErrDist: x + e \not\in \AR(x)]$.
	
	\item[Conditional False Matching Rate (\CFMR):] Informally, this was the collision probability of fuzzy biometrics conditioned on the sketch being identical. Formally, $\CFMR := \Pr[ x, x' \gets \FuzDataDist,  (\sketch, a) \gets \Sketch(x), (\tsketch, \ta) \gets \Sketch(x'): x' \in \AR(x) | ~\sketch = \tsketch ]$, where recall $\Sketch$ is a deterministic algorithm. In particular, the probability is only over the randomness used to sample $x$ and $x'$.%
	\footnote{
	For simplicity, we omit the randomness of the public parameter $\LinSpp$. 
	}
\end{itemize}

Observe that the values of $\FNMR$ and $\CFMR$ are determined uniquely by the following factors: 
the distribution $\FuzDataDist$ of fuzzy biometrics, the definitions of the linear sketch, and the acceptance region $\AR$ used by the linear sketch. 
Furthermore, observe that $\FuzDataDist$ is implicitly defined by the concrete type of biometrics being used, and the linear sketch only depends on the definition of $\AR$ (or equivalently to the lattice as explained in \cref{sec:linear_sketch}). 
Therefore, $\AR$ is  the only parametric term that we can experimentally tune that would affect the values of $\FNMR$ and $\CFMR$. 
Namely, the choice of $\AR$, which roughly is a metric for deciding whether two fuzzy biometrics $x$ and $x'$  are ``similar", is the main term that determines $\FNMR$ and $\CFMR$.
As a rule of thumb, we like to define $\AR$ to be efficiently computable and to reflect the actual closeness metric of the underlying fuzzy biometric.
For instance, if the closeness is measured by the Euclidean metric, the hexagon $\AR$ may be better than the square $\AR$ as in \cref{fig:acceptance_region}. (See also \cref{sec:triangular_lattice}). 

\begin{figure} [ht]
  \centering
	\includegraphics[ width=0.2\textwidth ]{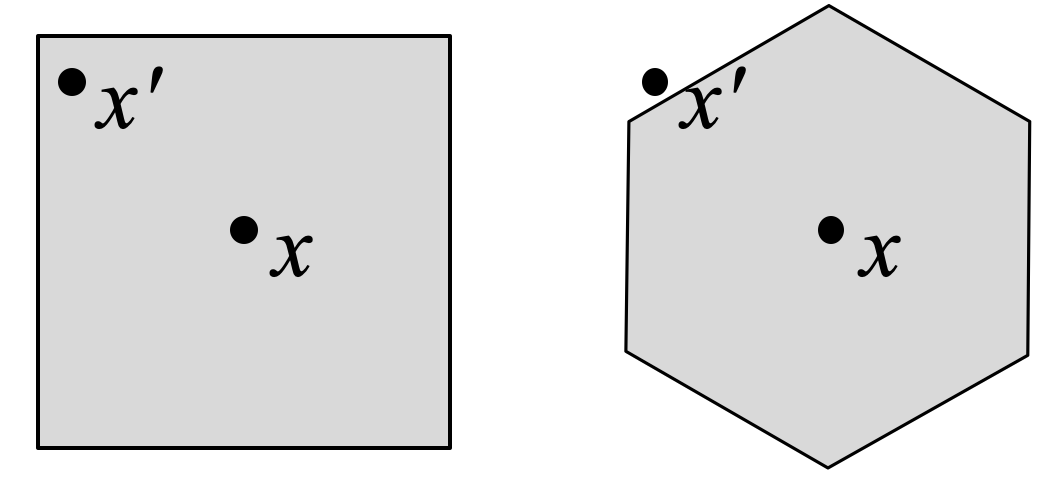}
  \caption{\small The gray area depicts the acceptance region $\AR(x)$ of some fuzzy biometric $x$. Although $x$ and $x'$ are the same, they may be considered to be close (left) or far (right) depending on $\AR$. } 
   \label{fig:acceptance_region}
 \end{figure}

In real-world applications of fuzzy signature, we can typically tolerate correctness error of at most $5\%$ and security level of at least $112$-bits.
We can tolerate the correctness to be much larger than the security level since we can simply retry till signing succeeds. 

To summarize thus far, once we fix a (set of) biometrics, e.g., iris, fingerprint, and finger-vein, used by the fuzzy signature and a description of the linear sketch scheme, the remaining issue is to define an appropriate acceptance region $\AR$ and show that the fuzzy biometric provides $\FNMR \lessapprox 5\% ( \approx 2^{-4.32})$ and $\CFMR \lessapprox 2^{-112}$.%
\footnote{
Note that we need the additional condition that a linear sketch scheme with respect to $\AR$ is efficiently constructible. We intentionally keep this requirement implicit to make the presentation simple. 
}
In the following, we show how to experimentally estimate the values of $\FNMR$ and $\CFMR$ for a given definition of $\AR$. 

\noindent\underline{{\textit{What kind of fuzzy biometrics is required for the experiment?}}}
For the experiments, we assume a natural type of biometric dataset to be provided: $S = \set{ x^{(i)}_j }_{(i, j) \in [N] \times [0: \ell]}$, where $x^{(i)}_j$ is the $j$-th fuzzy biometric of the $i$-th user.%
\footnote{
For a non-abstract treatment of fuzzy biometric, see \cref{sec:efficiency_analysis,app:sec:preprocess_fuzzy_biometric}. 
}
That is, $S$ contains $(\ell + 1)$ fuzzy biometrics from $N$ users. 
Such a dataset can be collected in practice by scanning each user $i$'s biometrics $(\ell + 1)$-times. 
Looking ahead, $x^{(i)}_0$ is a special biometric scanned at the enrollment phase (i.e., generation of the verification key) and $\set{ x^{(i)}_j }_{j \in [\ell]}$ are biometrics scanned during signing. 
Finally, denote $\bS :=   \set{ x^{(i)}_j }_{(i, j) \in [N] \times [\ell]}$ and $\bS^{(i)} :=  \set{ x^{(i)}_j }_{j \in [\ell]}$. 

Note that we can always define $\AR$ such that $x^{(i)}_j \in \AR(x^{(i)}_0)$ for all $j \in [ \ell ]$ and $x^{(i')}_{j'} \notin \AR(x^{(i)}_0)$ for all $i' \neq i$ and $j \in [\ell]$, i.e., a perfect definition of $\AR$ for the specific dataset $S$. 
However, it is clear that such an $\AR$ is overfitting to the particular dataset $S$ and will not generalize to unseen fuzzy biometrics $\FuzDataSpace$. 
Moreover, since typically, such an $\AR$ cannot be computed efficiently we will not be able to efficiently construct an associating linear sketch scheme or perform the experiments explained below. 
Therefore, in practice, we use natural definitions of $\AR$ as those explained in \cref{sec:linear_sketch}. 

\subsection{Estimating \FNMR of Biometrics}\label{sec:fnmr}
We first estimate $\FNMR$ by $\tFNMR$. 
Given a dataset $S$ of the above type, we empirically calculate \tFNMR as follows: 
\begin{align}
	\tFNMR :=  \frac{ \sum_{i \in [N] } \abs{\bigset{ x \in \bS^{(i)} \mid x \not \in \AR(x^{(i)}_0) }}}{ \sum_{i \in [N]} \abs{\bS^{(i)}}  },  \label{eq:estimate_fnmr}
\end{align}
where we assume $\AR$ is efficiently computable. 
It is easy to see that the numerator counts all the fuzzy biometric of each user that does not lie inside the acceptance region $\AR(x_0^{(i)})$. 

\subsection{Estimating \CFMR of Biometrics}\label{sec:cfmr}
\noindent{\underline{\textit{Difficulty of estimation.}}}
We next estimate $\CFMR$ by $\tCFMR$. 
Computing \tCFMR experimentally turns out to be much harder compared to computing \tFNMR. 
The main reason is the value of \CFMR that we wish to evaluate is much smaller than \FNMR; while we only needed to show that \FNMR is smaller than $5\%$, we need to show that \CFMR is smaller than $2^{-112}$ to be cryptographically useful. 
In fact, even if we waived the condition $c = c'$, it is still non-trivial to estimate \CFMR, which is by definition \FMR, since the event we are trying to check happens with probability only $2^{-112}$. 
According to the rule of three \cite{JAMA:HanLip83}, more than $3 \cdot 2^{112}$ independent impostor biometrics (i.e., pairs of $x, x'$ from \emph{different} users such that $x' \in \AR(x)$) are required in the dataset $S$  to conclude that $\FMR$ is smaller than $2^{-112}$ with $95\%$ confidence.
However, collecting such $S$ is highly impractical.%
\footnote{
Looking ahead, in our experiment, we consider settings where we only need to show $2^{-28}$ since we use 4 independent biometrics. However, this is still difficult to collect in practice. }
}
This is in sharp contrast to \FNMR where we only needed to assume that the dataset $S$ contains more than $3 \cdot (1/5\%) \approx 3\cdot 2^{4.32}$ pairs of biometrics $x, x'$ from the \emph{same} user such that $x' \in \AR(x)$ to get a meaningful estimate. 
We note that estimating \FMR, let alone \CFMR, is generally a difficult problem in biometrics due to the difficulty in collecting sufficient data, e.g., \cite{PR:Daugman03,TCSV:Daugman04}.

\noindent{\underline{\textit{Our approach.}}}
We divide the problem of estimating \CFMR into two subproblems as follows: 
\begin{enumerate}
	\item First, evaluate $\tFMR$. Namely, ignore the condition $c = c'$ on the sketch in \CFMR and simply estimate $\FMR$, where $\FMR := \Pr[ x, x' \gets \FuzDataDist: x' \in \AR(x) ]$.   \label{item:subproblem1}
	\item Then, show that \FMR and the value of sketch are uncorrelated. Namely, experimentally show that $\CFMR$ can be approximated by \FMR.  \label{item:subproblem2} 
\end{enumerate}
By individually solving the two subproblems, we eventually estimate the value $\tCFMR$ by $\tFMR$. 
The details of the solution to the individual subproblems follow.

\noindent{{\textbf{Subproblem~\cref{item:subproblem1}.}}}
As mentioned before, the value \FMR is typically too small to perform a simple estimation as we did for \FNMR. 
To overcome this issue, we borrow techniques from \emph{extreme value analysis} (EVA), a statistical method for evaluating very rare events by using only an ``extreme'' subset of the dataset $S$ \cite{BOOK:CBTD01,NIST:Michael12}.

We explain how to estimate \FMR using EVA below. 
First, define a continuous function called \emph{scaled} acceptance region $\sAR(w, x)$ defined for all $x \in \FuzDataSpace$ and $w > 0$ such that $\sAR(1, x) := \AR(x)$ and $\sAR(w, x)$ is an isotropic scaling of the original set $\AR(x)$ by a factor~$w$. 
A pictorial example is provided in \cref{fig:AR_EVS}. 
Notice that although $\sAR(1, x)$ does not include many points from the dataset $\bS$, we can increase them by enlarging $w$ and considering a larger set $\sAR(w, x)$. 
Also, define the scaled false matching rate function $\sFMR(w) := \Pr[x, x' \gets \FuzDataDist: x' \in \sAR(w, x)]$, where we have $\sFMR(1) := \FMR $ by definition. 
In the following, we estimate the probability distribution \sFMR(w), denoted as $\tsFMR(w)$, and then indirectly estimate the desired $\tFMR$ by plugging in $w = 1$ into $\tsFMR(w)$. 
Note that this is different from how we were able to directly estimate \FNMR through the dataset $S$.

 \begin{SCfigure}[2.3][ht]
    \includegraphics[ width=0.15\textwidth ]{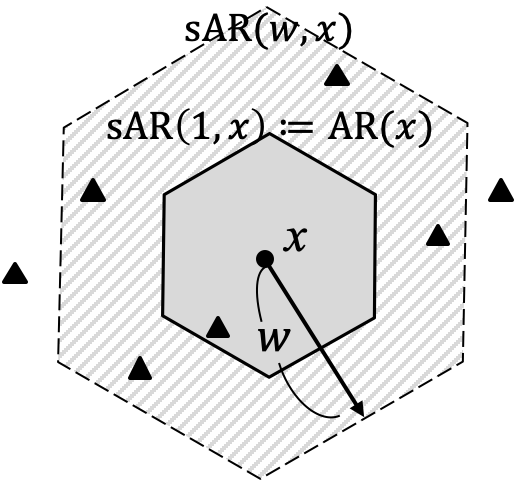}
        \caption{\protect\rule{0ex}{5ex} \small The bold gray area $\sAR(1, x)$ is the original set $\AR(x)$. The triangles are the fuzzy biometrics in $\bS$ that are different from $x$. The shaded area $\sAR(w, x)$ is the set $\AR(x)$ scaled by a factor~$w$. }
   \label{fig:AR_EVS}
\end{SCfigure}

The core of EVA is how to estimate a probability distribution $\sFMR(w)$ in the extremely rare setting $w \approx 1$. The high level idea is as follows. 
We first hypothesize that $\sFMR(w)$ can be explained by a particular class of natural probability density function when $w$ is smaller than some appropriately chosen $w^*$.
For instance, in our case, the class we consider is the family of power distributions $ \mathcal{F}_{\sf pow} = \set{ aw^b}_{a, b >0}$.\footnote{
Note that the class $\mathcal{F}_{\sf pow}$ is not itself a probability density function. We only assume that the probability density function of $\sFMR(w)$ for the narrow range $w \in [0, w^*]$ can be explained by  $\mathcal{F}_{\sf pow}$. 
}
So as not to interrupt the explanation of EVA, we provide rational behind the choice in \cref{rem:choice}. 
Now, since $\sFMR(w^*) = \int_0^{w^*} aw^b dw$ for some unknown values of $a$ and $b$, we first estimate $\tsFMR(w^*)$ from the dataset $S$, and then further estimate $a$ and $b$ via the maximal likelihood analysis (MLA) \cite{BOOK:Bishop06}. 
Here, notice we can properly estimate $\sFMR(w^*)$ from the dataset $S$ for an appropriate value of $w^*$ since enough points in the dataset $S$ will lie in the region $\sAR(w^*, x)$ for large enough $w^*$ (see \cref{fig:AR_EVS}). 
Finally, once $a$ and $b$ are computed via the MLA, we obtain $\tsFMR(1)$ by computing $\int_0^{1} a w^b dw$. 
Note that the main idea behind EVA is to only use an appropriately chosen small $w^*$ so that we can focus on estimating the range where $\sFMR(w)$ has extremely small values, rather than estimating the entire function $\sFMR(w)$. 
Specifically, if we use a too large $w^*$, we may be able to estimate $\sFMR(w)$ well in its entirety, however, it will not produce good estimates when conditioning on $\sFMR(w)$ with small values. 
The appropriate choice of $w^*$ is dataset dependent and we discuss this in \cref{rem:choice}.

We now provide a more formal description of the above procedure. 
First, consider the function $k(w)$ defined as 
\begin{align*}
k(w) =  \sum_{i \in  [N]}  \abs{ \set{ x \in \bS \backslash \bS^{(i)} \mid x \in \sAR(w, x^{(i)}_0) }}. 
\end{align*}
Since the dataset $S$ is discrete, we can efficiently compute a sequence of positive reals $w_1 <  w_2 < \cdots $, where each $w_n$ is the smallest $w$ satisfying $k(w_n) = n$. 
We then pick an appropriate $w^* \in \set{w_n}_n$ as explained in \cref{rem:choice}, and denote $k^* := k(w^*)$, that is, $w^* = w_{k^*}$. 
Also set $k_{\max} = \lim_{w \rightarrow \infty} k(w)$, where by definition $k_{\max}$ is the number of total impostor biometrics. 
We estimate the value of $\tsFMR(w^*)$ by $k^*/ k_{\max}$. 
Then, by the hypothesis that the probability density function $\sFMR(w)$ for small $w \le w^*$ follows $f(w) = a w^b$ for some positive reals $a$ and $b$, we have 
\begin{align*}
	\frac{k^*}{k_{\max}} = \tsFMR(w^*)  \approx \int^{w^*}_0 f(w) dw = \frac{a}{b + 1} w^{*(b + 1)}. 
\end{align*}
Solving the above for $a$ and plugging it into the likelihood function \cite{BOOK:Bishop06}, we obtain the following. 
{\small
\begin{align*}
	L(b) = \prod_{n = 1}^{k^*} f(w_n) = a^{k^*} \prod_{n = 1}^{k^*} w_n^b = \left( \frac{k^*(b + 1)}{k_{\max} w^{*(b + 1)}} \right)^{k^*} \cdot \prod_{n = 1}^{k^*} w_n^b. 
\end{align*}
}

Taking the logarithm of $L(b)$, we can show it is maximized when 
$b = k^*/(k^* \ln w^* - \sum_{n=1}^{k^*} \ln w_n) - 1$. 
Setting $r = b + 1$ and combining everything, we conclude that $\tsFMR(w) = \frac{k^*}{k_{\max} \cdot w^{*r}} w^r$. 
Finally, plugging in $w = 1$, the desired estimate for $\sFMR(1) = \FMR$ is 
\begin{align}
	\tsFMR(1) = \frac{k^*}{k_{\max} \cdot w^{*r}}.  \label{eq:estimate_cfmr}
\end{align}

\begin{remark}[Choice of $w^*$ and $\mathcal{F}_{\sf pow}$] \label{rem:choice}
When using EVA, the particular choice of $w^*$ is data specific, and we typically check whether the choice was reasonable by plotting the estimated function (see \cref{sec:experiment} for a concrete example). 
Noticing that $w^*$ and $k^*$ are in one-to-one relation, we can choose $k^*$ instead. 
The concrete choice of $k^*$ may be data specific but they are typically small values ranging from, say $0.1\%$ to $5\%$. 
Put differently, we use only 0.1\% to $5\%$ of $k_{\max}$ (i.e., the number of impostor biometrics in the dataset $S$) to estimate the extremely rare events. 
Moreover, we hypothesize that the probability density function of $\sFMR(w)$ for very small values of $w$ is contained in $\mathcal{F}_{\sf pow}$ by making a natural assumption that the \emph{local} probability distribution around a fuzzy biometrics $x$ is \emph{smooth}. 
That is, we assume that for any fuzzy biometrics $x$, any $x'$ in the vicinity of  $x$ occurs with equal probability. Let $g(x)$ be the distribution of $x$, i.e., $\FuzDataDist$. 
Then, when $w$ is small, for any $x$ we can approximate $\sFMR(w) = \int_x \int_{x' \in \sAR(w, x)} g(x') dx'dx \approx  \int_x g(x) \int_{x' \in \sAR(w, x)} dx' dx \propto  \int_x g(x) w^r dx = w^r$, where $r$ is the size of the dimension the fuzzy biometric lies in and we used the fact that $f(x) \approx f(x')$.%
\footnote{
We note that $r$ may be smaller than the concrete dimension $n$ of the fuzzy biometrics obtained through some feature extraction.
}
Finally, by taking the derivative of $w^r$, we see that the probability density function of $\sFMR(w)$ is included in $\mathcal{F}_{\sf pow}$. 
\end{remark}

\noindent{{\textbf{Subproblem~\cref{item:subproblem2}.}}}
As explained before, directly estimating \CFMR is difficult since the sketch being identical is an extremely rare event and no practical dataset $S$ will contain such samples.
Therefore, we instead provide an empirical evidence that \CFMR can be approximated by \FMR, and  indirectly estimate the value of $\CFMR$ by $\tsFMR(1)$ obtained above. 

Recall $\Sketch$ is a deterministic function. 
Let $q_c(x)$ be the function that ignores the proxy key $a$ and simply outputs the sketch $c$ of $(\sketch, a) \gets \Sketch(x)$. 
Then, we can rewrite $\CFMR = \Pr_{x, x' \gets \FuzDataDist}[x' \in \AR(x) \mid q_c(x) = q_c(x') ]$. 
Assume the space the sketch $\sketch$ lies in is endowed with the Euclidean metric (which holds true for all known linear sketch scheme). 
Then, for any $\ell \ge 0$, consider a variant of \CFMR defined as
\begin{align*}
	\ell\text{-}\CFMR := \Pr_{x, x' \gets \FuzDataDist}[x' \in \AR(x) \mid \dist{q_c(x), q_c(x')} = \ell ], 
\end{align*}
where $\dist{z, z'} := \normtwo{z - z'}$. 
If we can show that $\ell$ is uncorrelated to the value of $\ell\text{-}\CFMR$ for all $\ell \ge 0$, then we can ignore the sketch condition in $\CFMR$ and conclude that $\CFMR \approx \FMR = \Pr_{x, x' \gets \FuzDataDist}[x' \in \AR(x)]$. However, unfortunately, since the condition on $\ell\text{-}\CFMR$ is still a very rare event, which we cannot expect to have in our dataset, we still cannot empirically estimate $\ell\text{-}\CFMR$. 
To this end, we further relax the condition in $\ell\text{-}\CFMR$. 
For any large enough integer $M$, consider a sequence of reals $0  = \ell_0 <  \cdots < \ell_M$  such that $\Pr_{x, x' \gets \FuzDataDist}[ \dist{q_c(x), q_c(x'}) \in [\ell_{t - 1},  \ell_{t})] = 1/M$ for all $t \in [M]$. 
Then, for each $t \in [M]$, we consider the following alternative variant of $\CFMR$:
{\small
\begin{align*}
	\CFMR_t := \Pr_{x, x' \gets \FuzDataDist}[x' \in \AR(x) \mid \dist{q_c(x), q_c(x')} \in [\ell_{t - 1},  \ell_t) ]. 
\end{align*}
}
Due to how the way we partition the $\ell_t$'s, $\CFMR_i$ is an approximation of $\frac{(\ell_{t - 1} + \ell_t)}{2}\text{-}\CFMR$. 
Hence, our goal now is to show that for all $ t \in [M]$, the value of $\ell_t$ is uncorrelated to the value of $\CFMR_t$, which in particular approximately establishes that any $\ell$ is uncorrelated to $\ell\text{-}\CFMR$. 
Concretely, we will perform a hypothesis test using $t$-statistics on the pair $( \ell_t, \CFMR_t)$ to conclude that $\CFMR_t$ is not significantly correlated with $\ell_t$. 
We refer the standard explanation of statistical $t$-test to textbooks such as \cite{BOOK:GibCha14}.

To perform the statistical $t$-test, we first prepare the values of $\ell_t$ and $\CFMR_t$ for all $t \in [M]$. 
Since we cannot exactly compute them, we estimate them, denoted as $\tilde{\ell}_t$ and $\tCFMR_t$. 
Estimating $\tilde{\ell}_t$ is simple; 
we compute  $\dist{q_c(x^{(i)}_0), q_c(x')}$ for all $i \in [N]$ and $x' \in \bS \backslash \bS^{(i)}$ and sort them.
That is, we compute the distance of the sketches of all impostor pairs in $S$. 
Let the obtained distances be $L_1 \le \cdots \le L_{k_{\max}}$, where recall $k_{\max}$ was the number of total impostor pairs in $S$. 
Then, we set $\tilde{\ell_t} = L_{ \lfloor t \cdot k_{\max}/ M \rfloor}$ for $t \in [M]$. 
To estimate $\CFMR_t$, we use the same method used to estimate $\FMR$ to solve subproblem~\cref{item:subproblem1}. 
Namely, we use EVA to estimate $\CFMR_t$ by parameterizing the acceptance region $\AR$. 
The way the estimation proceeds is exactly the same as before except that we condition on the subset of the dataset $S$ so that the distance of the sketches are within $[\tilde{\ell}_{t - 1}, \tilde{\ell}_t)$.

Finally, after obtaining the samples $\set{ (\tilde{\ell}_t, \tCFMR_t) }_{t \in [M]}$, we perform the statistical $t$-test. 
Below, denote $y_t:= \log( \tCFMR_t )$ for $t \ \in [M]$. 
We perform a hypothesis test against the samples $\set{ (\tilde{\ell}_t, y_t) }_{t \in [M]}$, where the null  hypothesis $H_0$ is that the variables are uncorrelated. 
To this end, we first compute the sample correlation coefficient $r$ as 
{\small
\begin{align}
	r = \frac{ \sum_{t = 1}^M (\tilde{\ell}_t - \bar{\tilde{\ell}})(y_t - \bar{y}) }{  \sqrt{ \sum_{t = 1}^M (\tilde{\ell}_t - \bar{\tilde{\ell}})^2 } \cdot   \sqrt{ \sum_{t = 1}^M (y_t - \bar{y})^2 }}, \label{eq:correlation}
\end{align}
}
where $\bar{\tilde{\ell}}$ and $\bar{y}$ are the average of the samples. 
In case $\ell$ and $y$ are uncorrelated, then the value $t = \frac{r \sqrt{M - 2}}{\sqrt{1-r^2}}$ follows the $t$-distribution with $M- 2$ degree of freedom. 
Therefore, we compute the $p$-value from $t$ and conclude that the null hypothesis $H_0$ is not rejected at the 0.05 significance level if 
\begin{align}
	p \ge 0.05.  \label{eq:estimate_p}
\end{align}

Hence, if $p \ge 0.05$, then we conclude with high confidence that the value of $\ell\text{-}\CFMR$ is uncorrelated with the value of $\ell$, and in particular, approximate $\CFMR \approx \FMR \approx \tsFMR(1)$.


\section{Efficiency Analysis of Our Fuzzy Signature} \label{sec:efficiency_analysis}
In this section, we combine all the tools we developed thus far to show that we can instantiate fuzzy signatures efficiently and securely using real-world biometrics. 
In \cref{sec:experiment}, we first conduct experiments using the statistical methods presented in  \cref{sec:concrete_fks}  with real-world finger-vein biometrics, and conclude that finger-vein biometric from a single hand is sufficient for fuzzy signature. 
Then, in \cref{sec:efficiency_analysis_sub}, we instantiate our fuzzy signature with a concrete set of parameters and provide efficiency analysis of our proposed scheme. 

\subsection{Estimating Quality of Real-World Finger-Vein Biometrics}\label{sec:experiment}
We use real-world finger-veins (see \cref{fig:finger_vein}) to show that 4 finger-vein scans from a single hand is sufficient to instantiate fuzzy signature. 
To this end, we provide an appropriate definition of acceptance region $\AR$ and provide experimental results using the methods presented in \cref{sec:fnmr,sec:cfmr} to conclude $\FNMR \lessapprox 5\%$ and $\CFMR \lessapprox 2^{-112}$, respectively.

 \begin{SCfigure}[2.3][ht]
    \includegraphics[width=0.12\textwidth]{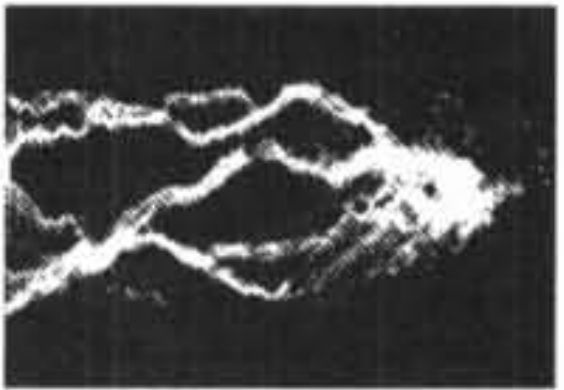}
        \caption{\protect\rule{0ex}{5ex} \small Example of an extracted finger-vein image. The image is taken from \cite[Figure 4]{MVA:MiuNagMiy02}. }
    \label{fig:finger_vein}
\end{SCfigure}

\noindent{\underline{\textit{Description of preprocessing and dataset $S$}}.}
There are several publicly available finger-vein datasets such as SDUMLA-HMT \cite{CCBR:YinLiuSun11} and Hong Kong Polytechnic University Finger Image Database Version 1.0 \cite{TIP:KumZho11}. 
However, in this work, we use the dataset used by \cite{BIC:YanAokOya09} as they contain the largest number of users and finger-vein images (roughly 3 to 5 times more). 

The finger-vein database contains 505 users where each user provided images of 6 fingers (index, middle, and ring fingers for both hands), and the collection for each finger was repeated 3 times to obtain 3 images (one for the enrollment phase and the other two for the signing phase). 
We eliminated 36 users that had finger-veins images that were not properly scanned. 
Moreover, since finger-veins of different fingers from the same user are believed to be independently distributed (a standard assumption used in many prior works \cite{ ICB:NanJaiRos09,ICBTAS:NanRosJai09,TIFS:TaoVel12,IEICE:MurKagTak16}), we can alternatively view the database as containing $(505 - 36) \times 6 = 2814$ users each providing 3 images of a single finger-vein.

We preprocess these raw finger-vein images into data compatible with our linear sketch (see \cref{sec:analysis_dlsketch}.)
We perform feature extraction on the finger-vein and represent it as an $n$-dimensional vector where we experimented with $n = 200, 300, 400$. 
For each $n$, we first randomly selected 1410 users from the 2814 users and performed principal component analysis \cite{BOOK:Bishop06} to extract the $n$-dimensional subspace that best explains the data. 
Then, we projected the remaining 1404 users' finger-vein onto the $n$-dimensional subspace and prepared the dataset $S = \set{ x^{(i)}_j }_{(i, j) \in [1404] \times [0: 2]}$. 
Each $n$-dimensional vector $x^{(i)}_j$ is represented as a 32-bit float. 

\noindent{\underline{\textit{Type of $\AR$}.}} 
We consider a regular hexagon as the acceptance region $\AR$ since we use the triangular lattice to instantiate the linear sketch scheme (see \cref{fig:AR_VR_triangular} and  \cref{sec:triangular_lattice}). 
Denote $d$ as the basis length of the triangular lattice (see \cref{app_sec:cvp_triangular_lattice}). 
Then, since the triangular lattice uniquely defines $\AR$, the value of $d$ indirectly parameterizes $\AR$; a  larger $d$ results in a larger region for $\AR$. 
Therefore, given the dataset $S$, we find the value $d$ that provides us with an $\AR$ that satisfies the conditions $\FNMR \lessapprox 5\%$ and $\CFMR \lessapprox 2^{-112}$. 
Note the concrete value of $d$ has no significant meaning as its length is relative to the scaling of the concrete fuzzy biometrics.

\noindent{\underline{\textit{Estimating \FNMR and \CFMR}.}} 
We use 4 finger-veins for the fuzzy signature%
\footnote{
Although our linear sketch is defined for a single biometric source, it is clear that they generalize to multiple independent biometric sources. For completeness,  details are provided in \cref{app_sec:composing_fks}. 
}
 and since each finger-vein is assumed to be distributed independently, we empirically evaluate whether the following holds for each $n = 200, 300, 400$: 
\begin{itemize}
	\item $\tFNMR \le 1 - (1 - 5\%)^{1/4} \approx 2^{-6.29}$ (see \cref{eq:estimate_fnmr})
	\item $\tCFMR \le (2^{-112})^{1/4} = 2^{-28}$ (see \cref{eq:estimate_cfmr})
	\item $p$-value is larger than 0.05 (see \cref{eq:estimate_p})
\end{itemize}
Here, the first item follows from the fact that we need \emph{all} 4 finger-veins to be correct to obtain a total of 5\% of false non-matching rate. 
Moreover, the last requirement is to check the validity of our estimation method in \cref{sec:cfmr}. Recall that if the $p$-value is larger than $0.05$, then we conclude that $\tCFMR$ can be estimated by $\tFMR$.

The following \cref{tab:experimental_result} summarizes our experimental result. 
For better readability we present the values of $\tFNMR$ and $\tCFMR$ where 4 finger-veins are simultaneously used, denoted as $\widetilde{\FNMR^4}$ and $\widetilde{\CFMR^4}$.   
For each dimension $n$, we varied the basis length $d$ (i.e., acceptance region $\AR$) to see its effect. 
For each dimension $n$, we chose three values for $d$ by targeting $\widetilde{\CFMR^4} = 2^{-80}, 2^{-128}$ and $\widetilde{\FNMR^4} = 5\%$, respectively. Although $112$-bits is the recommended security level for fuzzy signatures, we also benchmarked 80-bits of security since 80-bits may suffice in adversarially restricted scenarios, e.g., the system blocks the account after a few false attempts at signing. 
We also included the correlation coefficient $r$ of the $t$-test (see (\cref{eq:correlation})) to show that their absolute values are all below 0.2.

\begin{table}[htbp]
	\centering
	{\small
	\caption{ \small 
	$n$ denotes the dimension of fuzzy biometrics, $r$ is the correlation coefficient of the $t$-test, and $d$ denotes the basis length of the triangular lattice. } \label{tab:FNMR_and_CFMR}
	\begin{tabular}{|c |c | c| c |c | c|}
		\hline
		$n$ & $\widetilde{\FNMR^4} $ & $\widetilde{\CFMR^4}$ & $p$-value & $r$ & $d$   \\
		\hline \hline
		200 &  $ \textbf{2.4\%}$ & $ 2^{-80}$  & $0.35$ & $0.095$ & 43.4 \\\hline
		200 &  $ \textbf{5\%} $ & $2^{-106.6}$ & $0.27$ & $-0.111$&  39.6  \\ \hline
		200 &  $  9.7\% $ & ${\bf 2^{-128}}$  & $0.15$ & $0.146$ & 36.8 \\ \hline		
		300 &  $ \textbf{1.4\%} $ & $ 2^{-80}$  & $0.57$& $-0.057$& 44.6    \\\hline
		300 &  $ \textbf{5\%} $ & ${\bf 2^{-113.0}}$  & $0.89$ & $-0.015$ & 40.2  \\\hline
		300 &  $ 7.6\% $ & ${\bf 2^{-128}}$  & $0.78$ & $0.028$ & 38.4  \\\hline		
		400 &  $ \textbf{1.4\%}  $ & $ 2^{-80}$  & $0.50$ & $-0.068$  & 44.8  \\ \hline
		400 &  $ \textbf{5\%} $ & ${\bf 2^{-113.6}}$  & $0.65$ & $-0.046$  & 40.3   \\ \hline
		400 &  $ 8.0\% $ & ${\bf 2^{-128}}$  & $0.88$ & $-0.015$ & 38.6 \\
		\hline
	\end{tabular}
	}
	  \label{tab:experimental_result} 
\end{table}

The entries in bold-fonts in  \cref{tab:experimental_result} indicate those satisfying either $\widetilde{\CFMR^4} \le 2^{-112}$ or $\widetilde{\FNMR^4} \le 5\%$. 
When the dimension of the feature vectors of the finger-vein is $n = 300$ (resp. $n = 400$) and when the basis length is $d = 40.2$ (resp. $d = 40.3$), both conditions on $\widetilde{\CFMR^4} $ and $\widetilde{\FNMR^4}$ are satisfied. 
Therefore, our result indicates that 4 finger-veins are sufficient to provide the required properties to instantiate fuzzy signatures by taking those appropriate choices of $n$ and $d$. 
Since a larger dimension $n$ leads to a less efficient linear sketch scheme, taking $n = 300$ suffices. 
In addition, our experimental results also confirm the relationship between the size of $\AR$ and the tradeoff between $\widetilde{\CFMR^4} $ and $\widetilde{\FNMR^4}$. 
Observe that decreasing the size of $\AR$ (i.e., smaller $d$) has the effect of lowering $\widetilde{\CFMR^4}$ while increasing $\widetilde{\FNMR^4}$ as expected; a smaller $\AR$ makes it harder to impersonate while it also makes it more sensitive to measurement error. 
We note that although the false non-matching rate ($\widetilde{\FNMR^4}$) is typically set below 5~\% in practice, we  can tolerate a higher value of correctness error by allowing the signer to repeat until it succeeds. Therefore, in case we require a higher level of security such as 128-bits, then we can achieve this by increasing $\widetilde{\FNMR^4}$.

To see more closely the effect of varying the dimension $n$, we plot the  \emph{detection error tradeoff} (DET) curve \cite[Sec. 4.7]{ISO:06}. The case for $n = 200$ and $300$ is provided in \cref{fig:DET}. 
We refer the case for $n = 400$ to \cref{app:sec:DET} since it is similar to $n = 300$.
For each dimension $n$, the DET curve is plotted by varying the size of $d$ (hence $\AR$). 
Any region that lies above the DET curve is realizable.
For instance, since the coordinate indicating $(\widetilde{\FNMR^4}, \widetilde{\CFMR^4}) = (2^{-112}, 5 \%)$ (denoted as a red star in \cref{fig:DET}) is below the DET curve for $n = 200$, this means that there is no $d$ that satisfies the condition of $\widetilde{\FNMR^4}$ and $ \widetilde{\CFMR^4}$ when representing the finger-vein by a feature vector of only dimension $n = 200$. 
In contrast, for $n = 300$ and $400$, it can be checked that the there exists some choice of $d$ such that the conditions are satisfied since the red star is above their respective DET curves. 
\begin{figure}[htbp]
 \begin{minipage}{0.495\hsize}
  \begin{center}
	\includegraphics[ width=1.0\textwidth ]{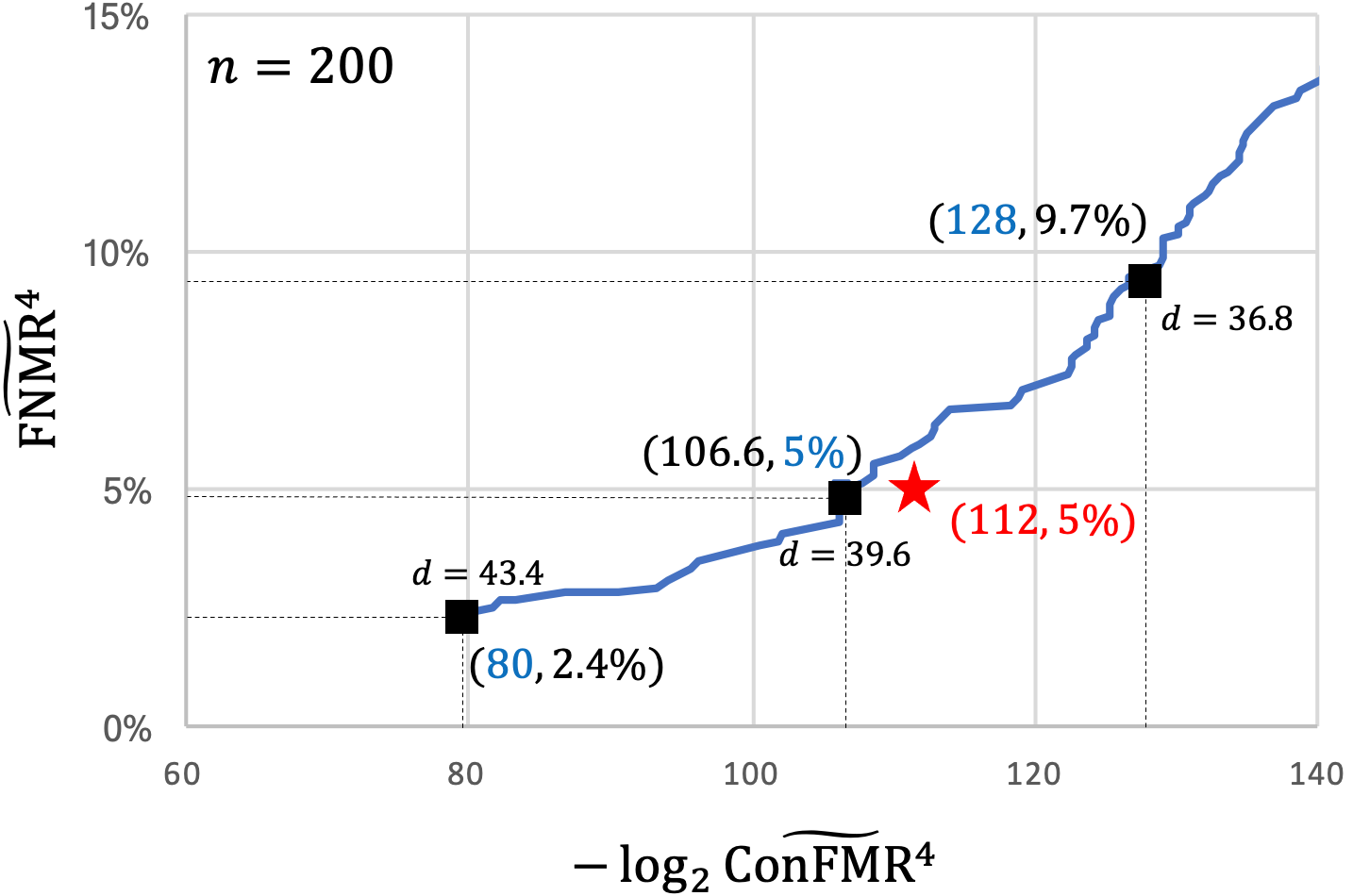}
  \end{center}
 \end{minipage}
 \begin{minipage}{0.495\hsize}
  \begin{center}
	\includegraphics[ width=1.0\textwidth ]{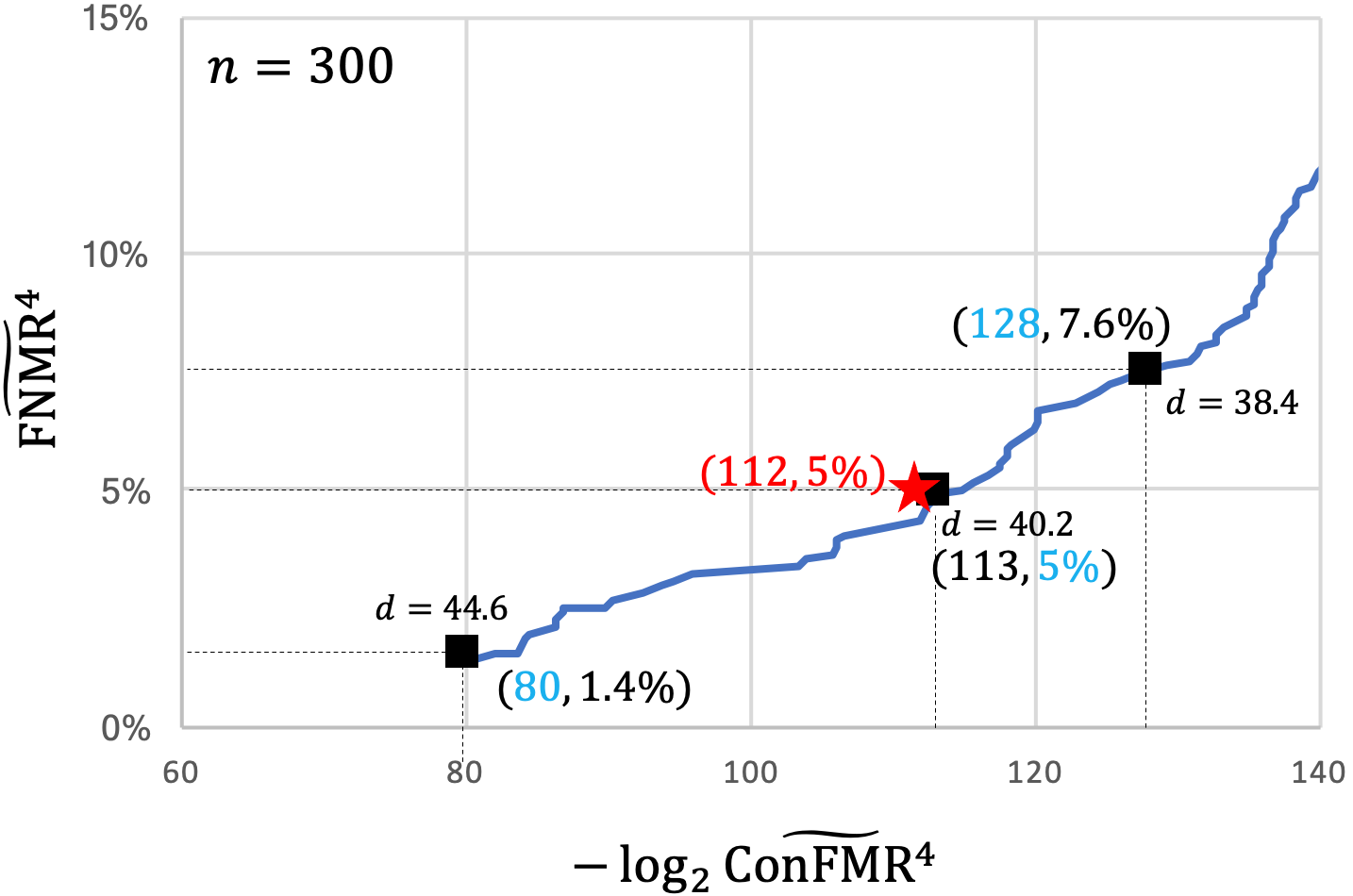}
  \end{center}
 \end{minipage}
 \caption{\small DET Curve for $n = 200$ (left) and $300$ (right).  }
 \label{fig:DET}
\end{figure}

Finally, we provide graphical evidence on the validity of our estimation of $\tFMR$ and $\tCFMR_t$ for $t \in [M]$; such method is one of the standard ways of assessing the quality of EVA \cite{BOOK:CBTD01,NIST:Michael12}. 
To perform EVA to estimate $\tFMR$, we set $k^* = 0.1\% \times k_{\max}$, where recall $k_{\max}$ is the number of total impostor pairs which is equal to $3,925,584$ for our dataset (see \cref{rem:choice}). 
We also set $M = 100$ to define the $M$-variant of $\set{\CFMR_t}_{t \in [M]}$ and perform EVA to estimate each $\tCFMR_t$ by setting $k^* = 0.5 \% \times k_{\max}$. 
The following \cref{fig:validitiy_EVA} illustrates the validity of our estimation for $\tFMR$ when the dimension $n = 300$ and $\tCFMR_t$ for $t = 50$. 
It can be visibly checked looking at the gray region that the estimation (in red line) aligns with the values of $w$ that we were able to measure with our database (in blue line). 
Hence, EVA allows us to conclude that the extremely small values that we were \emph{not} able to measure with our dataset can be approximated with our estimation. 
Additional experiments for other parameters are provided in \cref{app:sec:validity_EVA}. 

\begin{figure}[htbp]
 \begin{minipage}{0.495\hsize}
  \begin{center}
	\includegraphics[ width=1.0\textwidth ]{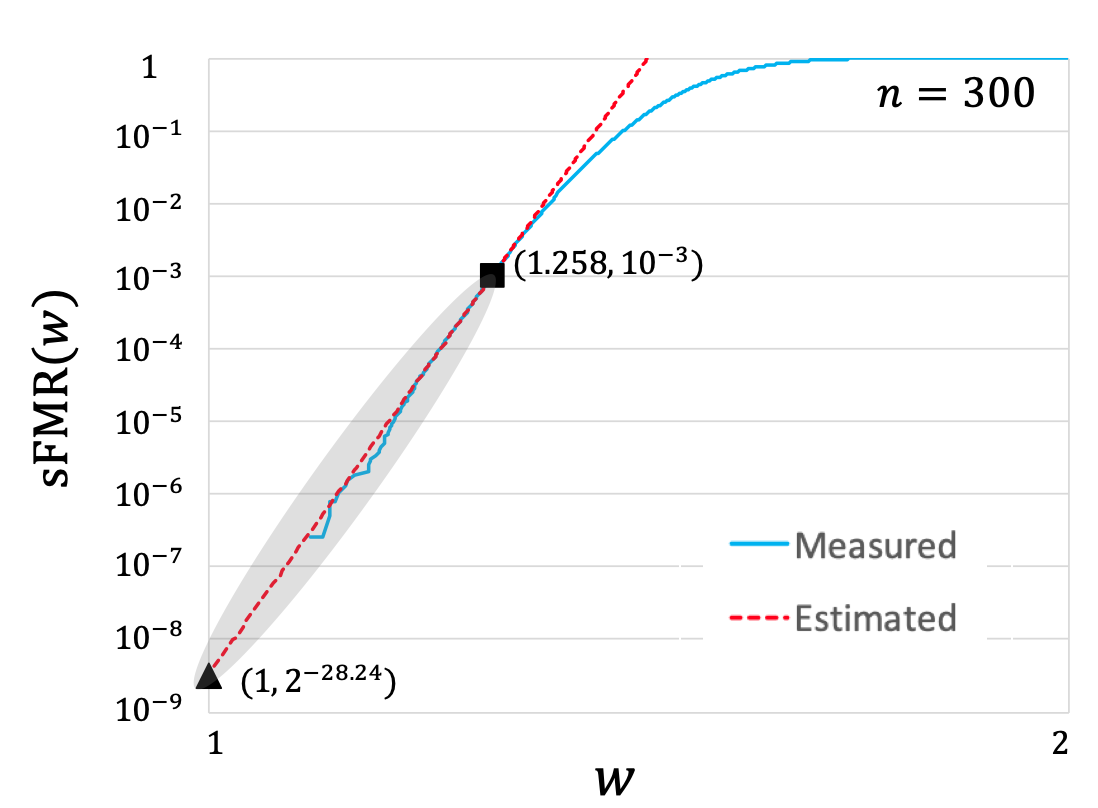}
  \end{center}
  \subcaption{$\sFMR(w)$ for $n = 300$}
  \label{fig:one}
 \end{minipage}
 \begin{minipage}{0.495\hsize}
  \begin{center}
	\includegraphics[ width=1.0\textwidth ]{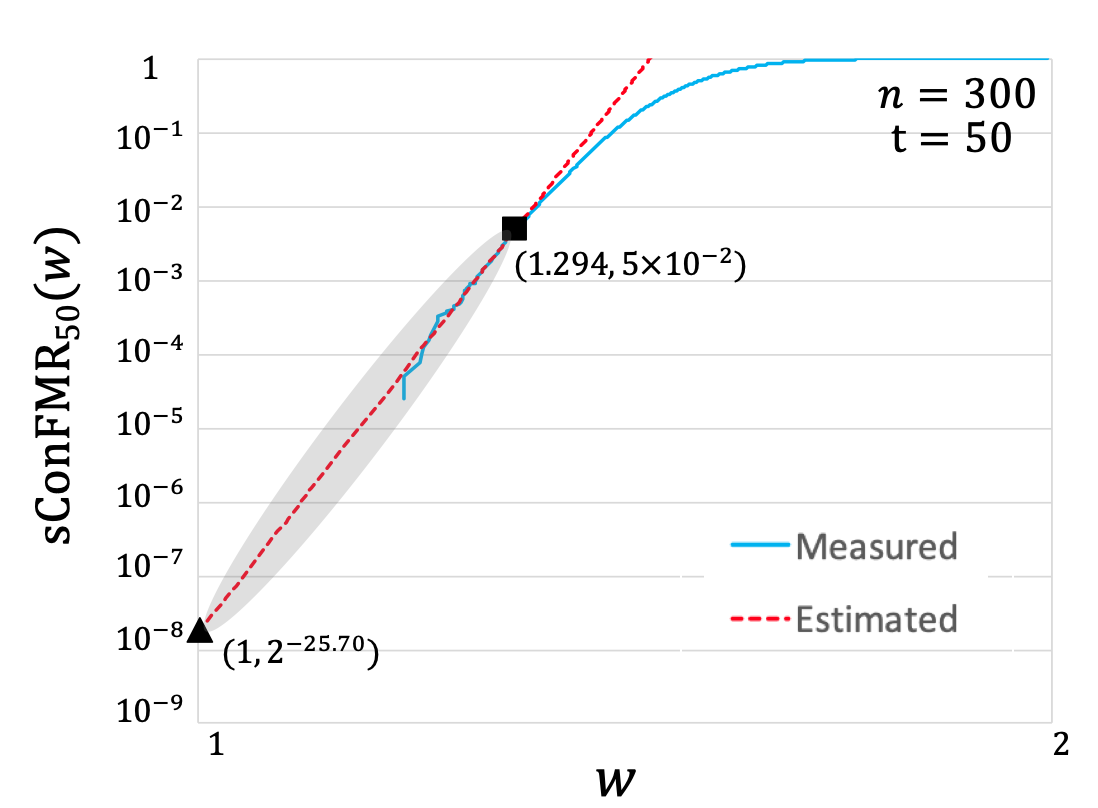}
  \end{center}
  \subcaption{$\sCFMR_{50}(w)$ for $n = 300$}
  \label{fig:two}
 \end{minipage}
 \caption{\small The blue line indicates the measured values of $\sFMR(w)$ and $\sCFMR_{50}(w)$ w.r.t to our dataset $S$. The red line indicates our estimation of the probability distribution of  $\sFMR(w)$ and $\sCFMR_{50}(w)$ via EVA. The gray region is the region for which EVA provides a reliable estimation. The square plots $(w^*, k^*)$ and the triangle plots $(1, X)$, where $X$ is the estimation for $\tFMR$ and $\tCFMR_{50}$.  }
 \label{fig:validitiy_EVA}
\end{figure}

\subsection{Efficiency Analysis of Our Fuzzy Signature} \label{sec:efficiency_analysis_sub}
We finish with a concrete analysis of our fuzzy signature scheme. 
We consider the 112 and 128-bit security levels (i.e., $\CFMR \le 2^{-112}$) and use the settings in \cref{tab:FNMR_and_CFMR} for $n = 300$ to define the fuzzy key setting. 
Recall from \cref{fig:fs_from_dl} that the verification key consists of one group element in $\GG$ and a sketch, and the signature consists of two elements in $\ZZ_p$ and a sketch. 
The running time is the sum of the individual runtime of the linear sketch and the Schnorr signature. 
Specifically, the only difference from the Schnorr signature is the linear sketch  component. 
\cref{tab:concrete_parameter} gives the concrete parameters. 
\begin{table}[htbp]
	\centering
	{\small
	\caption{ Benchmark for our fuzzy signature with $n = 300$.} \label{tab:concrete_parameter}
	\begin{tabular}{|c |c |c | c| c |c |}
		\hline 
		\multirow{2}{*}{\begin{tabular}{l}Sec.\\level\end{tabular}} & \multicolumn{2}{|c|}{Signature} & \multicolumn{2}{|c|} {Verification} & \multirow{2}{*}{\begin{tabular}{l}Correct\\ness err.\end{tabular}} \\ \cline{2-5}
		 & size (byte) & time (ms) & size (byte) & time (ms) & \\
		\hline\hline
		112 & 1256 & 0.50 & 1228 & 1.4 & $ 5.0\%$ \\
		128 & 1264 & 0.50 & 1232 & 1.4 & $7.6\%$ \\
		\hline
	\end{tabular}
	}
	  \label{tab:fuzzy_signature_parameters} 
\end{table} 
In more detail, the sketch has size $4n$ bytes in general, where $4$ bytes is used to represent each element by a 32-bit float. 
Plugging in $n = 300$, it can be checked that the size of the sketch dominates the signature and verification key size. 
The run time of $\Sketch$ and $\DiffRec$ are $0.45$ms and $1.3$ ms for both security levels%
\footnote{The only step dependent on the security parameter in our linear sketch scheme is the field size in $\linhash$, but its computation takes time that is at least two orders of magnitude smaller than computing $g_{\triLattice}$ or $\CV_{\triLattice}$, so its effect on run time is negligible.},
run on a machine with Intel(R) Core(TM) i7-8700K CPU at 3.70GHz.
Here, the universal hash $\linhash$ used within our linear sketch scheme (see \cref{fig:linear_sketch_construction}) simply computes the inner-product with a random $n$-dimensional vector over a prime field defined by the secret key space $\ZZ_p$ of the Schnorr signature scheme.
We also implement the Schnorr signature at the 112 and 128-bit security levels using elliptic curves with 224 and 256-bit primes, respectively, run on a machine with Intel(R) Core(TM) i7-1065G7 CPU at 1.30GHz.
For both security levels, the run times for signing and verification are at most several tens of microseconds, thus at least an order of magnitude smaller than the time taken by the linear sketch scheme.

We note that we can lower the sketch size by a factor of $2$ by representing the fuzzy biometrics by 16 bits rather than 32 bits.
In this case, the signature size will roughly be twice as small.
Here, treating less number of significant digits for the sketch value may affect the correctness (i.e., $\FNMR$) of the scheme, but not its security as formally discussed in \cite[Section 8]{IJIS:TMMHN19}.

\smallskip
\noindent
\textbf{Acknowledgement.} 
A part of this work was supported by JST CREST Grant Number JPMJCR19F6.




\begin{thebibliography}{37}


\ifx \showCODEN    \undefined \def \showCODEN     #1{\unskip}     \fi
\ifx \showDOI      \undefined \def \showDOI       #1{#1}\fi
\ifx \showISBNx    \undefined \def \showISBNx     #1{\unskip}     \fi
\ifx \showISBNxiii \undefined \def \showISBNxiii  #1{\unskip}     \fi
\ifx \showISSN     \undefined \def \showISSN      #1{\unskip}     \fi
\ifx \showLCCN     \undefined \def \showLCCN      #1{\unskip}     \fi
\ifx \shownote     \undefined \def \shownote      #1{#1}          \fi
\ifx \showarticletitle \undefined \def \showarticletitle #1{#1}   \fi
\ifx \showURL      \undefined \def \showURL       {\relax}        \fi
\providecommand\bibfield[2]{#2}
\providecommand\bibinfo[2]{#2}
\providecommand\natexlab[1]{#1}
\providecommand\showeprint[2][]{arXiv:#2}

\bibitem[\protect\citeauthoryear{Bellare and Neven}{Bellare and Neven}{2006}]%
        {CCS:BelNev06}
\bibfield{author}{\bibinfo{person}{Mihir Bellare} {and}
  \bibinfo{person}{Gregory Neven}.} \bibinfo{year}{2006}\natexlab{}.
\newblock \showarticletitle{Multi-signatures in the plain public-Key model and
  a general forking lemma}. In \bibinfo{booktitle}{\emph{ACM CCS 2006}},
  \bibfield{editor}{\bibinfo{person}{Ari Juels}, \bibinfo{person}{Rebecca~N.
  Wright}, {and} \bibinfo{person}{Sabrina {De Capitani di Vimercati}}} (Eds.).
  \bibinfo{publisher}{{ACM} Press}, \bibinfo{pages}{390--399}.
\newblock
\urldef\tempurl%
\url{https://doi.org/10.1145/1180405.1180453}
\showDOI{\tempurl}


\bibitem[\protect\citeauthoryear{Bishop}{Bishop}{2006}]%
        {BOOK:Bishop06}
\bibfield{author}{\bibinfo{person}{Christopher~M Bishop}.}
  \bibinfo{year}{2006}\natexlab{}.
\newblock \bibinfo{booktitle}{\emph{Pattern recognition and machine learning}}.
\newblock \bibinfo{publisher}{springer}.
\newblock


\bibitem[\protect\citeauthoryear{Bitansky and Canetti}{Bitansky and
  Canetti}{2010}]%
        {C:BitCan10}
\bibfield{author}{\bibinfo{person}{Nir Bitansky} {and} \bibinfo{person}{Ran
  Canetti}.} \bibinfo{year}{2010}\natexlab{}.
\newblock \showarticletitle{On Strong Simulation and Composable Point
  Obfuscation}. In \bibinfo{booktitle}{\emph{CRYPTO~2010}}
  \emph{(\bibinfo{series}{{LNCS}}, Vol.~\bibinfo{volume}{6223})},
  \bibfield{editor}{\bibinfo{person}{Tal Rabin}} (Ed.).
  \bibinfo{publisher}{Springer, Heidelberg}, \bibinfo{pages}{520--537}.
\newblock
\urldef\tempurl%
\url{https://doi.org/10.1007/978-3-642-14623-7_28}
\showDOI{\tempurl}


\bibitem[\protect\citeauthoryear{Coles, Bawa, Trenner, and Dorazio}{Coles
  et~al\mbox{.}}{2001}]%
        {BOOK:CBTD01}
\bibfield{author}{\bibinfo{person}{Stuart Coles}, \bibinfo{person}{Joanna
  Bawa}, \bibinfo{person}{Lesley Trenner}, {and} \bibinfo{person}{Pat
  Dorazio}.} \bibinfo{year}{2001}\natexlab{}.
\newblock \bibinfo{booktitle}{\emph{An introduction to statistical modeling of
  extreme values}}. Vol.~\bibinfo{volume}{208}.
\newblock \bibinfo{publisher}{Springer}.
\newblock


\bibitem[\protect\citeauthoryear{Daugman}{Daugman}{2003}]%
        {PR:Daugman03}
\bibfield{author}{\bibinfo{person}{John Daugman}.}
  \bibinfo{year}{2003}\natexlab{}.
\newblock \showarticletitle{The importance of being random: statistical
  principles of iris recognition}.
\newblock \bibinfo{journal}{\emph{Pattern recognition}} \bibinfo{volume}{36},
  \bibinfo{number}{2} (\bibinfo{year}{2003}), \bibinfo{pages}{279--291}.
\newblock


\bibitem[\protect\citeauthoryear{Daugman}{Daugman}{2004}]%
        {TCSV:Daugman04}
\bibfield{author}{\bibinfo{person}{John Daugman}.}
  \bibinfo{year}{2004}\natexlab{}.
\newblock \showarticletitle{How Iris Recognition Works}.
\newblock \bibinfo{journal}{\emph{IEEE Transactions on Circuits and Systems for
  Video Technology}} \bibinfo{volume}{14}, \bibinfo{number}{1}
  (\bibinfo{year}{2004}), \bibinfo{pages}{21--30}.
\newblock


\bibitem[\protect\citeauthoryear{Dodis, Ostrovsky, Reyzin, and Smith}{Dodis
  et~al\mbox{.}}{2008}]%
        {SICOMP:DORS08}
\bibfield{author}{\bibinfo{person}{Yevgeniy Dodis}, \bibinfo{person}{Rafail
  Ostrovsky}, \bibinfo{person}{Leonid Reyzin}, {and} \bibinfo{person}{Adam
  Smith}.} \bibinfo{year}{2008}\natexlab{}.
\newblock \showarticletitle{Fuzzy extractors: How to generate strong keys from
  biometrics and other noisy data}.
\newblock \bibinfo{journal}{\emph{SIAM journal on computing}}
  \bibinfo{volume}{38}, \bibinfo{number}{1} (\bibinfo{year}{2008}),
  \bibinfo{pages}{97--139}.
\newblock


\bibitem[\protect\citeauthoryear{Dodis, Reyzin, and Smith}{Dodis
  et~al\mbox{.}}{2004}]%
        {EC:DodReySmi04}
\bibfield{author}{\bibinfo{person}{Yevgeniy Dodis}, \bibinfo{person}{Leonid
  Reyzin}, {and} \bibinfo{person}{Adam Smith}.}
  \bibinfo{year}{2004}\natexlab{}.
\newblock \showarticletitle{Fuzzy Extractors: How to Generate Strong Keys from
  Biometrics and Other Noisy Data}. In
  \bibinfo{booktitle}{\emph{EUROCRYPT~2004}} \emph{(\bibinfo{series}{{LNCS}},
  Vol.~\bibinfo{volume}{3027})}, \bibfield{editor}{\bibinfo{person}{Christian
  Cachin} {and} \bibinfo{person}{Jan Camenisch}} (Eds.).
  \bibinfo{publisher}{Springer, Heidelberg}, \bibinfo{pages}{523--540}.
\newblock
\urldef\tempurl%
\url{https://doi.org/10.1007/978-3-540-24676-3_31}
\showDOI{\tempurl}


\bibitem[\protect\citeauthoryear{Dodis and Yu}{Dodis and Yu}{2013}]%
        {TCC:DodYu13}
\bibfield{author}{\bibinfo{person}{Yevgeniy Dodis} {and} \bibinfo{person}{Yu
  Yu}.} \bibinfo{year}{2013}\natexlab{}.
\newblock \showarticletitle{Overcoming Weak Expectations}. In
  \bibinfo{booktitle}{\emph{TCC~2013}} \emph{(\bibinfo{series}{{LNCS}},
  Vol.~\bibinfo{volume}{7785})}, \bibfield{editor}{\bibinfo{person}{Amit
  Sahai}} (Ed.). \bibinfo{publisher}{Springer, Heidelberg},
  \bibinfo{pages}{1--22}.
\newblock
\urldef\tempurl%
\url{https://doi.org/10.1007/978-3-642-36594-2_1}
\showDOI{\tempurl}


\bibitem[\protect\citeauthoryear{Doshi}{Doshi}{2018}]%
        {WP18}
\bibfield{author}{\bibinfo{person}{Vidhi Doshi}.}
  \bibinfo{year}{2018}\natexlab{}.
\newblock \bibinfo{title}{A security breach in India has left a billion people
  at risk of identity theft}.
\newblock
  \bibinfo{howpublished}{\url{https://www.washingtonpost.com/news/worldviews/wp/2018/01/04/a-security-breach-in-india-has-left-a-billion-people-at-risk-of-identity-theft/}}.
\newblock
\newblock
\shownote{Accessed: 2020-12-22.}


\bibitem[\protect\citeauthoryear{for Cybersecurity}{for Cybersecurity}{2020}]%
        {ENISA20}
\bibfield{author}{\bibinfo{person}{The European Union~Agency for
  Cybersecurity}.} \bibinfo{year}{2020}\natexlab{}.
\newblock \bibinfo{title}{ENISA Threat Landscape 2020 - Data Breach}.
\newblock
\newblock
\newblock
\shownote{\url{https://www.enisa.europa.eu/publications/enisa-threat-landscape-2020-data-breach}.
  Accessed: 2020-12-22.}


\bibitem[\protect\citeauthoryear{Gibbons and Chakraborti}{Gibbons and
  Chakraborti}{2014}]%
        {BOOK:GibCha14}
\bibfield{author}{\bibinfo{person}{Jean~Dickinson Gibbons} {and}
  \bibinfo{person}{Subhabrata Chakraborti}.} \bibinfo{year}{2014}\natexlab{}.
\newblock \bibinfo{booktitle}{\emph{Nonparametric Statistical Inference:
  Revised and Expanded}}.
\newblock \bibinfo{publisher}{CRC press}.
\newblock


\bibitem[\protect\citeauthoryear{Hanley and Lippman-Hand}{Hanley and
  Lippman-Hand}{1983}]%
        {JAMA:HanLip83}
\bibfield{author}{\bibinfo{person}{James~A Hanley} {and} \bibinfo{person}{Abby
  Lippman-Hand}.} \bibinfo{year}{1983}\natexlab{}.
\newblock \showarticletitle{If nothing goes wrong, is everything all right?:
  interpreting zero numerators}.
\newblock \bibinfo{journal}{\emph{JAMA}} \bibinfo{volume}{249},
  \bibinfo{number}{13} (\bibinfo{year}{1983}), \bibinfo{pages}{1743--1745}.
\newblock


\bibitem[\protect\citeauthoryear{H{\aa}stad, Impagliazzo, Levin, and
  Luby}{H{\aa}stad et~al\mbox{.}}{1999}]%
        {HILL99}
\bibfield{author}{\bibinfo{person}{Johan H{\aa}stad}, \bibinfo{person}{Russell
  Impagliazzo}, \bibinfo{person}{Leonid~A. Levin}, {and}
  \bibinfo{person}{Michael Luby}.} \bibinfo{year}{1999}\natexlab{}.
\newblock \showarticletitle{A Pseudorandom Generator from any One-way
  Function}.
\newblock \bibinfo{journal}{\emph{SIAM J. Comput.}} \bibinfo{volume}{28},
  \bibinfo{number}{4} (\bibinfo{year}{1999}), \bibinfo{pages}{1364--1396}.
\newblock


\bibitem[\protect\citeauthoryear{ISO/IEC 19795:2006}{ISO/IEC
  19795:2006}{2006}]%
        {ISO:06}
ISO/IEC 19795:2006 \bibinfo{year}{2006}\natexlab{}.
\newblock \bibinfo{booktitle}{\emph{{Information technology -- Biometric
  performance testing and reporting -- Part 1: Principles and framework}}}.
\newblock \bibinfo{type}{Standard}. \bibinfo{institution}{International
  Organization for Standardization}, \bibinfo{address}{Geneva, CH}.
\newblock


\bibitem[\protect\citeauthoryear{ISO/IEC 24745:2011}{ISO/IEC
  24745:2011}{2011}]%
        {ISO:11}
ISO/IEC 24745:2011 \bibinfo{year}{2011}\natexlab{}.
\newblock \bibinfo{booktitle}{\emph{{Information technology -- Security
  techniques -- Biometric information protection}}}.
\newblock \bibinfo{type}{Standard}. \bibinfo{institution}{International
  Organization for Standardization}, \bibinfo{address}{Geneva, CH}.
\newblock


\bibitem[\protect\citeauthoryear{ISO/IEC 30136:2018}{ISO/IEC
  30136:2018}{2018}]%
        {ISO:18}
ISO/IEC 30136:2018 \bibinfo{year}{2018}\natexlab{}.
\newblock \bibinfo{booktitle}{\emph{{Information technology -- Performance
  testing of biometric template protection schemes}}}.
\newblock \bibinfo{type}{Standard}. \bibinfo{institution}{International
  Organization for Standardization}, \bibinfo{address}{Geneva, CH}.
\newblock


\bibitem[\protect\citeauthoryear{Kaga, Fujio, Naganuma, Takahashi, Murakami,
  Ohki, and Nishigaki}{Kaga et~al\mbox{.}}{2017}]%
        {ISPEC:KFNTMO17}
\bibfield{author}{\bibinfo{person}{Yosuke Kaga}, \bibinfo{person}{Masakazu
  Fujio}, \bibinfo{person}{Ken Naganuma}, \bibinfo{person}{Kenta Takahashi},
  \bibinfo{person}{Takao Murakami}, \bibinfo{person}{Tetsushi Ohki}, {and}
  \bibinfo{person}{Masakatsu Nishigaki}.} \bibinfo{year}{2017}\natexlab{}.
\newblock \showarticletitle{A secure and practical signature scheme for
  blockchain based on biometrics}. In \bibinfo{booktitle}{\emph{International
  Conference on Information Security Practice and Experience}}. Springer,
  \bibinfo{pages}{877--891}.
\newblock


\bibitem[\protect\citeauthoryear{Kawakami and Hinata}{Kawakami and
  Hinata}{2019}]%
        {NA19}
\bibfield{author}{\bibinfo{person}{Takashi Kawakami} {and}
  \bibinfo{person}{Yusuke Hinata}.} \bibinfo{year}{2019}\natexlab{}.
\newblock \bibinfo{title}{Pay with your face: 100m Chinese switch from
  smartphones}.
\newblock
  \bibinfo{howpublished}{\url{https://asia.nikkei.com/Business/China-tech/Pay-with-your-face-100m-Chinese-switch-from-smartphones}}.
\newblock
\newblock
\shownote{Accessed: 2020-1-14.}


\bibitem[\protect\citeauthoryear{Kumar and Zhou}{Kumar and Zhou}{2011}]%
        {TIP:KumZho11}
\bibfield{author}{\bibinfo{person}{Ajay Kumar} {and} \bibinfo{person}{Yingbo
  Zhou}.} \bibinfo{year}{2011}\natexlab{}.
\newblock \showarticletitle{Human identification using finger images}.
\newblock \bibinfo{journal}{\emph{IEEE Transactions on image processing}}
  \bibinfo{volume}{21}, \bibinfo{number}{4} (\bibinfo{year}{2011}),
  \bibinfo{pages}{2228--2244}.
\newblock


\bibitem[\protect\citeauthoryear{Matsuda, Takahashi, Murakami, and
  Hanaoka}{Matsuda et~al\mbox{.}}{2016}]%
        {ACNS:MTMH16}
\bibfield{author}{\bibinfo{person}{Takahiro Matsuda}, \bibinfo{person}{Kenta
  Takahashi}, \bibinfo{person}{Takao Murakami}, {and} \bibinfo{person}{Goichiro
  Hanaoka}.} \bibinfo{year}{2016}\natexlab{}.
\newblock \showarticletitle{Fuzzy Signatures: Relaxing Requirements and a New
  Construction}. In \bibinfo{booktitle}{\emph{ACNS 16}}
  \emph{(\bibinfo{series}{{LNCS}}, Vol.~\bibinfo{volume}{9696})},
  \bibfield{editor}{\bibinfo{person}{Mark Manulis}, \bibinfo{person}{Ahmad-Reza
  Sadeghi}, {and} \bibinfo{person}{Steve Schneider}} (Eds.).
  \bibinfo{publisher}{Springer, Heidelberg}, \bibinfo{pages}{97--116}.
\newblock
\urldef\tempurl%
\url{https://doi.org/10.1007/978-3-319-39555-5_6}
\showDOI{\tempurl}


\bibitem[\protect\citeauthoryear{Miura, Nagasaka, and Miyatake}{Miura
  et~al\mbox{.}}{2002}]%
        {MVA:MiuNagMiy02}
\bibfield{author}{\bibinfo{person}{Naoto Miura}, \bibinfo{person}{Akio
  Nagasaka}, {and} \bibinfo{person}{Takafumi Miyatake}.}
  \bibinfo{year}{2002}\natexlab{}.
\newblock \showarticletitle{Automatic Feature Extraction from non-uniform
  Finger Vein Image and its Application to Personal Identification.}. In
  \bibinfo{booktitle}{\emph{MVA}}. Citeseer, \bibinfo{pages}{253--256}.
\newblock


\bibitem[\protect\citeauthoryear{Murakami, Kaga, and Takahashi}{Murakami
  et~al\mbox{.}}{2016}]%
        {IEICE:MurKagTak16}
\bibfield{author}{\bibinfo{person}{Takao Murakami}, \bibinfo{person}{Yosuke
  Kaga}, {and} \bibinfo{person}{Kenta Takahashi}.}
  \bibinfo{year}{2016}\natexlab{}.
\newblock \showarticletitle{Information-theoretic performance evaluation of
  multibiometric fusion under modality selection attacks}.
\newblock \bibinfo{journal}{\emph{IEICE Transactions on Fundamentals of
  Electronics, Communications and Computer Sciences}} \bibinfo{volume}{99},
  \bibinfo{number}{5} (\bibinfo{year}{2016}), \bibinfo{pages}{929--942}.
\newblock


\bibitem[\protect\citeauthoryear{Nandakumar, Jain, and Ross}{Nandakumar
  et~al\mbox{.}}{2009a}]%
        {ICBTAS:NanRosJai09}
\bibfield{author}{\bibinfo{person}{Karthik Nandakumar}, \bibinfo{person}{Anil~K
  Jain}, {and} \bibinfo{person}{Arun Ross}.} \bibinfo{year}{2009}\natexlab{a}.
\newblock \showarticletitle{Biometric fusion: Does modeling correlation really
  matter?}. In \bibinfo{booktitle}{\emph{International Conference on
  Biometrics: Theory, Applications, and Systems}}. IEEE, \bibinfo{pages}{1--6}.
\newblock


\bibitem[\protect\citeauthoryear{Nandakumar, Jain, and Ross}{Nandakumar
  et~al\mbox{.}}{2009b}]%
        {ICB:NanJaiRos09}
\bibfield{author}{\bibinfo{person}{Karthik Nandakumar}, \bibinfo{person}{Anil~K
  Jain}, {and} \bibinfo{person}{Arun Ross}.} \bibinfo{year}{2009}\natexlab{b}.
\newblock \showarticletitle{Fusion in multibiometric identification systems:
  What about the missing data?}. In \bibinfo{booktitle}{\emph{International
  Conference on Biometrics}}. Springer, \bibinfo{pages}{743--752}.
\newblock


\bibitem[\protect\citeauthoryear{Nations}{Nations}{2020}]%
        {UN20b}
\bibfield{author}{\bibinfo{person}{United Nations}.}
  \bibinfo{year}{2020}\natexlab{}.
\newblock \bibinfo{title}{Report of the Secretary-General Roadmap for Digital
  Cooperation}.
\newblock
\newblock
\newblock
\shownote{\url{https://www.un.org/en/content/digital-cooperation-roadmap/assets/pdf/Roadmap_for_Digital_Cooperation_EN.pdf}.}


\bibitem[\protect\citeauthoryear{of~India}{of~India}{2019}]%
        {Aadhaar}
\bibfield{author}{\bibinfo{person}{Government of India}.}
  \bibinfo{year}{2019}\natexlab{}.
\newblock \bibinfo{title}{What is Aadhar}.
\newblock
\newblock
\newblock
\shownote{\url{https://uidai.gov.in/my-aadhaar/about-your-aadhaar.html}.
  Accessed: 2021-1-18.}


\bibitem[\protect\citeauthoryear{Pointcheval and Stern}{Pointcheval and
  Stern}{2000}]%
        {JC:PoiSte00}
\bibfield{author}{\bibinfo{person}{David Pointcheval} {and}
  \bibinfo{person}{Jacques Stern}.} \bibinfo{year}{2000}\natexlab{}.
\newblock \showarticletitle{Security Arguments for Digital Signatures and Blind
  Signatures}.
\newblock \bibinfo{journal}{\emph{Journal of Cryptology}} \bibinfo{volume}{13},
  \bibinfo{number}{3} (\bibinfo{date}{June} \bibinfo{year}{2000}),
  \bibinfo{pages}{361--396}.
\newblock
\urldef\tempurl%
\url{https://doi.org/10.1007/s001450010003}
\showDOI{\tempurl}


\bibitem[\protect\citeauthoryear{Schnorr}{Schnorr}{1990}]%
        {C:Schnorr89}
\bibfield{author}{\bibinfo{person}{Claus-Peter Schnorr}.}
  \bibinfo{year}{1990}\natexlab{}.
\newblock \showarticletitle{Efficient Identification and Signatures for Smart
  Cards}. In \bibinfo{booktitle}{\emph{CRYPTO'89}}
  \emph{(\bibinfo{series}{{LNCS}}, Vol.~\bibinfo{volume}{435})},
  \bibfield{editor}{\bibinfo{person}{Gilles Brassard}} (Ed.).
  \bibinfo{publisher}{Springer, Heidelberg}, \bibinfo{pages}{239--252}.
\newblock
\urldef\tempurl%
\url{https://doi.org/10.1007/0-387-34805-0_22}
\showDOI{\tempurl}


\bibitem[\protect\citeauthoryear{Schuckers}{Schuckers}{2012}]%
        {NIST:Michael12}
\bibfield{author}{\bibinfo{person}{Michael Schuckers}.}
  \bibinfo{year}{2012}\natexlab{}.
\newblock \bibinfo{title}{Scaling of Biometric False Match Rates Using Extreme
  Value Theory}.  (\bibinfo{year}{2012}).
\newblock
\urldef\tempurl%
\url{https://www.nist.gov/system/files/documents/2016/11/30/345_schuckers_ibpc.pdf}
\showURL{%
\tempurl}
\newblock
\shownote{International Biometrics Performance Conference (NIST).}


\bibitem[\protect\citeauthoryear{Shoup}{Shoup}{1997}]%
        {EC:Shoup97}
\bibfield{author}{\bibinfo{person}{Victor Shoup}.}
  \bibinfo{year}{1997}\natexlab{}.
\newblock \showarticletitle{Lower Bounds for Discrete Logarithms and Related
  Problems}. In \bibinfo{booktitle}{\emph{EUROCRYPT'97}}
  \emph{(\bibinfo{series}{{LNCS}}, Vol.~\bibinfo{volume}{1233})},
  \bibfield{editor}{\bibinfo{person}{Walter Fumy}} (Ed.).
  \bibinfo{publisher}{Springer, Heidelberg}, \bibinfo{pages}{256--266}.
\newblock
\urldef\tempurl%
\url{https://doi.org/10.1007/3-540-69053-0_18}
\showDOI{\tempurl}


\bibitem[\protect\citeauthoryear{Takahashi, Matsuda, Murakami, Hanaoka, and
  Nishigaki}{Takahashi et~al\mbox{.}}{2015}]%
        {ACNS:TMMHN15}
\bibfield{author}{\bibinfo{person}{Kenta Takahashi}, \bibinfo{person}{Takahiro
  Matsuda}, \bibinfo{person}{Takao Murakami}, \bibinfo{person}{Goichiro
  Hanaoka}, {and} \bibinfo{person}{Masakatsu Nishigaki}.}
  \bibinfo{year}{2015}\natexlab{}.
\newblock \showarticletitle{A Signature Scheme with a Fuzzy Private Key}. In
  \bibinfo{booktitle}{\emph{ACNS 15}} \emph{(\bibinfo{series}{{LNCS}},
  Vol.~\bibinfo{volume}{9092})}, \bibfield{editor}{\bibinfo{person}{Tal
  Malkin}, \bibinfo{person}{Vladimir Kolesnikov},
  \bibinfo{person}{Allison~Bishop Lewko}, {and} \bibinfo{person}{Michalis
  Polychronakis}} (Eds.). \bibinfo{publisher}{Springer, Heidelberg},
  \bibinfo{pages}{105--126}.
\newblock
\urldef\tempurl%
\url{https://doi.org/10.1007/978-3-319-28166-7_6}
\showDOI{\tempurl}


\bibitem[\protect\citeauthoryear{Takahashi, Matsuda, Murakami, Hanaoka, and
  Nishigaki}{Takahashi et~al\mbox{.}}{2019}]%
        {IJIS:TMMHN19}
\bibfield{author}{\bibinfo{person}{Kenta Takahashi}, \bibinfo{person}{Takahiro
  Matsuda}, \bibinfo{person}{Takao Murakami}, \bibinfo{person}{Goichiro
  Hanaoka}, {and} \bibinfo{person}{Masakatsu Nishigaki}.}
  \bibinfo{year}{2019}\natexlab{}.
\newblock \showarticletitle{Signature schemes with a fuzzy private key}.
\newblock \bibinfo{journal}{\emph{International Jounal of Information
  Security}}  \bibinfo{volume}{18} (\bibinfo{year}{2019}),
  \bibinfo{pages}{581--617}.
\newblock


\bibitem[\protect\citeauthoryear{Tao and Veldhuis}{Tao and Veldhuis}{2012}]%
        {TIFS:TaoVel12}
\bibfield{author}{\bibinfo{person}{Qian Tao} {and} \bibinfo{person}{Raymond
  Veldhuis}.} \bibinfo{year}{2012}\natexlab{}.
\newblock \showarticletitle{Robust biometric score fusion by naive likelihood
  ratio via receiver operating characteristics}.
\newblock \bibinfo{journal}{\emph{IEEE transactions on information forensics
  and security}} \bibinfo{volume}{8}, \bibinfo{number}{2}
  (\bibinfo{year}{2012}), \bibinfo{pages}{305--313}.
\newblock


\bibitem[\protect\citeauthoryear{VISA}{VISA}{2017}]%
        {Visa19}
\bibfield{author}{\bibinfo{person}{VISA}.} \bibinfo{year}{2017}\natexlab{}.
\newblock \bibinfo{title}{Goodbye, passwords. Hello, biometrics.}
\newblock
\newblock
\newblock
\shownote{\url{https://usa.visa.com/dam/VCOM/global/visa-everywhere/documents/visa-biometrics-payments-study.pdf}.
  Accessed: 2021-1-14.}


\bibitem[\protect\citeauthoryear{Yanagawa, Aoki, and Oyama}{Yanagawa
  et~al\mbox{.}}{2009}]%
        {BIC:YanAokOya09}
\bibfield{author}{\bibinfo{person}{Takashio Yanagawa}, \bibinfo{person}{Satoshi
  Aoki}, {and} \bibinfo{person}{Tetsuji Oyama}.}
  \bibinfo{year}{2009}\natexlab{}.
\newblock \showarticletitle{Diversity of human finger vein patterns and its
  application to personal identification}.
\newblock \bibinfo{journal}{\emph{Bulletin of informatics and cybernetics}}
  \bibinfo{volume}{41} (\bibinfo{year}{2009}), \bibinfo{pages}{1--9}.
\newblock


\bibitem[\protect\citeauthoryear{Yin, Liu, and Sun}{Yin et~al\mbox{.}}{2011}]%
        {CCBR:YinLiuSun11}
\bibfield{author}{\bibinfo{person}{Yilong Yin}, \bibinfo{person}{Lili Liu},
  {and} \bibinfo{person}{Xiwei Sun}.} \bibinfo{year}{2011}\natexlab{}.
\newblock \showarticletitle{SDUMLA-HMT: a multimodal biometric database}. In
  \bibinfo{booktitle}{\emph{Chinese Conference on Biometric Recognition}}.
  Springer, \bibinfo{pages}{260--268}.
\newblock


\end{thebibliography}


\appendix


\section{Leftover Hash Lemma} \label{app_sec:lhl}

We recall the leftover hash lemma of \cite{SICOMP:DORS08}.
To see the connection with the explanation in \cref{sec:analysis_dlsketch}, we state it using conditional collision probability.

Recall that for a joint distribution $(\mathcal{X}, \mathcal{Y})$, the (average) conditional collision probability of $\mathcal{X}$ given $\mathcal{Y}$ is defined by $\COL(\mathcal{X}|\mathcal{Y}) = \Pr_{(x,y), (x', y') \samp (\mathcal{X}, \mathcal{Y})}[x = x' | y = y']$.

Recall also that the statistical distance between two distributions $\mathcal{X}$ and $\mathcal{Y}$ is defined by $\SD(\mathcal{X}, \mathcal{Y}) := \frac{1}{2} \sum_{z}\abs{\Pr[\mathcal{X} = z] - \Pr[\mathcal{Y} = z]}$.
It is known that $\SD(\mathcal{X}, \mathcal{Y})$ upper-bounds the best (computationally unbounded) adversary's advantage in distinguishing the distribution using a single sample.

\begin{lemma}[Slightly adapted from \cite{SICOMP:DORS08}]
Let $\linHashFamily = \{\linhash: D \to R\}$ be a family of universal hash functions.
Let $\mathcal{X}$ and $\mathcal{Y}$ be distributions such that $(\mathcal{X}, \mathcal{Y})$ forms a joint distribution, and the support of $\mathcal{X}$ is contained in $D$.
Then, the statistical distance of the following two distributions is at most $\frac{1}{2} \sqrt{|R| \cdot \COL(\mathcal{X}|\mathcal{Y})}$:
\begin{align*}
&\bigl\{ \linhash \samp \linHashFamily;~ (x, y) \samp (\mathcal{X}, \mathcal{Y}): (\linhash, \linhash(x), y) \bigr\},\\
&\bigl\{ \linhash \samp \linHashFamily;~ (x, y) \samp (\mathcal{X}, \mathcal{Y});~ r \samp R: (\linhash, r, y) \bigr\}.
\end{align*}
In particular, if $\COL(\mathcal{X}|\mathcal{Y}) \le |R|^{-1} \cdot 2^{-\omega(\log \secpar)}$, then the statistical distance is $\negl(\secpar)$.
\end{lemma}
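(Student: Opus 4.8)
The plan is to reproduce the standard proof of the (average-case) leftover hash lemma, funneling the whole argument through \emph{collision probability}, which is the quantity on which $2$-universality acts transparently, and only at the end converting back to statistical distance. The one external ingredient is the elementary inequality that for any random variable $Z$ on a finite set $\mathcal{S}$ with law $p$, $\SD(Z, U_{\mathcal S}) \le \frac12\sqrt{|\mathcal S|\cdot\COL(Z) - 1}$, where $U_{\mathcal S}$ is uniform on $\mathcal S$; this is immediate from Cauchy--Schwarz, since $\SD(Z,U_{\mathcal S}) = \frac12\norm{p-u}_1 \le \frac12\sqrt{|\mathcal S|}\,\norm{p-u}_2$ and $\norm{p-u}_2^2 = \COL(Z) - 1/|\mathcal S|$, with $u$ the uniform law.

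Because the seed $\linhash$ is drawn independently of $(\mathcal X,\mathcal Y)$ and the two distributions in the statement share the same marginal on the pair $(\linhash, y)$, I would first peel off the side information by conditioning, writing the target statistical distance as $\Ex_{y \samp \mathcal Y}\big[\SD(Z_y,\, U_{\linHashFamily\times R})\big]$, where $Z_y := (\linhash, \linhash(x))$ for $\linhash \samp \linHashFamily$ and $x$ sampled from $\mathcal X$ conditioned on $\mathcal Y = y$. The core computation is then $\COL(Z_y)$: two independent copies of $Z_y$ collide iff their seeds coincide, which happens with probability $1/|\linHashFamily|$, and their outputs then coincide; splitting the output collision on whether the two preimages are equal and invoking universality of $\linHashFamily$ for the unequal case gives $\Pr_{x,x',\linhash}[\linhash(x)=\linhash(x')] \le \COL(\mathcal X \mid y) + 1/|R|$. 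Hence $|\linHashFamily|\cdot|R|\cdot\COL(Z_y) - 1 \le |R|\cdot\COL(\mathcal X\mid y)$, and feeding this into the elementary bound yields the per-$y$ estimate $\SD(Z_y, U_{\linHashFamily\times R}) \le \frac12\sqrt{|R|\cdot\COL(\mathcal X\mid y)}$.

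It remains to average over $y$ and apply Jensen's inequality (the square root is concave) to move the expectation inside the radical, producing $\frac12\sqrt{|R|\cdot \Ex_{y\samp\mathcal Y}[\COL(\mathcal X\mid y)]}$; identifying the averaged collision quantity with $\COL(\mathcal X\mid\mathcal Y)$ gives the claimed bound, and the ``in particular'' clause follows by substituting $\COL(\mathcal X\mid\mathcal Y) \le |R|^{-1}\cdot 2^{-\omega(\log\secpar)}$, which makes the radical $2^{-\omega(\log\secpar)}$ and hence the distance $\negl(\secpar)$. I expect the only delicate step to be this last averaging: one must be careful that the conditional analysis naturally produces the $\mathcal Y$-weighted average $\Ex_{y\samp\mathcal Y}[\COL(\mathcal X\mid y)]$ of the per-outcome collision probabilities, and confirm that this coincides with the conditional collision probability $\COL(\mathcal X\mid\mathcal Y)$ used in the statement before the Jensen bound is invoked. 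The universality computation and the Cauchy--Schwarz reduction are entirely routine.
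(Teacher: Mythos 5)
Your proof is correct in substance and in fact supplies more than the paper does: the paper never proves this lemma, it only asserts in the surrounding text that the statement ``can be easily inferred'' from the proof of Lemma~2.4 of \cite{SICOMP:DORS08} together with the fact that the basic leftover hash lemma works with collision entropy. Your route --- the Cauchy--Schwarz bound $\SD(Z,U_{\mathcal S})\le\frac{1}{2}\sqrt{|\mathcal S|\cdot\COL(Z)-1}$, the per-$y$ collision computation $|\linHashFamily|\cdot|R|\cdot\COL(Z_y)\le |R|\cdot\COL(\mathcal X\mid y)+1$ via universality, the decomposition of the statistical distance as $\Ex_{y\samp\mathcal Y}[\SD(Z_y,U)]$ (valid because both distributions share the marginal on $(\linhash,y)$), and Jensen --- is the standard self-contained argument, and every step is sound.

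The step you flag as delicate is indeed the crux, and it should be resolved rather than only noted. Your argument yields the bound $\frac{1}{2}\sqrt{|R|\cdot\Ex_{y\samp\mathcal Y}[\COL(\mathcal X\mid y)]}$, i.e.\ the $\Pr[\mathcal Y=y]$-weighted average of the per-outcome collision probabilities; this weighting is forced on you by the $\SD$ decomposition. The paper's displayed formula $\COL(\mathcal X|\mathcal Y)=\Pr[(x,y),(x',y')\samp(\mathcal X,\mathcal Y):x=x'\mid y=y']$, read literally as a conditional probability over two independent joint samples, instead equals $\sum_y\Pr[y]^2\COL(\mathcal X\mid y)\,/\sum_y\Pr[y]^2$, which weights each $y$ by $\Pr[y]^2$ and can be strictly smaller. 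Under that literal reading the lemma is false: let $\mathcal Y$ put mass $1-c$ on one value under which $\mathcal X$ is uniform over a huge domain, and mass $c/k$ on each of $k$ values under which $\mathcal X$ is a point mass; the true distance stays near $c(1-1/|R|)$ while the literal $\COL(\mathcal X|\mathcal Y)$ tends to $0$ as $k$ grows. So the definition must be read as ``sample $y$, then two independent copies of $\mathcal X$ conditioned on that same $y$,'' which is exactly $\Ex_{y\samp\mathcal Y}[\COL(\mathcal X\mid y)]$ and matches the average-case convention of \cite{SICOMP:DORS08}. With that reading your final identification is an equality and your proof is complete.
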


Strictly speaking, \cite{SICOMP:DORS08} showed the above lemma using the (average) conditional min-entropy (rather than conditional collision entropy/probabiltiy).
However, the above lemma can be easily inferred from the proof of \cite[Lemma~2.4]{SICOMP:DORS08} and the fact that the most basic form of the leftover hash lemma \cite{HILL99} (without taking into account the existence of $\mathcal{Y}$) works with collision entropy.

\section{Omitted Proof of Our Fuzzy Signature $\ProtFSig^\DL$}
\label{app_sec:fs_from_dl}

\subsection{Omitted Proof of Correctness: \cref{thm:dl_correctness}}
The complete proof of \cref{thm:dl_correctness} is provided below. 
It establishes the correctness of our fuzzy signature~$\ProtFSig^\DL$. 

\begin{proof}
Recall that by the definition of the fuzzy key setting $\FuzKeySet$, we have $\Pr[x \samp \FuzDataDist; e \samp \ErrDist : x + e \in \AR(x)] \ge 1 - \ErrFNMR$.
Hence, to show correctness, it is sufficient to show that if $x' \in \AR(x)$, then a signature generated using $x'$ is always accepted under a verification key generated using $x$.

Fix arbitrarily a message $\msg$ and fuzzy data $x, x' \in \FuzDataSpace$ such that $x' \in \AR(x)$.
Let $(\sketch, a) = \Sketch(\LinSpp, x)$ and $(\tsketch, \ta) = \Sketch(\LinSpp, x')$.
Also, let $\FSigpp = (\GGG, \LinSpp) \lruns \FSigSetup(1^\secpar, \FuzKeySet)$, $\FSigvk = (h = g^a, \sketch) \lruns \FSigKeyGen(\FSigpp, x)$, and $\FSigsig = (\beta, z, \tsketch) \lruns \FSigSign(\FSigpp, x', \msg)$, where $\beta = \hash(g^{\ta}, g^r, \msg)$ and $z = \beta \cdot \ta + r$.

Now, consider an execution of $\FSigVrfy(\FSigpp, \allowbreak \FSigvk, \allowbreak \msg, \allowbreak \FSigsig)$.
Since $x' \in \AR(x)$, the correctness of $\ProtLinS$ implies $\Delta a = \DiffRec(\LinSpp, \sketch, \tsketch) = \ta - a$.
Hence, $\FSigVrfy$ sets $\tildeh = h \cdot g^{\Delta a} = g^{a + (a' - a)} = g^{\ta}$ and $R = g^z \cdot \tildeh^{-\beta} = g^{\beta \cdot \ta + r} \cdot g^{- \ta \cdot \beta} = g^r$.
Hence, $\beta = \hash(g^{\ta}, g^r, \msg) = \hash(\tildeh, R, \msg)$ holds, and consequently $\FSigVrfy$ outputs $\top$, as desired.
\end{proof}

\subsection{Omitted Proof of Security: \cref{thm:dl_eucma}}
The complete proof of \cref{thm:dl_eucma} is provided below. 
It establishes the security of our fuzzy signature~$\ProtFSig^\DL$  under the $\DLsketch$ assumption. 

\begin{proof}[Proof Overview]
Before diving into the full proof, we provide an overview. 
The proof is similar to that of Schnorr signature \cite{C:Schnorr89}. 
The main difference is that in our proof, we additionally have to simulate the sketch $\sketch$ without knowledge of the secret fuzzy data $x$. 
To this end, we use the \emph{linearity} of the linear sketch (see \cref{def:linear_sketch}) that informally stipulates that given a sketch $\sketch$ where $(\sketch,a) \gets \Sketch(\LinSpp, x)$, there exists an algorithm $\Simsketch$ that simulates a fresh sketch $\tsketch$ for a proxy key $\tilde{a} $ with knowledge of only $(c, \Delta a := \tilde{a} - a)$. 
Specifically, since the $\DLsketch$ problem implicitly provides us with an ``initial'' sketch $\sketch$ of the proxy key (or secret exponent) $a$, we can easily construct an adversary $\B$ against the $\DLsketch$ problem that simulates the $\eucma$ security game to an adversary~$\A$ by running $\Simsketch$.  The full proof follows. 
\end{proof}

\begin{proof}
Let $\A$ be any PPT adversary against the fuzzy signature scheme $\ProtFSig^\DL$ that makes at most $Q$-signing queries and $Q_\hash$-random oracle queries and breaks the $\eucma$ security with probability $\epsilon$. 
Consider the following sequence of games, where the first game is equivalent to the original $\eucma$ game. Let $\Event_i$ denote the event that $\A$ wins in $\game_i$.

\noindent -\ $\game_1$: We define $\game_1$ as the actual game played between the challenger and the adversary $\A$. By assumption the winning probability of $\A$ in this game is $\Pr[\Event_1] = \epsilon$. In this game, the public parameter $\FSigpp$ and the verification key $\FSigvk$ are generated as follows: 
{\small
	\begin{align}
		\big[ &\Sigpp \lruns \SigSetup(1^\secpar);~ \LinSpp \lruns \LinSSetup(\FuzKeySet, \KeyDesc);~ \FSigpp \gets (\Sigpp, \LinSpp); \nonumber \\
		&~~  ~ x \lruns \FuzDataDist;~ (\sketch, a) \gets \Sketch(\LinSpp, x);~ \FSigvk = ( h \gets g^a, \sketch)  \big]. \label{eq:sim_keygen_dl} 
	\end{align}
}
	Furthermore, when $\A$ makes the $i$-th signing query (for $i \in [Q]$) on message $\msg_i$, the challenger generates a signature ${\FSigsig}_i$ as follows:
{\small
	\begin{align*}
		\big[ &e_i \samp \ErrDist;~ (\tsketch_i, \ta_i) \lruns \Sketch(\LinSpp, x + e_i);~ r_i \gets \ZZ_p;\\&~  \beta_i \gets \hash( g^{\ta_i}, g^{r_i}, \msg_i );~ z_i  \gets \beta_i \cdot \ta_i + r_i ;~  {\FSigsig}_i = (\beta_i, z_i, \tsketch_i) \big].
	\end{align*}
}
Throughout the proof, we call $g^{\ta_i} (= \tildeh_i )$ and $\ta_i$ as \emph{ephemeral} verification and signing keys, respectively, as it can be seen as an intermediate key used during the signing phase. 

\noindent -\ $\game_2$: In this game, we change how the signing queries are answered by the challenger.
Instead of using the ephemeral signing key $\ta_i$ to create the sketch $\tsketch_i$ as in the previous game, the challenger uses the auxiliary algorithm $\Simsketch$ of the linear sketch $\ProtLinS$ (\cref{def:linear_sketch}) with input $\sketch$ and $e_i$.
Specifically, when $\A$ makes the $i$-th signing query (for $i \in [Q]$) on message $\msg_i$, the challenger generates a signature ${\FSigsig}_i$ as follows: (where the difference from $\game_1$ is \underline{underlined}.)
{\small
	\begin{align*}
		\big[ &e_i \samp \ErrDist;~ \underline{(\tsketch_i, \shifta_i) \lruns \Simsketch(\LinSpp, \sketch, e_i);~ \ta_i \gets a + \shifta_i};~ r_i \gets \ZZ_p;\\
&~ \beta_i \gets \hash( g^{\ta_i}, g^{r_i}, \msg_i );~ z_i  \gets \beta_i \cdot \ta_i + r_i ;~ ;~  {\FSigsig}_i = (\beta_i, z_i, \tsketch_i) \big]. 
	\end{align*}
}
By the linearity of the linear sketch scheme $\ProtLinS$, the distribution of $\tuple{\tsketch_i}_{i \in [Q]}$ generated in $\game_1$ and $\game_2$ are identical.
Therefore, we have $\Pr[\Event_1] = \Pr[\Event_2]$.

\noindent -\ $\game_3$: In this game, we further modify how the signing queries are answered by the challenger. 
In the previous game, after $(g^{\ta_i}, g^{r_i}, \msg_i)$ were set, the challenger checked whether the random oracle $\hash$ was set on that point.
If not, it sampled a random $\beta_i \gets \ZZ_p$ and set the random oracle as $\hash(g^{\ta_i}, g^{r_i}, \msg_i) := \beta_i$.
Otherwise, it outputs the already programmed output. 
In this game, the challenger will abort the game when the input was already programmed. 
Since the random oracle is ever programmed on at most $(Q + Q_\hash)$ inputs, and $r_i$ is randomly sampled from $\ZZ_p$, the probability of an abort occurring on any of the signing query can be upper bounded by $Q \cdot (Q + Q_\hash) / p$. Hence, $\abs{ \Pr[\Event_2] - \Pr[\Event_3]} \le Q \cdot (Q + Q_\hash)  / p$. 

\noindent -\ $\game_4$: In this game, we make a final modification on how the signing queries are answered by the challenger. 
In particular, we alter the signing procedure so that the challenger no longer requires the secret key $a$ to sign; instead it will indirectly use the public key $h = g^a$. 
Conditioning on an abort not occurring, the challenger performs the following: (where the difference from $\game_3$ is \underline{underlined}.)
{\small
	\begin{align}
		\big[ e_i & \samp \ErrDist;~ (\tsketch_i, \shifta_i) \lruns \Simsketch(\LinSpp, \sketch, e_i);~  \beta_i \gets \bin^{2\secpar};~ \nonumber \\
	 &   \underline{z_i \gets \ZZ_p;~R_i \gets g^{z_i} \cdot (h \cdot g^{\shifta_i})^{- \beta_i};~ \hash( h \cdot g^{\shifta_i}, R_i, \msg_i ) := \beta_i};~\nonumber \\	
	 &\hspace {5.1cm} {\FSigsig}_i = (\beta_i, z_i, \tsketch_i) \big]. \label{eq:sim_sign_dl}
	\end{align}
}
The only difference from the previous game is the order of which $r_i$ and $z_i$ are constructed. 
In the previous game, a uniform random $r_i \gets \ZZ_p$ was sampled and then $z_i$ was set as $\beta_i \cdot \ta_i + r_i = \beta_i \cdot (a + \shifta_i) + r_i$. 
However, in this game, a uniform random $z_i \gets \ZZ_p$ is sampled and then $r_i$ is implicitly set to $z_i -  \beta_i \cdot (a + \shifta_i) $. We say \lq\lq implicitly'' since the challenger actually only computes $r_i$ in the exponent, that is, $R_i = g^{r_i}$. 
Since the joint distribution of $(r_i, z_i)$ is identical in $\game_3$ and $\game_4$, we conclude $\Pr[\Event_3] = \Pr[\Event_4] $.

Summarizing thus far, we upper bound the advantage of $\A$ winning the $\eucma$ game as follows: 
{\small
\begin{align}
	\epsilon =	\Pr[\Event_1] &\le \sum_{i = 1}^{3} \abs{\Pr[\Event_i] - \Pr[\Event_{i + 1}]} + \Pr[\Event_4] \nonumber \\	
	& \le \Pr[\Event_4] + \frac{Q \cdot (Q + Q_\hash)}{p}.  \label{eq:eucma_dl}
\end{align}
}
Therefore, in order to conclude the proof, it suffices to show that $\epsilon_4 := \Pr[\Event_4]$ is negligible. 
Below, we show that an adversary $\A$ against $\game_4$ can be used to construct an adversary $\B$ against the $\DLsketch$ assumption. This is a direct consequence of the forking lemma \cite{JC:PoiSte00,CCS:BelNev06}. 
The description of $\B$ follows:

\smallskip
\noindent
-~$\B( \GGG, \LinSpp, h, \sketch ):$ Given a $\DLsketch$ instance, $\B$ simulates the $\game_4$-challenger to $\A$ by appropriately programming the random oracle. Note that $\B$ can answer all queries via \cref{eq:sim_keygen_dl,eq:sim_sign_dl}. 
If $\A$ outputs a valid forgery $(\msg^*, \FSigsig^* = (\beta^*, z^*, \tsketch^*))$, $\B$ then checks if it ever replied back to $\A$ with $\beta^*$ to a random oracle query of the form $( \tildeh, R, \msg^* )$. If not, $\B$ aborts. 
Otherwise, assume $\A$ queried $( \tildeh, R, \msg^* )$ to the random oracle as its $I^*$-th query, where $I^* \in [Q]$. 
$\B$ then reruns $\A$ on the same randomness tape and answers the random oracle queries identically to the previous run up until the $I^*$-th query and with fresh random outputs from the $I^*$-th query. 
Then, if $\A$ outputs another valid forgery $(\msg'^*, \FSigsig'^* = (\beta'^*, z'^*, \tsketch'^*) )$, $\B$ then checks if $\msg'^* = \msg^*$, $\beta'^* \neq \beta^*$, and that $\beta'^*$ is the output of the $I^*$-th random oracle query. If not $\B$ aborts. 
Otherwise, $\B$ outputs $\frac{z^* - z'^*}{\beta^* - \beta'^*} - \Delta a$ as the solution to the $\DLsketch$ problem and terminates, where $\Delta a \gets \DiffRec(\LinSpp, \sketch, \tsketch^*)$

Let us analyze algorithm $\B$. 
It is easy to see that the first run simulates the view of $\game_4$ perfectly to $\A$. 
Therefore, standard argument using the forking lemma \cite{JC:PoiSte00,CCS:BelNev06} tells us that $\B$ outputs something (i.e., will not abort) with probability $\epsilon_4 \cdot( \frac{\epsilon_4}{Q} - \frac{1}{p} )$ and runs in time about twice as $\A$. 
Next, we show that condition on $\B$ outputting something, it solves the $\DLsketch$ problem with probability 1. 
Observe that since the two runs are identical up till the point $\A$ makes the $I^*$-th random oracle query, we must have that $\A$ queried $( \tildeh, R, \msg^* )$ to the random oracle as its $I^*$-th query in the second run as well. 
Due to validity of the forgery in the two runs, we have $\tildeh = h \cdot g^{\Delta a}$, $R = g^{z^*} \cdot \tildeh^{-\beta^*}$, and  $R = g^{z'^*} \cdot \tildeh^{-\beta'^*}$.
Simple calculation shows that $\mathrm{dlog}_g(h) = \frac{z^* - z'^*}{\beta^* - \beta'^*} - \Delta a$. 
Therefore, $\B$ correctly solves the $\DLsketch$ problem.

The above shows
\begin{align*}
	\epsilon_4 = \Pr[\Event_4] \le \sqrt{ Q \cdot \left(\epsilon_{\DLsketch}(\secpar) + \frac{1}{p} \right)},
\end{align*}
where $\epsilon_{\DLsketch}(\secpar)$ is the  maximum advantage of a PPT adversary against the $\DLsketch$ problem. 
Combining this with \cref{eq:eucma_dl} completes the proof of \cref{thm:dl_eucma}.
\end{proof}


\section{Further Details on Triangular Lattices} \label{app_sec:cvp_triangular_lattice}

We introduce the formal definition of triangular lattices here:
Let $d$ be any positive real and let $\triB = [\vb_1,\dots,\vb_n]$ be a basis matrix such that
\begin{enumerate}
\item $\| \vb_i\|_2 = d$ for all $i \in [n]$, and
\item $\vb_i \cdot \vb_j = d^2/2$ for all $i,j \in [n]$ with $i \ne j$, where \lq\lq $\cdot$'' denotes the inner product.
\end{enumerate}
We call $\triLattice = \Lattice(\triB)$ the triangular lattice (with basis length $d$). 
Notice that changing the $d$ has the effect of changing the size of the acceptance region $\AR$. That is, using a larger $d$ results in a larger $\AR$. 

A triangular lattice enjoys the property that
for any $\vx \in \RR^n$, we can calculate $\CV_{\triLattice}(\vx)$ efficiently in terms of the dimension $n$.
Concretely, its computational cost is $O(n^2)$.
Since the following description of the closest vector algorithm is invariant to the choice of $d$, we assume $d = 1$. 
The details follow.

For simplicity, we use the representation with respect to $\triLattice$ for the target vector $\vx$.
If an input vector is with respect to the standard basis, then it can be converted to one with the representation with respect to $\triLattice$ by multiplying with $\mB^{-1}$.
Let $\vx = x_1\vb_1 + \dots + x_n \vb_n$ (where $x_i \in \RR$ for each $i \in [n]$) be the target vector for which we would like to compute the closest vector $\vy = \CV_{\triLattice}(\vx)$.
For simplicity, we first explain the case $x_i \in [0,1)$ for each $i \in [n]$, and later explain how to extend it to the general case ($x_i \in \RR$).
In this case, $\vy = \CV_{\triLattice}(\vx)$ can be written as $\vy = y_1 \vb_1 + \dots + y_n \vb_n$ with $y_i \in \{0,1\}$ for each $i \in [n]$.

Due to the property of the triangular lattice, for all $i,j \in [n]$, we have the following properties:
\begin{align*}
x_i \le x_j &\Rightarrow (y_i, y_j) \in \{(0,0), (0,1), (1,1)\},\\
x_i \ge x_j &\Rightarrow (y_i, y_j) \in \{(0,0), (1,0), (1,1)\}.
\end{align*}
In other words, the magnitude relation among the coordinates $\{x_i\}_{i \in [n]}$ of the target vector and that among the coordinates $\{y_i\}_{i \in [n]}$ of the closest vector are synchronized.
Hence, the above two relations can be equivalently written as
\begin{align*}
x_i \le x_j &\Rightarrow y_i \le y_j,\\
x_i \ge x_j &\Rightarrow y_i \ge y_j.
\end{align*}

Using this fact, we consider the sorting of $\{(x_i, y_i)\}_{i \in [n]}$ in ascending order by using $\{x_i\}_{i \in [n]}$ as the sorting key.
Let $(x^*_i, y^*_i)_{i \in [n]}$ be the result of the sorting.
Then, due to the above relations, we have
\begin{equation}
y^*_1 \le y^*_2 \le \dots \le y^*_n. \label{eqn:tri}
\end{equation}
Since we are considering the case that $y_i \in \{0,1\}$ for each $i \in [n]$, the sequence $(y^*_i)_{i \in [n]}$ has the property that there exists an index $k \in \{0, 1,\dots,n\}$ such that $y^*_1 = \dots = y^*_k = 0$ and $y^*_{k+1} = \dots = y^*_n = 1$.
There are $n+1$ candidates for $k$.
Hence, by computing the distance between the target vector $\vx$ and $n + 1$ vectors satisfying \cref{eqn:tri}, we can compute the closest vector $\vy$.

The concrete procedure for computing the closest vector $\vy = \CV_{\triLattice}(\vx) = \sum_{i \in [n]} y_i \vb_i$ (with $y_i \in \{0,1\}$) from a target vector $\vx = \sum_{i \in [n]} x_i \vb_i$ (with $x_i \in [0,1)$) is as follows:
\begin{enumerate}
\item Sort $\{x_i\}_{i \in [n]}$ in ascending order.
Let $(x'_1,\dots,x'_n)$ be the result of the sorting, and let $\sigma: [n] \to [n]$ be the permutation representing this sorting. 
Namely, we have $x'_i = x_{\sigma(i)} \Leftrightarrow x_i = x'_{\sigma^{-1}(i)}$ for each $i \in [n]$.
\item For each $k \in \{0,1,\dots,n\}$, let $\vy_k = y_{k,1} \vb_1 + \dots + y_{k,n} \vb_n)$ be the vector satisfying $y_{k,\sigma(j)} = 0$ for $j \le k$ and $y_{j, \sigma(j)} = 1$ for $j > k$.
Note that $\{\vy_0, \vy_1, \dots, \vy_k\}$ is the set of candidates of the closest vector $\CV_{\triLattice}(\vx)$.
\item Compute $\|\vx - \vy_k\|_2$ for each $k \in \{0,1,\dots,n\}$, and find the index $k^* \in \{0,1,\dots,n\}$ of the smallest vector such that $\vy_{k^*} = \min_k \|\vx - \vy_k\|_2$.
\item Output $\vy_{\min} := \vy_{k^*}$ as the closest vector $\CV_{\triLattice}(\vx)$ of $\vx$.
\end{enumerate}

The computational cost of the above procedure in terms of $n$ can be estimated as follows:
The sorting in Step 1 costs $O(n \log n)$.
The calculation of $\|\vx - \vy_k\|_2$ in Step 3 for each $k \in \{0,1,\dots,n\}$ costs $O(n)$.
Since we calculate the distance $n+1$ times, Step 3 costs in total $O(n^2)$.
Hence, in total we can calculate $\CV_{\triLattice}(\vx)$ with computational cost $O(n^2)$.

The above algorithm can be extended to cover the general case where $\vx = \sum_{i \in [n]} x_i \vb_i$ with $x_i \in \RR$ for each $i \in [n]$.
Specifically,
before executing the above algorithm, we decompose each $x_i$ as $x_i = z_i + x'_i$ where $z_i \in \ZZ$ and $x'_i \in [0,1)$.
We then apply the above algorithm to $\{x'_i\}_{i \in [n]}$.
Let $\vy_{\min}$ be the result.
Then, the closest vector $\CV_{\triLattice}(\vx)$ of $\vx$ is $\vy_{\min} + \sum_{i \in [n]} z_i \vb_i$.
It is easy to see the correctness of this algorithm, and that the asymptotic computational cost in terms of $n$ remains the same.


\section{Composing Multiple Fuzzy Key Settings} \label{app_sec:composing_fks}
Our formalization of a lattice-based fuzzy key setting and the linear sketch scheme in \cref{sec:linear_sketch} can easily be adapted to handle a \lq\lq composed'' fuzzy key setting and associated linear sketch scheme.

Specifically, suppose we have $m$ kinds of fuzzy data, and for $i \in [m]$, let $\FuzKeySet_i = (\FuzDataSpace_i, \FuzDataDist_i, \AR_i, \ErrDist_i, \ErrFNMR_i)$ be a lattice-based fuzzy key setting for the $i$-th fuzzy data, where the $i$-th fuzzy data space $\FuzDataSpace_i = \RR^{n_i}$ is associated with a lattice with the basis matrix $\mB_i \in \RR^{n_i \times n_i}$.
For simplicity, assume that the parameter $p$ is common for all of $\{\FuzKeySet_i\}_{i \in [m]}$.
Then, we can consider the composed fuzzy key setting $\FuzKeySet^* = (\FuzDataSpace^*, \FuzDataDist^*, \AR^*, \ErrDist^*, \ErrFNMR^*)$ that is a natural combination of the fuzzy key settings $\{\FuzKeySet_i\}_{i \in [m]}$:
The fuzzy data space $\FuzDataSpace^*$ is the direct product $\prod_{i \in [m]} \FuzDataSpace_i$ for which the lattice $\mB^*$ of the following form is associated:
\[
\mB^* = \left[
\begin{array}{cccc}
\mB_1 &       &       &      \\
      & \mB_2 &       &      \\
      &       & \ddots &      \\
      &       &       & \mB_m\\
\end{array}
\right];
\]
The fuzzy data distribution $\FuzDataDist^*$ is the joint distribution $(\FuzDataDist_1,\dots,\FuzDataDist_m)$, and the same for the error distribution $\ErrDist^*$;
The acceptance region function $\AR^*$ has the property that for $\vx^* = (\vx_1,\dots,\vx_m), \vx'^{*} = (\vx_1', \dots, \vx'_m) \in \FuzDataSpace^*$, we have $\vx'^* \in \AR^*(\vx^*)$ if and only if $\vx'_i \in \AR_i(\vx_i)$ for all $i \in [m]$;
The error parameter $\ErrFNMR^*$ can be upper-bounded by $\sum_{i \in [m]} \ErrFNMR_i$ by the union bound.

Furthermore, the algorithms of the linear sketch scheme for the composed fuzzy key setting $\FuzKeySet^*$ can be computed by computing those for the linear sketch scheme for each fuzzy key setting $\FuzKeySet_i$ with which $\mB_i$ is associated, and concatenate the results, except for the proxy key $a$ in $\Sketch$ and the difference $\Delta a$ in $\DiffRec$.
For $a$ in $\Sketch$ and $\Delta a$ in $\DiffRec$, we need an application of the universal hash function $\linhash$ for the combined linear sketch scheme whose domain is $(\ZZ_p)^{\sum_{i \in [n_i]}n_i}$ and which takes the concatenated results as input.


\section{Further Experimental Results} \label{app:sec:experimental_result}
This section provides details on experimental results that were omitted in the main body.

\subsection{DET curve for dimension $n = 400$}\label{app:sec:DET}
\cref{fig:DET_400} plots the detection error tradeoff (DET) curve for fuzzy biometrics with dimension $n = 400$. 
Any region that lies above the DET curve is realizable.
It can be checked that for a small (resp. large) size of $d$ (hence the acceptance region $\AR$), we achieve better values for $\tFNMR$ (resp. \tCFMR) as expected.

\begin{figure}[htbp]
  \begin{center}
	\includegraphics[ width=0.3\textwidth ]{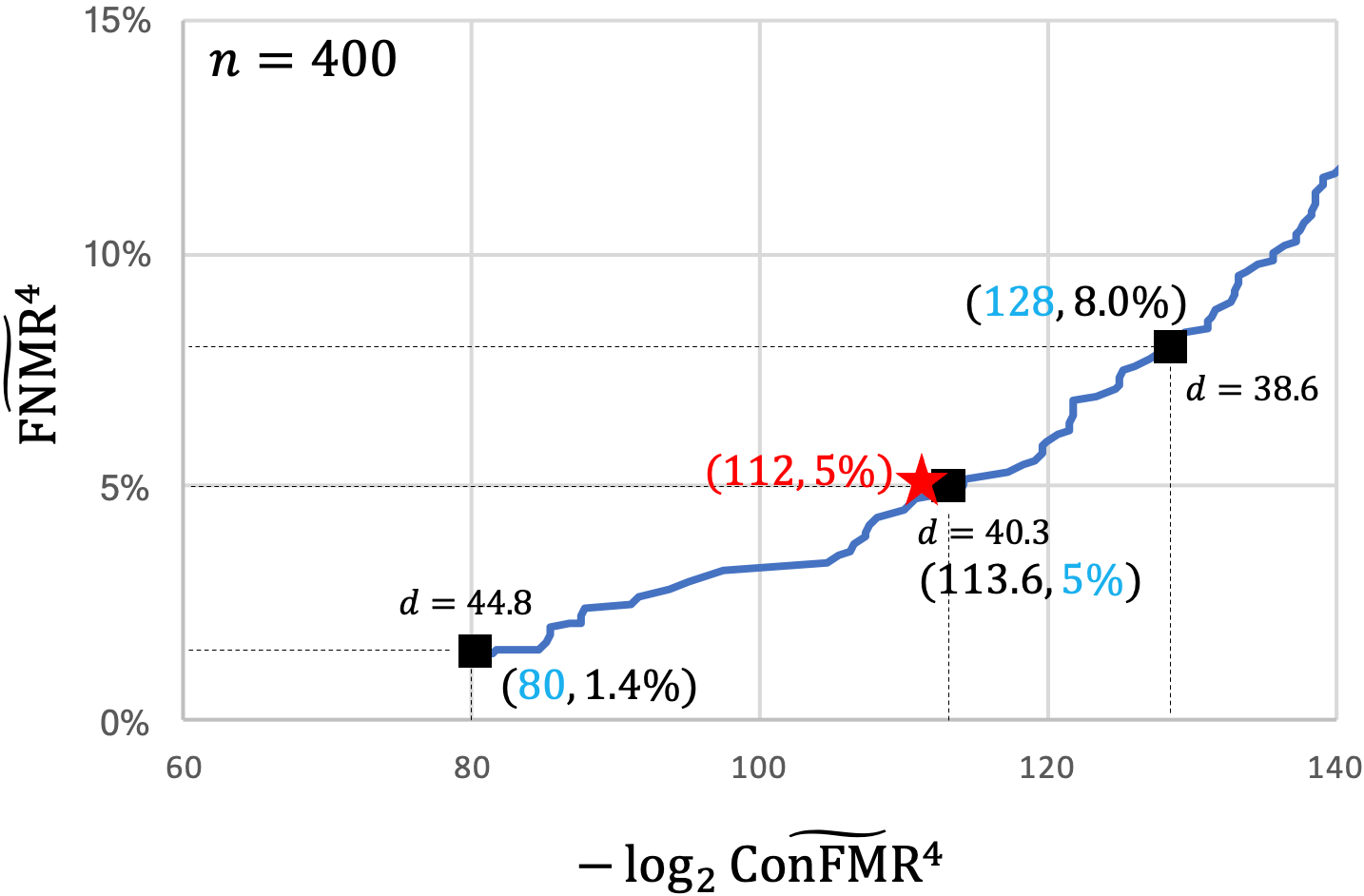}
  \end{center}
 \caption{\small DET Curve for $n = 400$.  }
 \label{fig:DET_400}
\end{figure}

\subsection{Validity of EVA result} \label{app:sec:validity_EVA}
We provide the omitted graphical evidence on the validity of our estimation of $\tFMR$ and $\tCFMR_t$ for $t \in [M]$. 
Recall that we set $k^* = 0.1\% \times k_{\max}$, where $k_{\max} = 3,925,584$ to estimate $\tFMR$ using EVA. 
We also set $M = 100$ to define the $M$-variant of $\set{\CFMR_t}_{t \in [M]}$ and performed EVA to estimate each $\tCFMR_t$ by setting $k^* = 0.5 \% \times k_{\max}$. 
\cref{fig:validitiy_EVA_appendix} illustrates the validity of our estimation for $\tFMR$ when the dimension $n \in \set{200, 300, 400}$ and $\tCFMR_t$ for $t \in \set{25, 50, 100}$.

\begin{figure*}[htbp]
 \begin{minipage}{0.245\hsize}
  \begin{center}
	\includegraphics[ width=1.05\textwidth ]{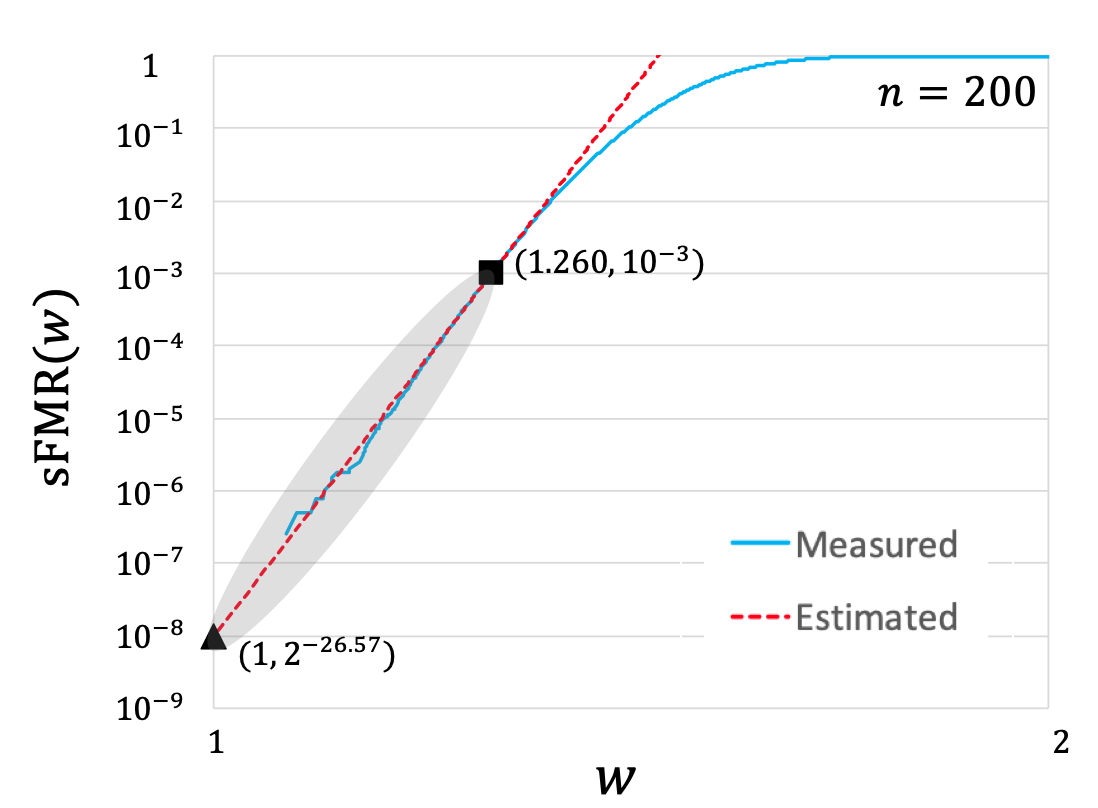}
  \end{center}
  \subcaption{$\sFMR(w)$ for $n = 200$}
 \end{minipage}
 \begin{minipage}{0.245\hsize}
  \begin{center}
	\includegraphics[ width=1.05\textwidth ]{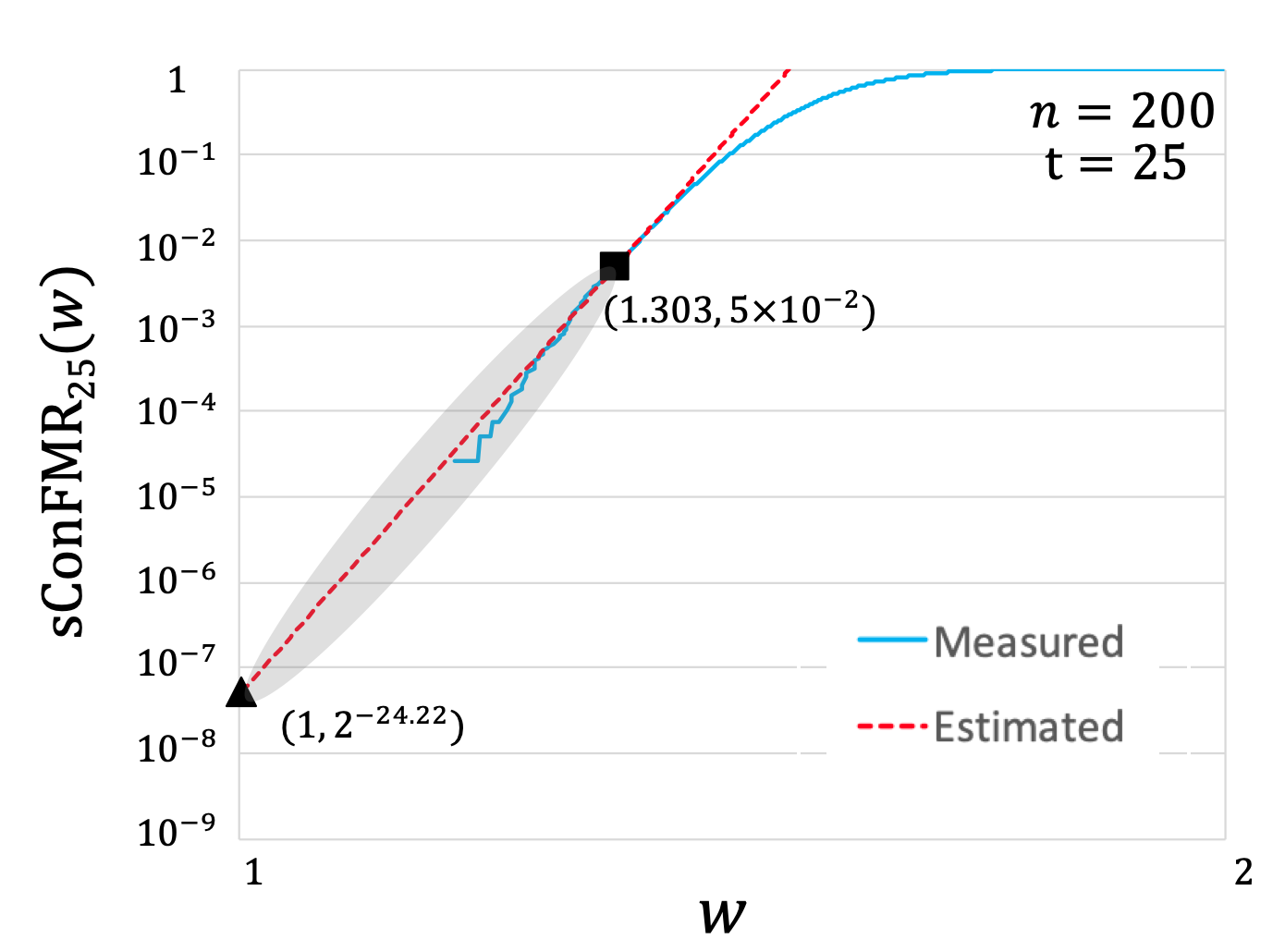}
  \end{center}
  \subcaption{$\sCFMR_{25}(w)$ for $n = 200$}
 \end{minipage}
  \begin{minipage}{0.245\hsize}
   \begin{center}
	\includegraphics[ width=1.05\textwidth ]{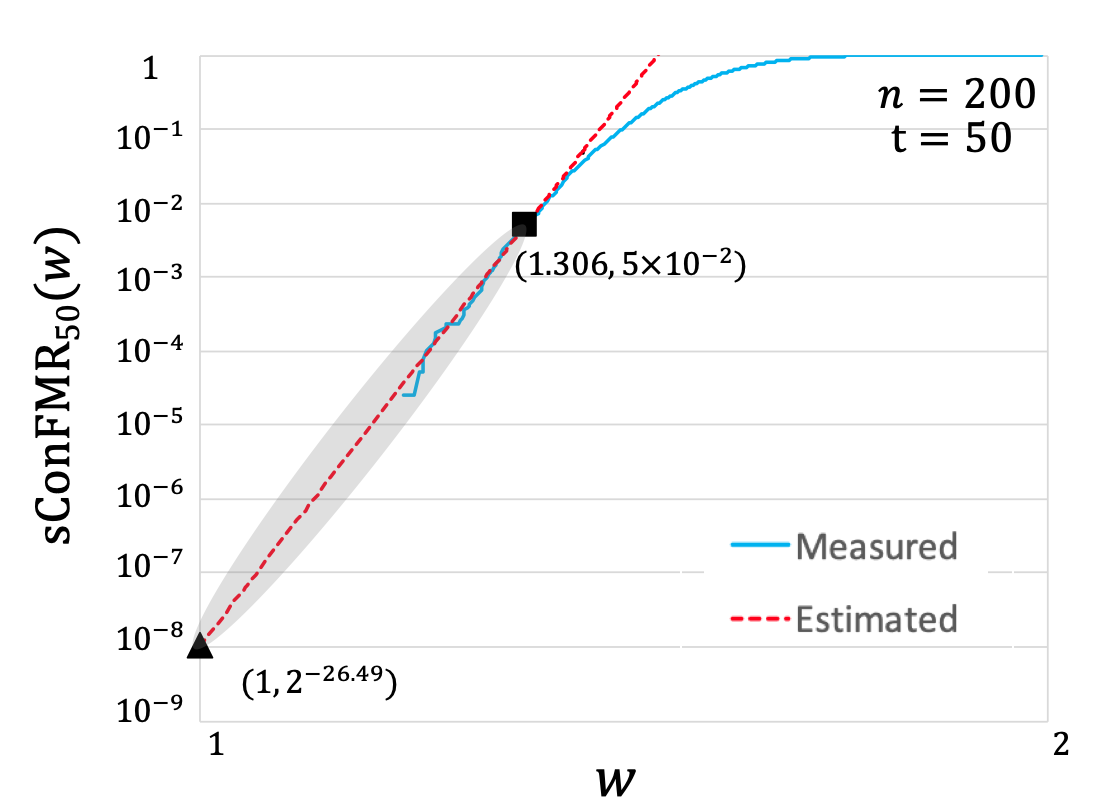}
  \end{center}
  \subcaption{$\sCFMR_{50}$ for $n = 200$}
 \end{minipage}
   \begin{minipage}{0.245\hsize}
   \begin{center}
	\includegraphics[ width=1.05\textwidth ]{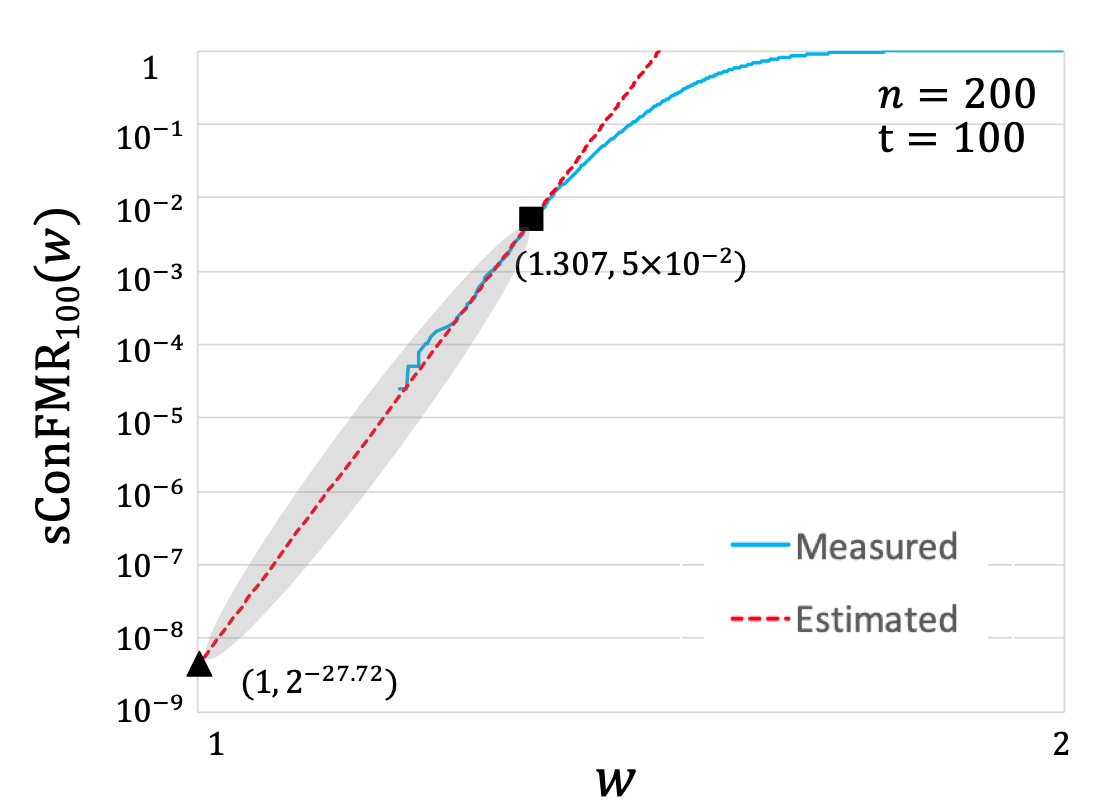}
  \end{center}
  \subcaption{$\sCFMR_{100}$ for $n = 200$}
 \end{minipage}

 \begin{minipage}{0.245\hsize}
  \begin{center}
	\includegraphics[ width=1.05\textwidth ]{Input_Figure/fmr_300.png}
  \end{center}
  \subcaption{$\sFMR(w)$ for $n = 300$}
 \end{minipage}
 \begin{minipage}{0.245\hsize}
  \begin{center}
	\includegraphics[ width=1.05\textwidth ]{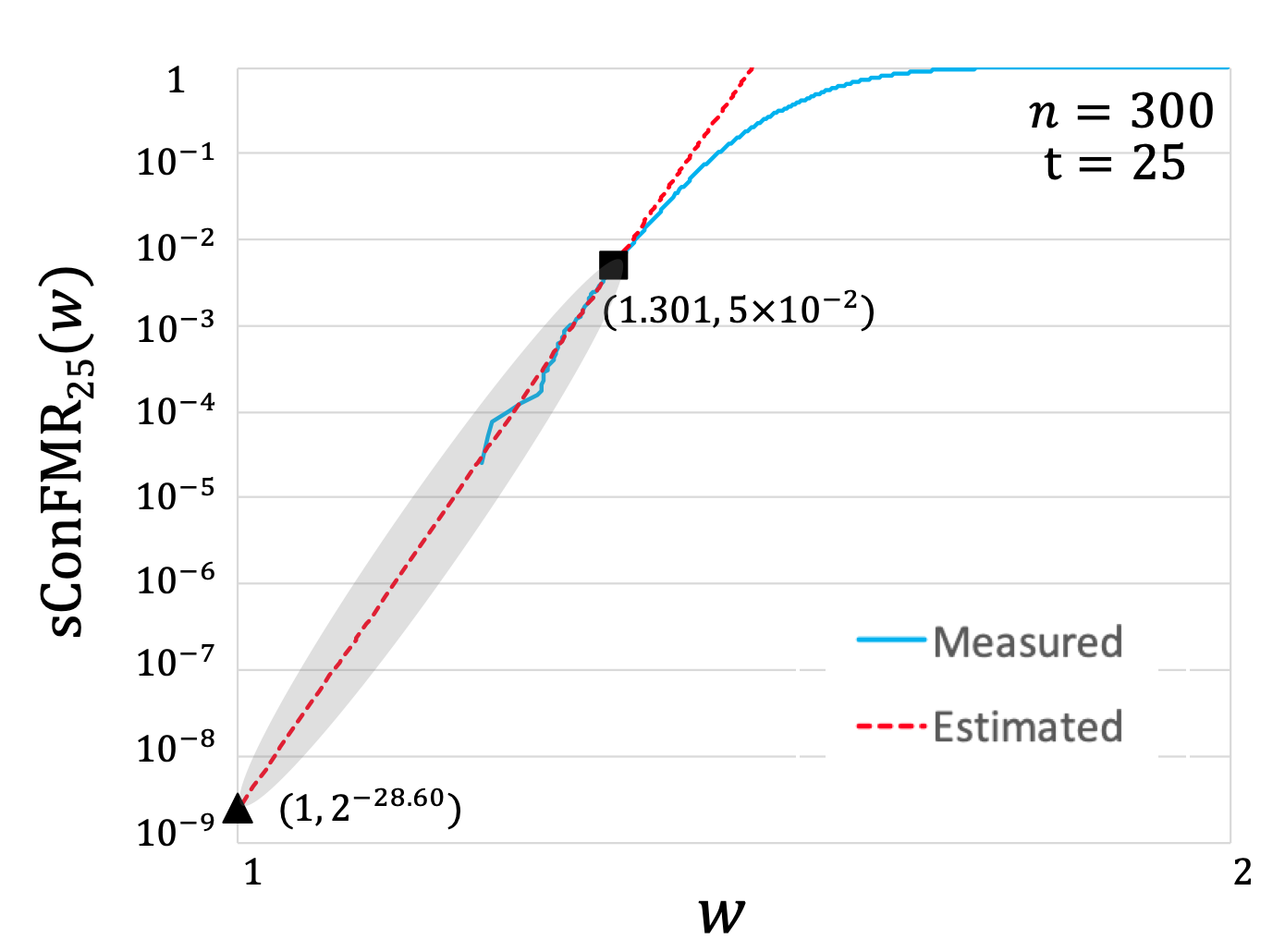}
  \end{center}
  \subcaption{$\sCFMR_{25}(w)$ for $n = 300$}
 \end{minipage}
  \begin{minipage}{0.245\hsize}
   \begin{center}
	\includegraphics[ width=1.05\textwidth ]{Input_Figure/cfmr_300_50.png}
  \end{center}
  \subcaption{$\sCFMR_{50}$ for $n = 300$}
 \end{minipage}
   \begin{minipage}{0.245\hsize}
   \begin{center}
	\includegraphics[ width=1.05\textwidth ]{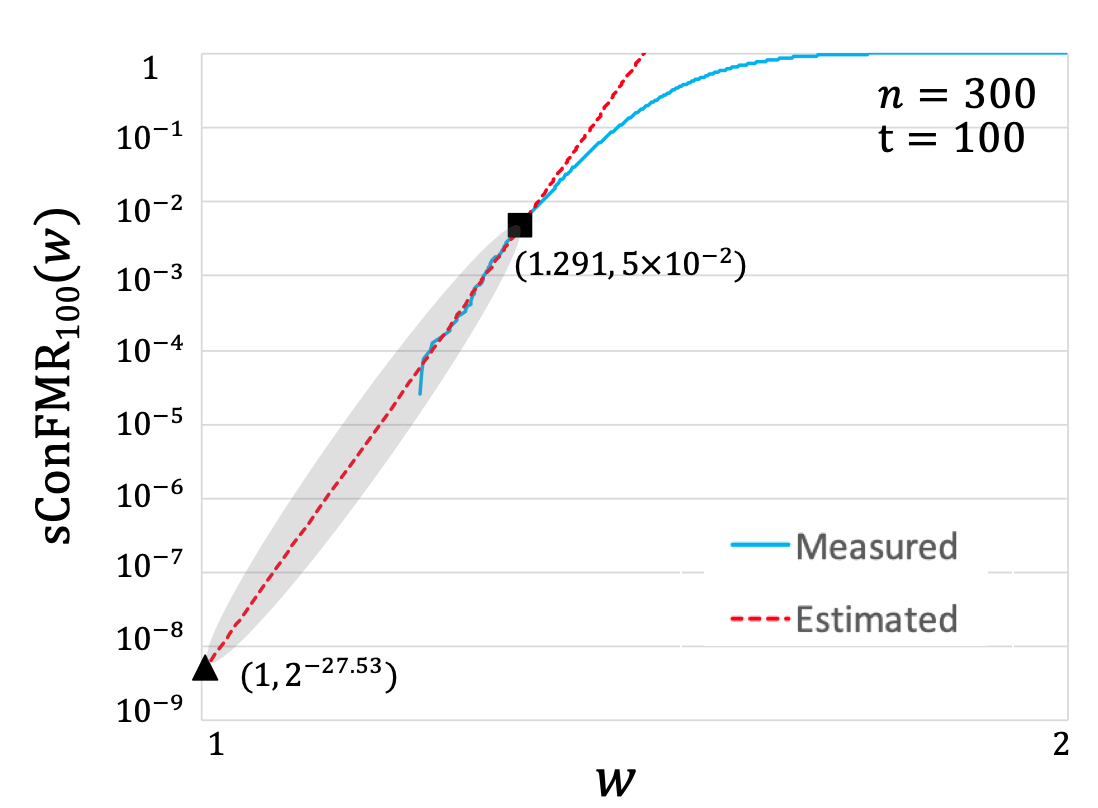}
  \end{center}
  \subcaption{$\sCFMR_{100}$ for $n = 300$}
 \end{minipage}

 \begin{minipage}{0.245\hsize}
  \begin{center}
	\includegraphics[ width=1.05\textwidth ]{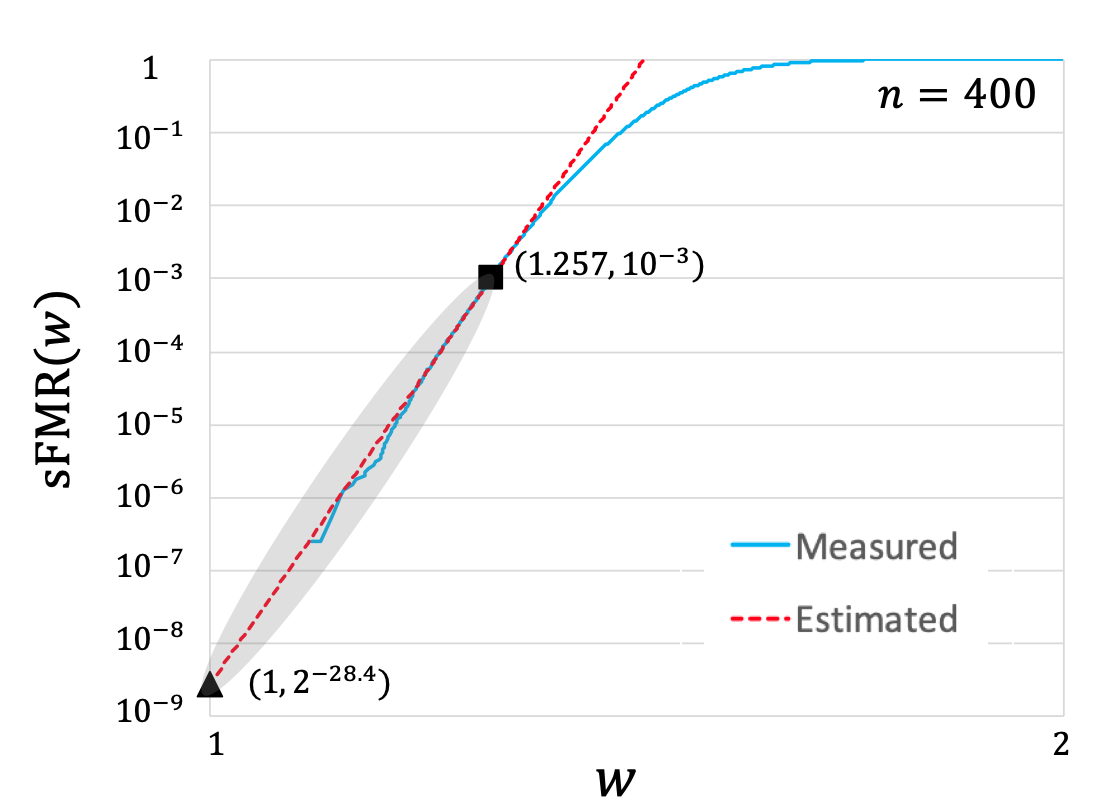}
  \end{center}
  \subcaption{$\sFMR(w)$ for $n = 400$}
 \end{minipage}
 \begin{minipage}{0.245\hsize}
  \begin{center}
	\includegraphics[ width=1.05\textwidth ]{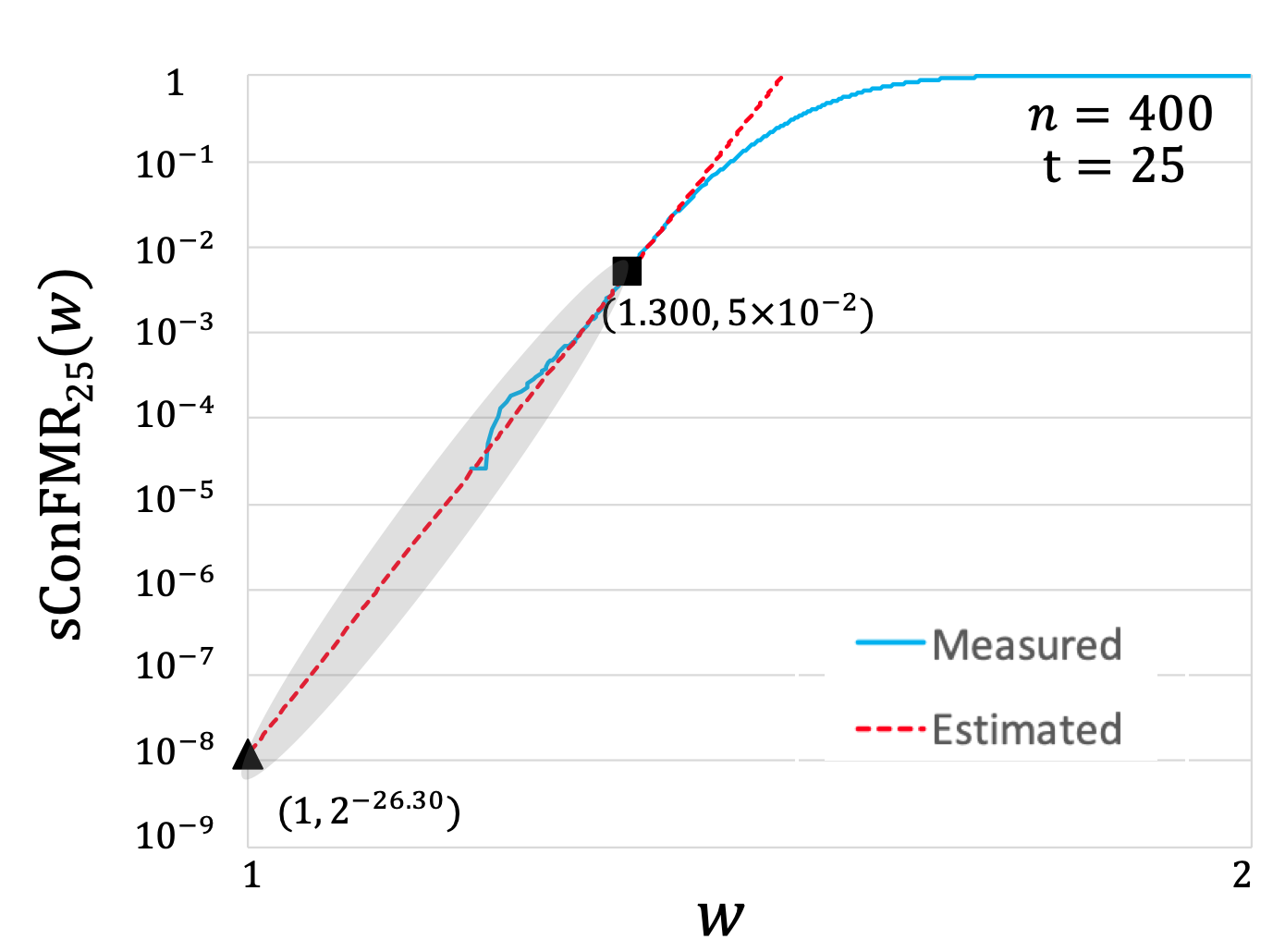}
  \end{center}
  \subcaption{$\sCFMR_{25}(w)$ for $n = 400$}
 \end{minipage}
  \begin{minipage}{0.245\hsize}
   \begin{center}
	\includegraphics[ width=1.05\textwidth ]{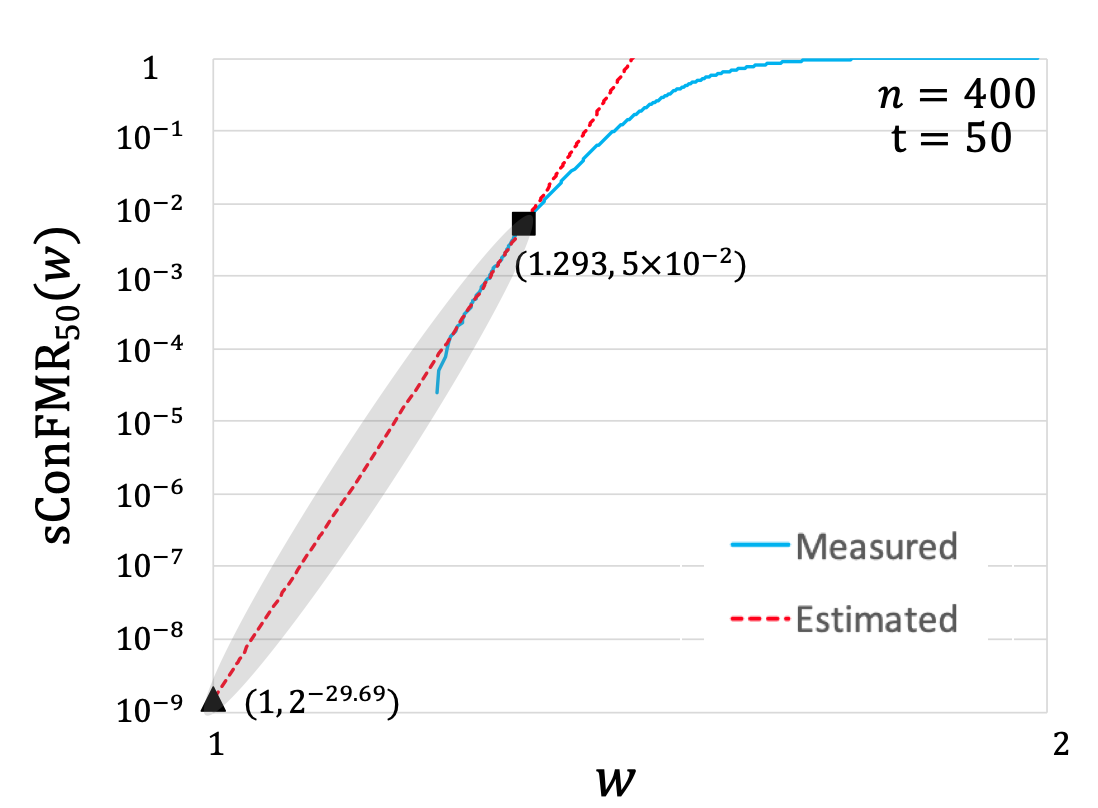}
  \end{center}
  \subcaption{$\sCFMR_{50}$ for $n = 400$}
 \end{minipage}
   \begin{minipage}{0.245\hsize}
   \begin{center}
	\includegraphics[ width=1.05\textwidth ]{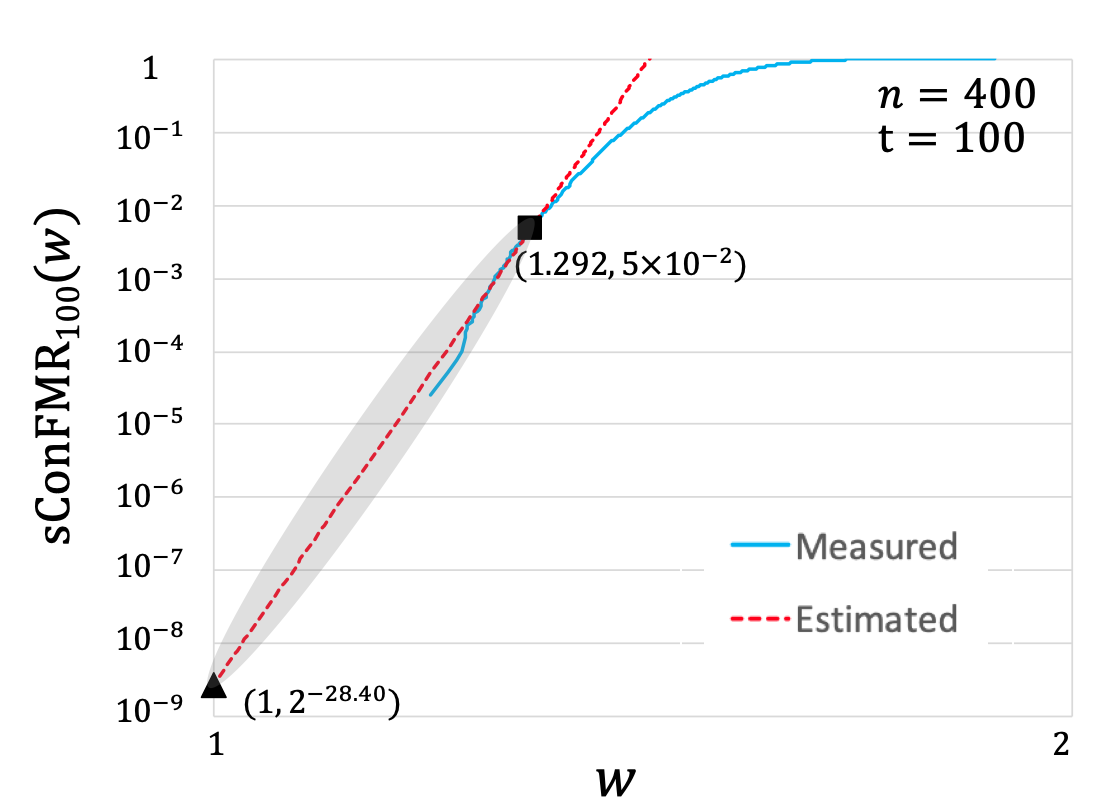}
  \end{center}
  \subcaption{$\sCFMR_{100}$ for $n = 400$}
 \end{minipage}
 
 \caption{\small The blue line indicates the measured values of $\sFMR(w)$ and $\sCFMR_{50}(w)$ w.r.t to our dataset $S$. The red line indicates our estimation of the probability distribution of  $\sFMR(w)$ and $\sCFMR_{t}(w)$ for $t \in \set{25, 50, 100}$ via EVA. The gray region is the region for which EVA provides a reliable estimation. The square plots $(w^*, k^*)$ and the triangle plots $(1, X)$, where $X$ is the estimation for $\tFMR$ and $\tCFMR_{t}$ for $t \in \set{25, 50, 100}$.  }
 \label{fig:validitiy_EVA_appendix}
\end{figure*}

\newpage
\setcounter{tocdepth}{2}
\tableofcontents

\end{document}